\theoremstyle{plain}
\newtheorem{thm}{Theorem}[section]
\newtheorem{lemma}[thm]{Lemma}
\newtheorem{cor}[thm]{Corollary}
\newtheorem{prop}[thm]{Proposition}
\theoremstyle{definition}
\newtheorem{defn}[thm]{Definition}
\newtheorem{example}[thm]{Example}
\newtheorem{remark}[thm]{Remark}
\newcommand\C{\mathcal{C}}
\newcommand\NN{\mathbb{N}}
\newcommand\U{\mathcal{U}}
\newcommand{\Ty}{\mathtt{Ty}}
\newcommand{\Tm}{\mathtt{Tm}}
\newcommand{\pair}{\operatorname{\mathsf{pair}}}
\newcommand{\inl}{\operatorname{\mathsf{inl}}}
\newcommand{\inr}{\operatorname{\mathsf{inr}}}
\newcommand{\zero}{\operatorname{\mathsf{0}}}
\newcommand{\UA}{\operatorname{\mathsf{UA}}}
\newcommand{\funext}{\operatorname{\mathsf{funext}}}
\newcommand{\UIP}{\operatorname{\mathsf{UIP}}}
\newcommand{\suc}{\operatorname{\mathsf{succ}}}
\newcommand{\refl}{\operatorname{\mathsf{refl}}}
\newcommand{\app}{\operatorname{\mathsf{app}}}
\newcommand{\isFibrant}{\operatorname{\mathsf{isFibrant}}}
\newcommand{\id}{\operatorname{\mathsf{id}}}
\newcommand{\tr}{\operatorname{\mathsf{tr}}}
\newcommand{\ap}{\operatorname{\mathsf{ap}}}
\newcommand{\happly}{\operatorname{\mathsf{happly}}}
\newcommand{\unit}{\operatorname{\mathbf{1}}}
\newcommand{\emptype}{\operatorname{\mathbf{0}}}
\newcommand{\exotoid}{\operatorname{\mathsf{eqtoid}}}
\newcommand{\Id}{\operatorname{\mathsf{Id}}}
\newcommand{\Eq}{\operatorname{\mathsf{Eq}}}
\newcommand{\exoeq}{\operatorname{\mathsf{=}^\textit{e}}}
\newcommand{\ecirc}{\operatorname{\mathsf{\circ}^\textit{e}}}
\newcommand{\List}{\operatorname{\mathsf{List}}}
\newcommand{\BinTree}{\operatorname{\mathsf{BinTree}}}
\newcommand{\Parens}{\operatorname{\mathsf{Parens}}}
\newcommand{\UnLBinTree}{\operatorname{\mathsf{UnLBinTree}}}
\newcommand{\nil}{\operatorname{\mathsf{[]}}}
\newcommand{\cons}{\operatorname{\mathsf{\coloncolon}}}
\newcommand{\leaf}{\operatorname{\mathsf{leaf}}}
\newcommand{\uleaf}{\operatorname{\mathsf{u-leaf}}}
\newcommand{\unode}{\operatorname{\mathsf{u-node}}}
\newcommand{\node}{\operatorname{\mathsf{node}}}
\newcommand{\popen}{\operatorname{\mathsf{popen}}}
\newcommand{\pclose}{\operatorname{\mathsf{pclose}}}
\newcommand{\isbalanced}{\operatorname{\mathsf{isbalanced}}}
\newcommand{\iscontr}{\operatorname{\mathsf{isContr}}}
\newcommand{\Balanced}{\operatorname{\mathsf{Balanced}}}
\newcommand{\coer}{\mathcal{c}}
\newcommand{\Set}{\operatorname{\mathsf{Set}}}
\newcommand{\SSet}{\operatorname{\mathsf{SSet}}}
\newcommand{\op}{\operatorname{\text{op}}}
\title{Formalizing two-level type theory \\ with cofibrant exo-nat}
\author{Elif Uskuplu}
\date{}
\begin{document}

\maketitle

\begin{abstract}
   This study provides some results about two-level type-theoretic notions in a way that the proofs are fully formalizable in a proof assistant implementing two-level type theory such as Agda. The difference from prior works is that these proofs do not assume any abuse of notation, providing us with more direct formalization. Moreover, some new notions, such as function extensionality for cofibrant types, are introduced. The necessity of such notions arises during the task of formalization. In addition, we provide some novel results about inductive types using cofibrant exo-nat, the natural number type at the non-fibrant level. While emphasizing the necessity of this axiom by citing new applications as justifications, we also touch upon the semantic aspect of the theory by presenting various models that satisfy this axiom.
   
   \textbf{Keywords.} two-level type theory, homotopy type theory, proof assistant, Agda, category with families. 
\end{abstract}

\tableofcontents

\section{Introduction}

Formalizing mathematics has a newer and more vivid meaning than it had before. It stands for the task of representing mathematical knowledge in digital form. \emph{Formalization} in the new sense is close to \emph{implementation of a theory} or \emph{verification of a knowledge}. The basic tools for such a task are the proof assistant programs like Lean, Coq, Agda, and more \cite{PAlist}. In order to formalize a piece of mathematics, we should provide equational reasoning and basic definitions of the mentioned theory to our favorite proof assistant, and then we can make formal derivations relying on the previous building blocks. These derivations might be the digitized version of the knowledge that we proved before by hand. In this sense, the task is a kind of verification of the human knowledge done on paper. Moreover, we can see the implementation aspect of the formalization in the proof assistant itself. For example, proof assistants such as Lean, Coq, and Agda implement dependent type theory \cite{proofas} while Mizar implements Tarski–Grothendieck set theory \cite{mizar}.

As proof assistants become more interactive than before, it is possible to learn more from the formalization process itself. For many mathematicians, using \emph{abuse of notation} is a natural approach, and it indeed has some benefits. However, when we start to make mathematics precise in a proof assistant, this approach is not allowed. This is one of the significant differences between mathematics on paper and on computer. Therefore, during the formalization task, it is very likely to explore the gaps due to abuse of formal language. This is an excellent example of what we can learn from proof assistants. From the author's perspective, it is fair to say that as we teach computers to be clever at mathematics, they teach us to be \emph{more clever}. This short study is an experience of the interaction between a theory discovered by people and its formalization. 

The theory subject to the experience is the Homotopy type theory (HoTT). It is a new foundational theory for mathematics. It relies on the intuitionistic type theory with a homotopical interpretation. This theory is also known as the Univalent Foundation due to its essential part, the Univalence axiom. It roughly says equivalent mathematical objects are equal. However, it lacks modeling of some structures (e.g., semisimplical types). Thus, there are some efforts to extend HoTT. Two-level type theory (2LTT) is one of these extensions\footnote{While we provide basics of 2LTT, we mainly refer to \cite{hott} for HoTT and to \cite{2ltt} for 2LTT.}. Briefly saying, 2LTT has two levels; the base level is HoTT while the second level is a traditional form of type theory validating the uniqueness of identity proofs. One can think that the second level is the meta-theory of the first. Section \ref{synchap} and Section \ref{novel-sec} of this paper aim to analyze the bridge between the two levels in terms of formalized mathematics. 

Why do we care about this bridge, although it has already been analyzed on paper? As a reader who is now familiar with proof assistants might guess, the formalization task is not easy compared to the work done on paper. Some definitions need to be changed or adjusted to be applicable to the assistant. Even obvious derivations should be implemented to obtain precise proofs. In other words, there is no room for gaps. In the case of 2LTT, there is no well-accepted formalization for now. Recently, one of the proof assistants, Agda, has released some new features that allow us to work with 2LTT. Moreover, using these, we have developed an Agda library \cite{agdalib} about 2LTT and some of its applications. This was one of the first attempts to use these features of Agda. Although the initial goal was to formalize the content of the paper \emph{The Univalence Principle} \cite{UPpaper}, the basics of 2LTT had to be built first because the study in the mentioned paper is based on 2LTT. Within this experience, some modifications to the definitions and some additional tools were needed. One of our goals is to emphasize these changes and additions that make 2LTT applicable in Agda easily.

During our Agda project, we encountered situations where certain proofs required the formalization of a new auxiliary tool, which we refer to as \emph{function extensionality for cofibrant types}. Function extensionality is a fundamental property of dependent functions, asserting that two functions are equal if and only if they produce equal results for every input. The specific notion of equality may vary depending on different contexts and levels, but in the case of traditional function extensionality, the equality notion remains consistent both in the domain and the range. However, when dealing with cofibrant types, the situation is different. Here, the equality notions for the input terms and the output terms may differ. Therefore, in our study, we introduce a novel function extensionality property tailored for such cases, and we rigorously establish its validity. Furthermore, our project led us to uncover novel results related to certain inductive types, notably \emph{List} and \emph{Binary-Trees}, which had not been explored within the context of 2LTT before. What initially started as a foundation for another study has opened up exciting new directions for further research.

One of the original motivations for 2LTT was to define semisimplicial types. However, although plain 2LTT allows defining the type of $n$-truncated semisimplicial types for any exo-natural number $n$, a term of $\NN$ in the second level, it does not seem possible to assemble these into a type of untruncated semisimplicial types. Voevodsky’s solution \cite{hts} was to assume that exo-nat, $\NN$ in the second level, is fibrant (isomorphic to a type in the first level), which works for simplicial sets but may not hold in all infinity-toposes. However, assuming cofibrancy, a weaker notion than fibrancy, of exo-nat also allows for defining a fibrant type of untruncated semisimplicial types with a broader syntax, including models for all infinity-toposes. After giving the overview of the models of 2LTT in Section \ref{semchap}, we provide such models in Section \ref{modelsec}.

\textbf{Structure of this work.} In Section \ref{synchap}, we begin with giving the basics of 2LTT. Our basic objects, \emph{types} and \emph{exo-types} are explained. We then give the three classifications about exo-types, which are \emph{fibrancy}, \emph{cofibrancy}, and \emph{sharpness}. Note that these concepts are the basic building blocks of the mentioned study \cite{UPpaper}. We also provide new results about the cofibrancy and sharpness of some inductive types. Proposition \ref{funext-cofib} and the entire Section \ref{novel-sec} are new in this field. Throughout the paper, we point to the relevant codes in the Agda library and talk about how, if any, things that differ from previous works contribute to Agda formalization. In Section \ref{semchap}, in order to present the complete picture, we also explore the semantic aspect of the study and introduce the meaning of 2LTT's model, providing results about the general models of the theory we are concerned with. As far as we know, there have been no previous studies on non-trivial models of 2LTT with cofibrant exo-nat. By \emph{non-trivial}, we mean the proposed model indeed satisfies cofibrant exo-nat but does not satisfy fibrant exo-nat. Theorem \ref{cofib-model} proves the existence of models we desired.

\textbf{Drawback and limitations.} Although the proofs in the paper are logically valid and complete, the formalization of 2LTT heavily depends on new, experimental, and undocumented features of Agda. As such, there are some bugs emerging from the previously untested interactions of these features, and there might be more than we encountered. There are some efforts by Agda developers to fix these bugs in the Agda source code. We expect the study with these experimental features to produce documentation on what we need to avoid bugs. 

\textbf{Acknowledgements}. We would like to thank Michael Shulman and Nicolai Kraus for many interesting discussions and insightful comments. The work is partially supported by NSF grant DMS-1902092, the Army Research Office W911NF-20-1-0075, and the Simons Foundation. The work is also based upon work supported by the Air Force Office of Scientific Research under award number FA9550-21-1-0009.

\section{Review about two-level type theory}\label{synchap}

\subsection{Types \& exo-types}\label{types&exo-types}

The primitive objects of a type theory are \textbf{types} and \textbf{terms}. These are similar to sets and elements in set theory. For 2LTT, there are two different \emph{kinds} of types: one kind in HoTT and other kind in meta level. We reserve the word ``types" for ones in HoTT (as usual) while we use the word ``exo-type\footnote{This term was originally suggested by Ulrik Buchholtz.}" for ones in meta level, as in \cite{UPpaper}. According to this distinction, we should define each type and type formers twice: one for types, one for exo-types. 

In type theory, we define \textbf{universe} as a type of types. In order to avoid paradoxes a la Russell, we assume a universe hierarchy. Thus, a universe is again a type, but in a different sense than its terms.
In our setting, we have a hierarchy of universes of types, denoted by $\U$, and exo-universes of exo-types, denoted by $ \U^e $. We always make the distinction between types and exo-types using the superscript $-^e$.

After having universes and exo-universes, it is easy to define types and exo-types. We are assuming all definitions in HoTT Book \cite{hott}, and hence we have basic type and type formers. Exo-type and exo-type formers are defined exactly in the same way, but these are defined in the exo-universe. 

\begin{defn}\label{basicdef}
\begin{itemize}
\item[]
\item For a type $A : \U$ and a type family $B : A \rightarrow \U$, we define the \textbf{dependent function type} (briefly $\prod$-type) \[\prod_{a : A} B(a)\] as usual. If $B$ is a constant family, then the dependent function type is the ordinary function type: \[\prod_{a : A} B := A \rightarrow B \,.\]

For an exo-type $A : \U^e$ and an exo-type family $B : A \rightarrow \U^e$, we have the \textbf{dependent function exo-type} (briefly $\prod$-exo-type) \[{\prod_{a : A}}^e B(a)\] in a similar way. If $B$ is constant, then we have the ordinary function exo-type \[{\prod_{a : A}}^e B := A \rightarrow^e B \,.\]

It should be noted that the notation for maps between exo-types ``$\rightarrow^e$" can be used throughout this paper to emphasize distinction. However, we omit the notation and use usual arrows for any cases since the domain and the codomain can be derived from the context, or we can specify whether we have type or exo-type.

\item For a type $A : \U$ and a type family $B : A \rightarrow \U$, we define the \textbf{dependent sum type} (briefly $\sum$-type) \[\sum_{a : A} B(a)\] as usual, and its terms are of the form $\pair (a,b)$ for $a : A$ and $b : B(a)$. The projection maps are $\pi_1 : \sum_{a : A} B(a) \rightarrow A $ and $\pi_2 : \sum_{a : A} B(a) \rightarrow B(a)$. When $B$ is a constant family, we call it the \textbf{product type} and denote it by $A \times B$. 

For an exo-type $A : \U^e$ and an exo-type family $B : A \rightarrow \U^e$, we have the \textbf{dependent sum exo-type} (briefly $\sum$-exo-type) \[{\sum_{a : A}}^e B(a)\] in a similar way, and its terms are of the form $\pair^e (a,b)$ for $a : A$ and $b : B(a)$. The projection maps are ${\pi_1}^e : {\sum_{a : A}}^e B(a) \rightarrow A $ and ${\pi_2}^e : {\sum_{a : A}}^e B(a) \rightarrow B(a)$. When $B$ is a constant family, we have the \textbf{product exo-type} $A \times^e B$. 

Note that we will use the notation $\mathbf{(a,b)}$ for $\pair (a,b)$ or $\pair^e(a,b)$ when the context is clear. We prefer the comma notation in this paper due to easier reading. \emph{This choice and the choice for arrows may seem to be contradictory with our claim of ``no abuse of notation". However, the choices are only for aesthetic purposes, and the notation difference is precise in the formalization.}

\item For a pair of types $A , B : \U$, we define the \textbf{coproduct type} $A + B : \U$ as usual, constructed by the maps $\inl : A \rightarrow A + B$ and $\inr : B \rightarrow A + B$. 

For a pair of exo-types $A , B : \U^e$, we define the \textbf{coproduct exo-type} $A +^e B : \U^e$ similarly, constructed by the maps ${\inl}^e : A \rightarrow A +^e B$ and ${\inr}^e : B \rightarrow A +^e B$.

\item While the \textbf{unit type}, denoted by $\unit : \U$, is constructed by a single term $\star : \unit$, the \textbf{unit exo-type}, denoted by $\unit^e : \U^e$, is constructed by a single exo-term $\star^e : \unit^e$. 

\item We have both the \textbf{empty type}, denoted by $\emptype : \U$, and the \textbf{empty exo-type}, denoted by $\emptype^e : \U^e$. Both have no constructors, and hence no term by definition.

\item The \textbf{natural number type}, denoted by $\NN : \U$, is constructed by a term $\zero : \NN$ and a function term $\suc : \NN \rightarrow \NN$. The \textbf{natural number exo-type} (briefly exo-natural or exo-nat), denoted by $\NN^e : \U^e$, is constructed by $\zero^e : \NN^e$ and $\suc^e : \NN^e \rightarrow \NN^e$.

\item The \textbf{finite type} having $n$ terms, denoted by $\NN_{<n}$, defined inductively (on $n:\NN$) as \[\NN_{<0}:= \emptype\quad \text{ and } \quad \NN_{<n+1}:=\NN_{<n}+\unit. \] Similarly \textbf{exo-finite exo-type} having $n$ terms, denoted by $\NN^e_{<n}$, is defined inductively (on $n:\NN^e$) as \[\NN^e_{<0}:= \emptype^e\quad \text{ and } \quad \NN^e_{<n+1}:=\NN^e_{<n}+^e\unit^e. \]

\item For a type $A : \U$ and $a,b : A$, we define the \textbf{identity type} (or \textbf{path type)} $a = b : \U$ as usual, its constructor is $\refl : a = a$. 

For an exo-type $A : \U^e$ and $a,b : A$, we have the \textbf{exo-equality} $a \exoeq b : \U^e$ in a similar way; its constructor is $\refl^e : a \exoeq a$.
\end{itemize}
\end{defn}

Note that these type/exo-type pairs may not coincide in the cases of $+^e$, $\NN^e$, $\emptype^e$, and $\exoeq$. For example, even if $A,B:\U$, we may not have $A +^e B : \U$, namely, this is always an exo-type, but not generally a type. The difference in these cases is that the elimination/induction rules of fibrant types cannot be used unless the target is fibrant. For example, we can define functions into any exo-type by recursion on $\NN^e$, but if we want to define a function $f :\NN \rightarrow A$ by recursion, we must have $A : \U$.

\begin{remark}
We assume the univalence axiom ($\UA$) only for the identity type. Thus, we also have the function extensionality ($\funext$) for it because $\UA$ implies $\funext$ (Theorems 4.9.4 \& 4.9.5 in \cite{hott}). For the exo-equality, we assume the $\funext^e$ and the axiom called Uniqueness of Identity Proofs ($\UIP$). In other words, we have the following 
\begin{itemize}
\item $\UA : \prod_{A,B : \U} (A \simeq B) \rightarrow (A = B)$
\item $\funext : \left(f,g : \prod_A B(a) \right) \rightarrow \left( \prod_{a : A} f(a)=g(a) \rightarrow (f=g) \right)$
\item $\funext^e : \left(f,g : \prod^e_A B(a) \right) \rightarrow \left( \prod^e_{a : A} f(a)\exoeq g(a) \rightarrow (f\exoeq g) \right)$
\item $\UIP : \prod^e_{a,b :A} \left( \prod^e_{(p,q : a\exoeq b)} p \exoeq q \right)$
\end{itemize}
In the applications or examples, we often make use of both versions of function extensionality. Note also that $\UIP$ says that for any terms $a,b$ in an exo-type $A$, if they are exo-equal, namely, there is an exo-equality between them, then the equality term is unique.
\end{remark}

\begin{remark}
Just as there is a type hierarchy in terms of path types such as \textbf{contractible} types, \textbf{propositions}, and \textbf{sets}, we can define \textbf{exo-contractible} exo-type, \textbf{exo-propositions}, and \textbf{exo-sets} similarly with respect to $\exoeq$. Since we assume $\UIP$ for exo-equality, this yields that all exo-types are exo-sets. As another note, any property of $=$ can be defined for $\exoeq$ similarly by its elimination rule. For example, we have both transport ($\tr$) and exo-transport ($\tr^e$), we have both path-type homotopies ($\sim$) of functions between types and exo-equality homotopies ($\sim^e$) of functions between exo-types, and so on. For this kind of properties, not defined here, we refer to the HoTT Book \cite{hott}.
\end{remark}

\textbf{Agda Side.} The folders \texttt{Types} and \texttt{Exo-types} in our Agda library \cite{agdalib} contain all the definitions above. The main file \texttt{Primitive.agda} (Figure \ref{universe-agda}) has the definition of the universe and the exo-universe. The flag \texttt{--two-level} enables a new sort called \texttt{SSet}. This provides two distinct universes for us. Note that while it is common to assume typical ambiguity\footnote{HoTT Book, Section 1.3} in papers, the formalization works with polymorphic universes.
\begin{figure}[ht]
    \centering
        \includegraphics[scale=0.5]{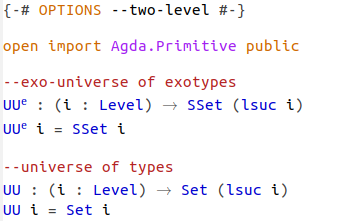}
    \caption{Agda code for two kinds of universes.}
    \label{universe-agda}
\end{figure}

\subsection{Isomorphisms \& Equivalences}

Considering these twin definitions in the previous section, it's natural to ask whether there is a correspondence between them. We obtain such a correspondence according to the relation between types and exo-types. In \cite{2ltt}, it is assumed that there is a coercion map $c$ from types to exo-types, for any type $A : \U$ we have $c(A) : \U^e$. Another approach, as in \cite{UPpaper}, is taking $c$ as an inclusion, in other words, assuming every type is an exo-type. In this work, the second approach is assumed. Therefore, we can apply exo-type formers to types. For example, both $\NN + \NN$ and $\NN +^e \NN$ make sense, but both are still exo-types. We will later prove some isomorphisms related to such correspondences. However, what an isomorphism between exo-types means should be defined beforehand. 

\begin{defn}
\begin{itemize}
\item[]
\item[•] A function $f : A \rightarrow B$ between exo-types is called an \textbf{isomorphism} (or \textbf{exo-isomorphism}) if there is a function $g : B \rightarrow A$ such that $g \ecirc f \exoeq \id_A$ and $f  \ecirc g \exoeq \id_B$ where $\id_A : A \rightarrow A$ is the identity map. We define the exo-type of exo-isomorphisms as \[A \cong B := {\sum}^e_{f:A\rightarrow B}{\sum}^e_{g:B\rightarrow A}(f  \ecirc g \exoeq \id_B)\times^e(g \ecirc f \exoeq \id_A).\] It can be read as $A \cong B$ consists of exo-quadruples $(f,g,p,q)$ such that $f:A\rightarrow B$ and $g:B\rightarrow A$ are functions, and $p, q$ are witnesses for the relevant identities. Note that $\ecirc$ means that the composition is between two functions of exo-types.

\item[•] A function $f : A \rightarrow B$ between types is called an \textbf{equivalence} if its fibers are contractible. We define the type of equivalences as \[A \simeq B := \sum_{f:A\rightarrow B} \left( \prod_{b:B} \texttt{is-Contr}\left(\sum_{a:A} f(a)=b\right)\right).\] It can be read as $A\simeq B$ consists of pairs $(f,p)$ where $f$ is a function and $p$ is a witness of that all fibers of $f$ is contractible. In other words, the preimage of each term in $B$ is unique up to the identity type.

\item[•] A function $f : A \rightarrow B$ between types is called \textbf{quasi-invertible} if there is a function $g : B \rightarrow A$ such that $g \circ f = \id_A$ and $f  \circ g = \id_B$ where $\id_A : A \rightarrow A$ is the identity map. 
\end{itemize}
\end{defn}

\begin{remark}
In these definitions, one can use $\funext^e$ or $\funext$, and instead of showing, for example, $g \ecirc f \exoeq \id_A$, it can be showed that $g (f (a)) = a$ for any $a \in A$. Moreover, a map is an equivalence if and only if it is quasi-invertible. Therefore, we can use both interchangebly. For practical purposes, when we need to show f is an equivalence, we generally do it by showing that it is quasi-invertible.
\end{remark}

Assuming that each type is an exo-type, and considering all definitions so far, the correspondence between exo-type formers and type formers can be characterized\footnote{This is the same as Lemma 2.11 in \cite{2ltt}.} as follows:

\begin{thm}\label{lemma211}
If $A, C: \U$ are types and $B : A \rightarrow \U$ is a type family, we have the following maps. The first three maps are exo-isomorphisms.
\begin{itemize}
\item[i.] $\unit^e \rightarrow \unit$,
\item[ii.] $\sum^e_{a:A} B(a) \rightarrow \sum_{a:A} B(a)$,
\item[iii.] $\prod^e_{a:A} B(a) \rightarrow \prod_{a:A} B(a)$,
\item[iv.] $A +^e C \rightarrow A + C$,
\item[v.] $\emptype^e \rightarrow \emptype$,
\item[vi.] $\NN^e \rightarrow \NN$,
\item[vii.] For any $a,b:A$, we have $(a\exoeq b) \rightarrow (a = b)$.
\end{itemize}
\end{thm}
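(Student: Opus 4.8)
The plan is to construct each of the seven maps explicitly from the elimination/introduction rules, and then, for items i–iii, produce an inverse and the two round-trip exo-equalities to witness the exo-isomorphism. The unifying theme is that all the relevant exo-type formers ($\unit^e$, $\sum^e$, $\prod^e$, $+^e$, $\emptype^e$, $\NN^e$, $\exoeq$) are defined by exactly the same constructors as their fibrant counterparts, so there is a canonical ``forgetful'' comparison map in each case; the content is that for $\unit^e$, $\sum^e$ (of fibrant data), and $\prod^e$ (of fibrant data) this canonical map is an exo-iso, whereas for the remaining four it need only be a map (indeed $\NN^e\to\NN$, $+^e\to+$, etc. are generally not isos, which is the whole point of 2LTT).

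First I would handle the three exo-isomorphisms. For (i): define $\unit^e\to\unit$ by $\star^e\mapsto\star$ and the inverse $\unit\to\unit^e$ by $\star\mapsto\star^e$; one composite is judgmentally $\id$ on $\star^e$ (so use $\refl^e$ after the exo-unit induction principle, which applies since the target exo-type $\star^e\exoeq\star^e$ is arbitrary), and the other composite is $\id$ on $\unit$, which needs the (fibrant) induction principle for $\unit$ landing in the path type — here one should be slightly careful that $\unit$-induction is available because the motive is fibrant. For (ii): send $\pair^e(a,b)\mapsto\pair(a,b)$ and conversely $\pair(a,b)\mapsto\pair^e(a,b)$; both composites are the identity on constructors, so after $\sum^e$-induction resp.\ $\sum$-induction (the latter legitimate since $B(a):\U$ and the motive is a path type over a fibrant type) each round trip reduces to $\refl$/$\refl^e$. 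For (iii): send $f\mapsto\lambda a.\,f(a)$, using that a term of $\prod^e_{a:A}B(a)$ can be applied at $a:A$ to yield $f(a):B(a)$, and the inverse does the same in the other direction; here the two round-trip equalities are proved pointwise and then assembled using $\funext^e$ (for $g\circ f\exoeq\id$ on $\prod^e$) and $\funext$ (for $f\circ g=\id$ on $\prod$), exactly as flagged in the remark after the definition of equivalences.

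For the remaining four maps I would only exhibit the map itself, since the statement asks for nothing more. For (iv): define $A+^eC\to A+C$ by exo-case-analysis, $\inl^e(a)\mapsto\inl(a)$ and $\inr^e(c)\mapsto\inr(c)$ — legitimate because the target $A+C$ is a fibrant type and we are using the recursor of $+^e$, whose target may be anything. For (v): the unique map $\emptype^e\to\emptype$ is given by the recursor of $\emptype^e$ (absurd elimination into the exo-type $\emptype$, or indeed into anything). For (vi): define $\NN^e\to\NN$ by exo-nat recursion, $\zero^e\mapsto\zero$ and $\suc^e(n)\mapsto\suc(f(n))$; this is allowed because the target $\NN$ is fibrant and $\NN^e$-recursion permits an arbitrary target exo-type. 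For (vii): for fixed $a:A$, exo-based-path-induction on $a\exoeq b$ sends $\refl^e\mapsto\refl$, which is fine since for each $a$ the family $\lambda b.\,\lambda(p:a\exoeq b).\,(a=b)$ is a legitimate motive for $\exoeq$-elimination (the target $a=b$ is an exo-type, being a type).

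The main obstacle is not any single computation — each is a one-line application of an eliminator — but rather the bookkeeping of \emph{which} eliminator is licensed in \emph{which} direction: the exo-eliminators ($\sum^e$-, $\prod^e$-app, $+^e$-, $\emptype^e$-, $\NN^e$-, $\exoeq$-induction) may target arbitrary exo-types and so are always available, but the fibrant eliminators ($\sum$-, $\unit$-, $\funext$, path induction) are only available when the motive is itself fibrant. In the three isomorphism cases this is exactly why one must check that the inverse map's defining data and the round-trip coherences live over fibrant types — which they do, because we are landing in $\sum_{a:A}B(a)$, $\unit$, $\prod_{a:A}B(a)$, and path types thereof, all of which are types by the hypotheses $A,C:\U$ and $B:A\to\U$. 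Getting these side conditions right (and, on the Agda side, making sure the motive's sort is $\U$ and not $\U^e$ so that the fibrant induction principle typechecks) is the only subtle point; everything else is $\refl$ and $\refl^e$ modulo $\funext$/$\funext^e$.
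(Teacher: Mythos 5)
Your overall plan coincides with the paper's proof: each of the seven maps is the canonical one obtained from the relevant eliminator, and for (i)--(iii) one exhibits the evident inverse; your discussion of which eliminators are licensed for (iv)--(vii) is also correct. The genuine problem lies in how you discharge the round-trip coherences on the fibrant side of the three isomorphisms. The paper's definition of exo-isomorphism requires \emph{exo-equalities} $f\ecirc g\exoeq\id$ and $g\ecirc f\exoeq\id$ for \emph{both} composites. In (i) you propose to verify the $\unit$-side composite by $\unit$-induction ``landing in the path type'', in (ii) you take the motive of $\sum$-induction to be ``a path type over a fibrant type'', and in (iii) you explicitly assemble the $\prod$-side round trip with $\funext$, obtaining $f\circ g=\id$. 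In each case this produces an inhabitant of the identity type, not of the exo-equality, and there is no way to upgrade it: the missing conversion $(a=b)\rightarrow(a\exoeq b)$ is exactly the inverse of map (vii), which Remark \ref{A1-axiom} excludes. So as written you establish something strictly weaker than $\unit^e\cong\unit$, $\sum^e_{a:A}B(a)\cong\sum_{a:A}B(a)$, $\prod^e_{a:A}B(a)\cong\prod_{a:A}B(a)$.

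Nor can the repair be the one you gesture at (fibrant induction with a fibrant motive): to prove the statement actually needed, namely the pointwise exo-equality $f(g(x))\exoeq x$ for $x$ in $\unit$ or in $\sum_{a:A}B(a)$, the motive is a non-fibrant exo-type, and eliminating a fibrant type into a non-fibrant motive is precisely what 2LTT forbids. The way this is resolved---implicitly in the paper's one-line proof and concretely in the Agda formalization---is judgmental $\eta$: for $\unit$, $\sum$, and $\prod$ both composites are definitionally the identity, so the required exo-equalities are witnessed by $\refl^e$ (together with $\funext^e$ where one prefers a pointwise argument, which is applicable because every type is an exo-type), and no fibrant eliminator is ever applied to a non-fibrant motive. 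With that substitution in (i)--(iii) your argument matches the paper's; the remaining items (iv)--(vii), where only the maps themselves are asserted, are fine as you have them.
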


\begin{proof}
For each one, the definition follows from the elimination rule of the corresponding exo-types. 

\textit{i}. The map $x\mapsto \star$ is an isomorphism with the inverse $x \mapsto \star^e$.

\textit{ii}. The map $\pair^e(a,b) \mapsto \pair (a,b)$ is an isomorphism with the inverse $\pair(a,b) \mapsto \pair^e (a,b)$.

\textit{iii}. The map $f \mapsto (a\mapsto f(a))$ is an isomorphism with the inverse as denoted the same. Note that we do not take the identity map because domain and codomain are not the same.

\textit{iv}. The map is defined as $\inl^e a \mapsto \inl a$ and $\inr^e b \mapsto \inr b$.

\textit{v}. The map is the usual null map that corresponds to the principle \emph{ex falso quodlibet}.

\textit{vi}. The map, say $f$, is defined as $\zero^e \mapsto \zero$ and $\suc^e (n) \mapsto \suc (f(n))$.

\textit{vii}. The map is defined as $\refl^e \mapsto \refl$.
\end{proof}

\begin{remark}\label{A1-axiom}
It is worth emphasizing that the inverses of the maps \textit{iv}, \textit{v}, and \textit{vi} can be assumed to exist. There are some models where these hold (for the details, see the discussion below Lemma 2.11 \cite{2ltt}). However, a possible inverse for the map \textit{vii} would yield a contradiction because the univalence axiom is inconsistent with the uniqueness of identity proofs. This conversion from the exo-equality ($\exoeq$) to the identity ($=$) has still an importance in many proofs later. Thus we denote it by \[\exotoid : {\prod_{a,b:A}}^e(a\exoeq b) \rightarrow (a = b) \,.\] One of its useful corallaries is the following lemma.
\end{remark}

\begin{lemma}\label{isotoeq}
Let $A,B : \U$ be two types. If $A\cong B$, then $A\simeq B$.
\end{lemma}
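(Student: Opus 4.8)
The plan is to show that the forward component of an exo-isomorphism is quasi-invertible as a map between types, and then to invoke the fact — recorded in the remark following the definition of equivalence — that a quasi-invertible map between types is an equivalence. So, given a term of $A \cong B$, I would first destructure it into an exo-quadruple $(f,g,p,q)$ with $f : A \rightarrow B$, $g : B \rightarrow A$, $p : f \ecirc g \exoeq \id_B$, and $q : g \ecirc f \exoeq \id_A$. The goal is then to manufacture \emph{path-type} witnesses $g \circ f = \id_A$ and $f \circ g = \id_B$, which exhibit $f$ as quasi-invertible and hence produce a term of $A \simeq B$.

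The core of the argument is the passage from the exo-equalities $p,q$ to path equalities, and here one cannot apply $\exotoid$ directly: $q$ lives in the exo-type $A \rightarrow^e A$, which is not itself a type, so Remark \ref{A1-axiom} does not apply to it as stated. Instead I would first go pointwise. Applying the exo-level analogue $\happly^e$ of $\happly$ to $q$ yields, for each $a : A$, an exo-equality $g(f(a)) \exoeq a$ \emph{in the type $A$}; since $A$ is a type, $\exotoid$ (from Remark \ref{A1-axiom}) converts this to a path $g(f(a)) = a$. Now $g \circ f$ and $\id_A$ are genuine functions between the types $A$ and $A$ — here one uses Theorem \ref{lemma211}(iii) to move explicitly between the exo-level composite, an element of $A \rightarrow^e A$, and the corresponding element of $A \rightarrow A$ — so $\funext$ assembles the family $a \mapsto (g(f(a)) = a)$ into the path $g \circ f = \id_A$. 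The symmetric argument applied to $p$ gives $f \circ g = \id_B$.

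With both identities in hand, $(g,\, g \circ f = \id_A,\, f \circ g = \id_B)$ witnesses that $f$ is quasi-invertible; since quasi-invertibility is equivalent to being an equivalence, we obtain the desired term of $A \simeq B$. I expect the main obstacle to be bookkeeping rather than mathematics: in the spirit of the paper's ``no abuse of notation'' policy, one must keep the exo-level function exo-types $A \rightarrow^e A$, $A \rightarrow^e B$ carefully distinct from their type-level counterparts $A \rightarrow A$, $A \rightarrow B$, and perform the transfers via Theorem \ref{lemma211}(iii) explicitly \emph{before} invoking $\exotoid$ and $\funext$; once this is done, every remaining step is routine.
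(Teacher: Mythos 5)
Your proof is correct, but it takes a genuinely different route from the paper's at the key step. The paper applies $\exotoid$ once, at the level of whole functions: since $A$ and $B$ are types, the function types $A\rightarrow A$ and $B\rightarrow B$ are themselves types, so $p : f\ecirc g \exoeq \id_B$ and $q : g\ecirc f \exoeq \id_A$ are exo-equalities between terms of a type and $\exotoid$ converts them directly into paths; the proof then closes by noting that $f\ecirc g = f\circ g$ and $g\ecirc f = g\circ f$ hold by definition, under the paper's convention that types are included among exo-types, so a composite of functions between types is literally a term of the fibrant function type. Consequently the paper needs neither $\funext$ nor Theorem \ref{lemma211}(iii). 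You instead decline this identification, go pointwise with $\happly^e$, apply $\exotoid$ inside the types $A$ and $B$, transfer the exo-level composites along Theorem \ref{lemma211}(iii), and reassemble with $\funext$; this is also a valid derivation (both $\funext$ and the transfer maps are available, and the transferred composite agrees definitionally with the composite of the transferred maps, so the bookkeeping you anticipate does go through), and it has the merit of working verbatim in a presentation where $A\rightarrow^e A$ is kept distinct from $A\rightarrow A$, e.g.\ with an explicit coercion map as in \cite{2ltt}. The one point to flag is that your claim that $\exotoid$ ``cannot be applied directly'' is too strong for this paper's setting: under the inclusion/cumulativity convention adopted here (and in the Agda formalization), the composites are terms of the type $A\rightarrow A$, which is exactly what the paper's shorter proof exploits; your route trades that convention for an extra use of $\funext$ and the transfer maps.
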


\begin{proof}
Let $f :A \rightarrow B$ and $g:B \rightarrow A$ be such that \[p:f \ecirc g \exoeq \id_B\quad \text{and} \quad q:g \ecirc f \exoeq \id_A.\] Since, $A\rightarrow B$ and $B\rightarrow A$ are also (function) types, we get \[\exotoid(p):f \ecirc g = \id_B\quad \text{and} \quad \exotoid(q):g \ecirc f = \id_A.\] Since $f\ecirc g = f \circ g$ and $g\ecirc f = g\circ f$ hold by definition, we are done. 
\end{proof}

\textbf{Agda Side.} The file \texttt{C.agda} contains the complete proof of Lemma \ref{lemma211}. In addition to the flag \texttt{--two-level}, we should also use another flag, which is \texttt{--cumulativity}. This enables the subtyping rule \texttt{Set i $\leq$ SSet i} for any level \texttt{i}. Thanks to the flag, we can take types as arguments for the operations/formations that are originally defined for exo-types, as we discussed at the beginning of this section. There is a problem with the usage of these two flags together, and Figure \ref{agdabug} gives an example of this.
\begin{figure}[ht]
    \centering
        \includegraphics[scale=0.5]{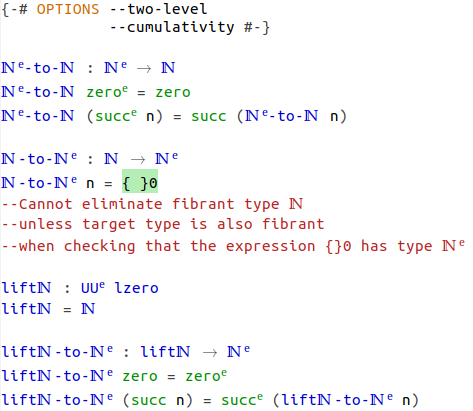}
    \caption{An example of the interaction between the flags \texttt{--two-level} and \texttt{--cumulativity}.}
    \label{agdabug}
\end{figure}

As in the example, Agda has a feature that prevents elimination from a fibrant type to a non-fibrant type. However, when we lift $\NN$ to a term in $\U^e$, which is obtained by cumulativity, it is not a fibrant type anymore, and Agda allows the elimination of its constructors. As noted in Remark \ref{A1-axiom}, such maps are not entirely wrong and can be added to our theory. However, there are essential models where they do not exist, so they should not be definable in an implementation. Also, there are other problems emerging due to the cumulativity itself\footnote{See \url{https://github.com/agda/agda/issues/5761}}.

\subsection{Fibrant exo-types}

\begin{defn}
An exo-type $A : \U^e$ is called a \textbf{fibrant exo-type} if there is a type $RA : \U$ such that $A$ and $RA$ are exo-isomorphic. In other words, $A$ is fibrant when the following exo-type is inhabited \[\isFibrant(A):={\sum_{RA : \U}}^e (A \cong RA)\,.\] 
\end{defn}

\begin{prop}[\cite{2ltt}]\label{fibrant}
The following are true:
\begin{itemize}
\item[i.] Any type $A : \U$ is a fibrant exo-type.
\item[ii.] The unit exo-type $\unit^e$ is fibrant.
\item[iii.] Let $A : \U^e$ and $B : A \rightarrow \U^e$, if $A$ is fibrant, and each $B(a)$ is fibrant, then both $\sum^e_{a:A} B(a)$ and $\prod^e_{a:A} B(a)$ are fibrant. 
\item[iv.] If $A,B : \U^e$ are exo-isomorphic types, then $A$ is fibrant if and only if $B$ is fibrant.
\item[v.] If $A:\U^e$ is fibrant, and there are two types $B,C :\U$ such that $A\cong B$ and $A\cong C$, then $B=C$.
\end{itemize}
\end{prop}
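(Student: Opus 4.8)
The plan is to prove each of the five items separately, in the order given, since each relies on the basic interplay between exo-isomorphisms, the definition of fibrancy, and the closure properties of types from the HoTT Book.

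For item \textit{i}, I would observe that the identity map $\id_A : A \to A$ is an exo-isomorphism (with itself as inverse, the required exo-equalities being $\refl^e$), so taking $RA := A$ witnesses $\isFibrant(A)$. For item \textit{ii}, this is a direct instance of item \textit{i} once we note that $\unit^e$ is exo-isomorphic to the type $\unit : \U$ by Theorem \ref{lemma211}(i); alternatively, apply item \textit{iv} below to transport fibrancy of $\unit$ (which holds by item \textit{i}) along that exo-isomorphism. For item \textit{iv}, suppose $A \cong B$ via some exo-quadruple, and suppose $A$ is fibrant, say $A \cong RA$ with $RA : \U$. Since exo-isomorphisms compose (unfolding the definition of $\ecirc$ and bundling the composite witnesses, using $\funext^e$ and $\UIP$ to handle the coherence equalities), we get $B \cong RA$, so the same $RA$ witnesses fibrancy of $B$; the converse direction is symmetric, using that $A \cong B$ is itself symmetric (swap $f$ and $g$, swap $p$ and $q$).

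For item \textit{iii}, the key input is the two exo-isomorphisms from Theorem \ref{lemma211}: for any type $X : \U$ and type family $Y : X \to \U$ we have $\sum^e_{x:X} Y(x) \cong \sum_{x:X} Y(x)$ and $\prod^e_{x:X} Y(x) \cong \prod_{x:X} Y(x)$. Given fibrant $A$ with $A \cong RA$ and, for each $a : A$, fibrant $B(a)$ with $B(a) \cong RB(a)$ for some $RB(a) : \U$, I first want to replace $A$ by $RA$ and each $B(a)$ by $RB(a)$ inside the exo-$\sum$ and exo-$\prod$. This requires a congruence lemma: an exo-isomorphism $A \cong RA$ together with a family of exo-isomorphisms $B(a) \cong RB(a)$ induces exo-isomorphisms $\sum^e_{a:A} B(a) \cong \sum^e_{r:RA} RB'(r)$ and likewise for $\prod^e$, where $RB'$ is $RB$ transported along the inverse isomorphism (here one uses $\exotoid$ to turn the exo-equality $RA \cong A$-derived data into an actual equality of types, or simply works directly since $B(a) \cong RB(a)$ and composition handles it). Then apply Theorem \ref{lemma211}(ii),(iii) to get $\sum^e_{r:RA} RB'(r) \cong \sum_{r:RA} RB'(r)$ and the latter is a genuine type in $\U$; compose the exo-isomorphisms and take that type as the fibrant replacement. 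For item \textit{v}, if $A \cong B$ and $A \cong C$ with $B, C : \U$, then by composing (and symmetry) we get $B \cong C$, and since $B \to C$, $C \to B$ are function types between types, Lemma \ref{isotoeq} gives $B \simeq C$, whence $\UA$ yields $B = C$.

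The main obstacle is the congruence step in item \textit{iii}: carefully constructing the exo-isomorphism between $\sum^e_{a:A} B(a)$ and a dependent sum/product over the fibrant replacements, keeping track of how the family $RB$ must be reindexed along the isomorphism $A \cong RA$, and discharging the coherence exo-equalities. This is exactly the kind of step where a paper using abuse of notation would gloss over the reindexing, but a formalization must make it explicit; I expect the proof to either build an auxiliary lemma stating that exo-$\sum$ and exo-$\prod$ respect fiberwise exo-isomorphism and a base exo-isomorphism, or to invoke such a lemma proved earlier. Everything else is routine composition of exo-isomorphisms plus one appeal to $\UA$ via Lemma \ref{isotoeq}.
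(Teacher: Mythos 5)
Your proposal is correct and follows essentially the same route as the paper: identity witness for (i), the isomorphism $\unit^e\cong\unit$ from Theorem \ref{lemma211} for (ii), transitivity/symmetry of $\cong$ for (iv), functoriality of $\sum^e$/$\prod^e$ plus Theorem \ref{lemma211}(ii),(iii) with the family reindexed along $s_A$ for (iii), and Lemma \ref{isotoeq} plus univalence for (v). The only cosmetic difference is that in (iii) the paper handles the reindexing simply as precomposition, taking the replacement family $RB(s_A(c))$, so no transport or $\exotoid$ is needed there.
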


\begin{proof}
\begin{itemize}
\item[i.] This is trivial because we can take $(A,\id_A) : \isFibrant(A)$.
\item[ii.] By Theorem \ref{lemma211}, we know that there is an exo-isomorphism $e : \unit^e \cong \unit$.
\item[iii.] Let $RA:\U$ and $RB : A \rightarrow \U$ such that $A \cong RA$ and $B(a) \cong RB(a)$ for each $a:A$. Take $r_A:A\rightarrow RA$ with inverse $s_A:RA \rightarrow A$. Using the functoriality of $\sum$-exo-types and the map \textit{ii} in Theorem \ref{lemma211}, we have \[{\sum_{a:A}}^e B(a) \cong {\sum_{c:RA}}^e RB(s_A(c))\cong \sum_{c:RA} RB(s_A(c)).\] 
Similarly, the functoriality of $\prod$-exo-types and the map \textit{iii} in Theorem \ref{lemma211} imply that
\[{\prod_{a:A}}^e B(a) \cong {\prod_{c:RA}}^e RB(s_A(c)) \cong \prod_{c:RA} RB(s_A(c)).\]
\item[iv.] If $A$ is fibrant, namely, $A\cong RA$ for a type $RA:\U$, since $\cong$ is transitive, we get $B\cong A\cong RA$. The reverse is the same.
\item[v.] By transitivity, we have $B\cong C$. Lemma \ref{isotoeq} implies $A\simeq B$, and the result follows from the univalence. \qedhere
\end{itemize}
\end{proof}

Just as we have fibrant exo-types, we can consider the maps between fibrant exo-types as the maps between types.  

\begin{defn}\label{exoequiv}
Let $A,B: \U^e$ be two fibrant exo-types, and $f:A\rightarrow B$. Let $RA,RB:\U$ be such that $A\cong RA$ and $B\cong RB$. We have the following diagram. 
\begin{center}
\begin{tikzcd}
A \arrow[rr, "f"] \arrow[dd, "r_A"', bend right] &  & B \arrow[dd, "r_B"', bend right]  \\
\cong                                            &  & \cong                             \\
RA \arrow[uu, "s_A"', bend right]                &  & RB \arrow[uu, "s_B"', bend right]
\end{tikzcd}
\end{center}
We call $f$ a \textbf{fibrant-equivalence} if \[r_B\ecirc f \ecirc s_A : RA \rightarrow RB\] is an equivalence.
\end{defn}

\begin{prop}\label{exoequiv-prop}
Let $A,B,C: \U^e$ be fibrant exo-types, and $f:A\rightarrow B$ and $g:B \rightarrow C$. Consider the corresponded diagram. 
\begin{center}
\begin{tikzcd}
A \arrow[rr, "f", shift left] \arrow[dd, "r_A"', bend right] \arrow[rr, "f'"', shift right] &  & B \arrow[dd, "r_B"', bend right] \arrow[rr, "g"] &  & C \arrow[dd, "r_C"', bend right]  \\
\cong                                                                                       &  & \cong                                            &  & \cong                             \\
RA \arrow[uu, "s_A"', bend right]                                                           &  & RB \arrow[uu, "s_B"', bend right]                &  & RC \arrow[uu, "s_C"', bend right]
\end{tikzcd}
\end{center}
The following are true:
\begin{itemize}
\item[i.] If $f$ is an exo-isomorphism, then it is a fibrant-equivalence.
\item[ii.] If $f,g$ are fibrant-equivalences, then so is $g\ecirc f$.
\item[iii.] If $f,f'$ are homotopic with respect to $\exoeq$, and $f$ is a fibrant-equivalence, then $f'$ is a fibrant-equivalence, too.
\end{itemize}
\end{prop}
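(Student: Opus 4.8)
The plan is to reduce each claim to the corresponding fact about equivalences between the fibrant replacements $RA$, $RB$, $RC$, using the definition of fibrant-equivalence and the (already established) facts that equivalences are closed under composition and invariant under homotopy, together with $\exotoid$ from Remark \ref{A1-axiom}.

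For part (i), suppose $f : A \rightarrow B$ is an exo-isomorphism with exo-inverse $h : B \rightarrow A$. I would show directly that $g := r_B \ecirc f \ecirc s_A : RA \rightarrow RB$ is quasi-invertible with candidate inverse $r_A \ecirc h \ecirc s_B : RB \rightarrow RA$. Composing the two and repeatedly cancelling $s_B \ecirc r_B \exoeq \id_B$ (and $s_A \ecirc r_A \exoeq \id_A$) against the exo-isomorphism witnesses for $A \cong RA$, $B \cong RB$, one gets exo-equality of each round trip with the relevant identity; applying $\exotoid$ turns these exo-equalities into genuine path-type equalities $g \circ (r_A \ecirc h \ecirc s_B) = \id_{RB}$ and $(r_A \ecirc h \ecirc s_B) \circ g = \id_{RA}$, so $g$ is quasi-invertible and hence an equivalence. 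This is exactly the pattern already used in Lemma \ref{isotoeq}.

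For part (ii), let $g' := r_C \ecirc g \ecirc s_B$ and $f' := r_B \ecirc f \ecirc s_A$ be the equivalences witnessing that $f$ and $g$ are fibrant-equivalences. The fibrant-equivalence datum for $g \ecirc f$ is $r_C \ecirc (g \ecirc f) \ecirc s_A : RA \rightarrow RC$. The key computation is that, up to $\exoeq$, we have $r_C \ecirc g \ecirc f \ecirc s_A \;\exoeq\; (r_C \ecirc g \ecirc s_B) \ecirc (r_B \ecirc f \ecirc s_A) = g' \circ f'$, inserting $s_B \ecirc r_B \exoeq \id_B$ in the middle. Pushing this exo-equality through $\exotoid$ gives an actual equality $r_C \ecirc g \ecirc f \ecirc s_A = g' \circ f'$ in the function type $RA \rightarrow RC$; since $g' \circ f'$ is an equivalence (composite of equivalences), transporting along this equality — or invoking part (iii) which we prove next — shows $r_C \ecirc (g \ecirc f) \ecirc s_A$ is an equivalence.

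For part (iii), suppose $H : f \sim^e f'$, i.e.\ $\prod^e_{a:A} f(a) \exoeq f'(a)$. Then $r_B \ecirc f \ecirc s_A$ and $r_B \ecirc f' \ecirc s_A$ are pointwise exo-equal: for $c : RA$, apply $r_B$ (as a function between exo-types, so it respects $\exoeq$) to $H(s_A(c))$. By $\funext^e$ we get $(r_B \ecirc f \ecirc s_A) \exoeq (r_B \ecirc f' \ecirc s_A)$, hence by $\exotoid$ an equality of these two maps $RA \rightarrow RB$ in the path type. Since being an equivalence is invariant under equality (equivalently, under path-type homotopy, by $\funext$), and $r_B \ecirc f \ecirc s_A$ is an equivalence by hypothesis, so is $r_B \ecirc f' \ecirc s_A$, which is exactly what it means for $f'$ to be a fibrant-equivalence.

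The main obstacle is purely bookkeeping: the fibrant-equivalence notion depends on an unspecified choice of replacements $RA, RB, RC$ and of exo-isomorphism witnesses $(r_A, s_A)$, etc., so each argument requires carefully inserting the round-trip identities $s \ecirc r \exoeq \id$ in the right places and then being disciplined about when an equality lives in $\exoeq$ versus in $=$ — the bridge between them always being $\exotoid$. There is no conceptual difficulty, but in a formalization (and to honor the paper's ``no abuse of notation'' stance) one must not silently identify $\ecirc$ with $\circ$ or an exo-equality of functions with a path; the only place that conversion is legitimate is via $\exotoid$ on function exo-types, which are themselves types, exactly as in the proof of Lemma \ref{isotoeq}.
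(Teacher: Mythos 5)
Your proposal is correct and follows essentially the same route as the paper's proof: part (i) exhibits $r_A\ecirc h\ecirc s_B$ as a quasi-inverse of $r_B\ecirc f\ecirc s_A$ and converts the exo-equalities to identities via $\exotoid$, part (ii) inserts $s_B\ecirc r_B$ to identify $r_C\ecirc(g\ecirc f)\ecirc s_A$ with the composite of the two underlying equivalences, and part (iii) uses invariance of equivalences under (path) homotopy. The only cosmetic difference is that in (iii) the paper argues pointwise that homotopies preserve equivalences, whereas you package the pointwise exo-equalities via $\funext^e$ and $\exotoid$ into a single equality of maps and transport along it; both are valid and equally formalizable.
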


\begin{proof}
i. Let $h:B\rightarrow A$ be such that $f\ecirc h \exoeq \id_B$ and $h\ecirc f \exoeq \id_A$. We claim that $r_B \ecirc f \ecirc s_A: RA \rightarrow RB$ is an equivalence with the inverse $r_A \ecirc h \ecirc s_B : RB \rightarrow RA$. It can be shown via the following chain of exo-equalities:
\begin{eqnarray*}
(r_B \ecirc f \ecirc s_A) \ecirc (r_A \ecirc h \ecirc s_B) &\exoeq & r_B \ecirc f \ecirc (s_A \ecirc r_A) \ecirc h \ecirc s_B\\
&\exoeq & r_B \ecirc (f \ecirc h) \ecirc s_B \\
&\exoeq & r_B \ecirc s_B\\
&\exoeq & \id_{RB}
\end{eqnarray*}
Then taking $\exotoid$ of this exo-equality proves that \[(r_B \ecirc f \ecirc s_A) \circ (r_A \ecirc h \ecirc s_B) = \id_{RB}.\] The right-inverse identity can be proved similarly.

ii. We need to prove that $r_C \ecirc (g \ecirc f) \ecirc s_A$ is an equivalence. This map is path-homotopic to $(r_C \ecirc g \ecirc s_B) \circ (r_B \ecirc f \ecirc s_A)$. Since compositions and homotopies preserve equivalences, we are done.

iii. By assumption $r_B \ecirc f \ecirc s_A$ is an equivalence. For all $c:RA$, we have
\[r_B(f(s_A(c))) \exoeq r_B(f'(s_A(c)))\] because $f(x)\exoeq f'(x)$ for all $x:A$. Since homotopies preserve equivalences, we get $r_B \ecirc f' \ecirc s_A$ is an equivalence.
\end{proof}

We also have other useful properties for fibrant-equivalences. For example, consider the following diagram between fibrant exo-types.
\begin{center}
\begin{tikzcd}
A \arrow[rr, "f"] \arrow[rrdd, "f\ecirc g"'] &  & B \arrow[dd, "g"] \\
                                              &  &                   \\
                                              &  & C                
\end{tikzcd}
\end{center}
Then \textbf{2-out-of-3 property} says if two of the three maps $f$, $g$ and the composite $g\ecirc f$ are fibrant-equivalences, then so is the third. For another example, consider the commutative diagram between fibrant exo-types.
\begin{center}
\begin{tikzcd}
A \arrow[rr, "f"] \arrow[dd, "g"'] &  & B \arrow[dd, "g'"] \\
                                   &  &                    \\
A' \arrow[rr, "f'"']               &  & B'                
\end{tikzcd}
\end{center}
Then \textbf{3-out-of-4 property} says if three of the four maps $f$, $f'$, $g$, $g'$ are exo-equivalences, then so is the fourth. 

\textbf{Agda Side.} The folder \texttt{Coercion} in the library \cite{agdalib} contains the formalizations of all definitions and propositions in this section. Note that in \cite{UPpaper}, when an exo-type $A$ is fibrant, its fibrant match is also assumed to be $A$ because this paper uses $\mathtt{isFibrant}(A):=\sum^e_{RA:\U} (A \exoeq RA)$ as the definition of fibrancy. If we assume axiom T3\footnote{Axiom T3 says that if an exotype $A$ is isomorphic to a type $B$, then $A$ is itself a (fibrant) type \cite{2ltt}.}, these two are logically equivalent. The main purpose of our choices becomes clearer in the formalization. In the case $A\exoeq B$ where $A$ is an exo-type, and $B$ is a type, we can still define an isomorphism between them by transporting their terms under this equality, but if we have $A\cong B$, we get the isomorphism directly. In this way, we gain practical advantages. For example, the notion of fibrant-equivalence is new, and it is designed for this practical purposes. Also, with this approach, all proofs provided here become the same as in their formalizations. Thus, there is no gap between language and symbolism.

\subsection{Cofibrant exo-types}

In this section, a weaker definition than fibrancy is given. We also provide a new (but logically equivalent) characterization of it. Note that if a property, that is defined for types initially, is attributed to a fibrant exo-type, we emphasize that the property belongs to the fibrant match of the exo-type.

\begin{defn}[\cite{2ltt} Corollary 3.19(i)]\label{cofib-defn} Let $A:\U^e$ be an exo-type. We call it \textbf{cofibrant} if the following holds 
\begin{itemize}
\item For any type family $Y : A \rightarrow \U$ over $A$, the exo-type $\prod^e_{a:A} Y(a)$ is fibrant,
\item In the case above, if $Y(a)$ is contractible for each $a:A$, then so is the fibrant match of $\prod^e_{a:A} Y(a)$.
\end{itemize}
\end{defn}

The following gives a logically equivalent definition of cofibrancy. Attention should be paid to the use of $=$ and $\exoeq$ in order to indicate whether the terms belong to the type or the exo-type.

\begin{prop}\label{funext-cofib}
Let $A:\U^e$ be an exo-type such that for any type family $Y : A \rightarrow \U$ over $A$, the exo-type $\prod^e_{a:A} Y(a)$ is fibrant. Then the following are equivalent:
\begin{itemize}
\item[i.] In the case above, if $Y(a)$ is contractible for each $a:A$, then so is its fibrant match of $\prod^e_{a:A} Y(a)$, namely $A$ is cofibrant.
\item[ii.]{\textbf{(Funext for cofibrant types).}} In the case above, for any $f,g : \prod^e_{a:A} Y(a)$ if $f(a)=g(a)$ for each $a:A$, then $r(f)=r(g)$ where $FM : \U$ and
\begin{center}
\begin{tikzcd}
\prod^e_{a:A} Y(a) \arrow[rr, "r", bend left] & \cong & FM \arrow[ll, "s", bend left]
\end{tikzcd}.
\end{center}
\end{itemize}
\end{prop}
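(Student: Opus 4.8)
The plan is to prove the two implications separately, in the familiar style of deriving function extensionality from ``weak function extensionality'': in both directions the bridge is an auxiliary type family of singletons, and the real work is bookkeeping of $\exoeq$ versus $=$.

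\emph{(i) $\Rightarrow$ (ii).} Suppose we are given $f,g:\prod^e_{a:A}Y(a)$ together with a witness $p:\prod^e_{a:A}(f(a)=g(a))$; we must produce a path $r(f)=r(g)$ in $FM$. I would introduce the type family $Y':A\to\U$ with $Y'(a):=\sum_{y:Y(a)}(f(a)=y)$. Each $Y'(a)$ is contractible with center $(f(a),\refl)$, and by the standing fibrancy hypothesis $\prod^e_{a:A}Y'(a)$ is fibrant; write $FM':\U$ for its fibrant match, with coercions $r':\prod^e_{a:A}Y'(a)\to FM'$ and $s':FM'\to\prod^e_{a:A}Y'(a)$. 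Since (i) is just the assertion that $A$ is cofibrant, i.e. the same statement for an arbitrary type family, it applies to $Y'$ and tells us $FM'$ is contractible. Now set $\phi:=r\circ\pi\circ s':FM'\to FM$, where $\pi:\prod^e_{a:A}Y'(a)\to\prod^e_{a:A}Y(a)$ is postcomposition with the first projection, $\pi(h):=(a\mapsto\pi_1(h(a)))$. The exo-terms $u:=(a\mapsto(f(a),\refl))$ and $v:=(a\mapsto(g(a),p(a)))$ of $\prod^e_{a:A}Y'(a)$ satisfy $r'(u)=r'(v)$ in $FM'$ because any two points of a contractible type are equal, so applying $\ap_\phi$ gives $\phi(r'(u))=\phi(r'(v))$. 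It then remains to identify $\phi(r'(u))$ with $r(f)$ and $\phi(r'(v))$ with $r(g)$: using $s'\ecirc r'\exoeq\id$, the $\eta$-rule $\lambda a.f(a)\exoeq f$ for exo-$\prod$, and the computation rule for $\pi_1$, one obtains $\phi(r'(u))\exoeq r(f)$, and then $\exotoid$ converts this exo-equality between two terms of the genuine type $FM$ into an honest path; symmetrically for $v$. Chaining the three equalities yields $r(f)=r(g)$.

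\emph{(ii) $\Rightarrow$ (i).} Assume each $Y(a)$ is contractible; I must show $FM$ is contractible. Let $c:\prod^e_{a:A}Y(a)$ be obtained by $\lambda$-abstracting the centers extracted from the contractibility witnesses, and propose $r(c):FM$ as the center. Given any $x:FM$, put $h:=s(x):\prod^e_{a:A}Y(a)$. Since $r\ecirc s\exoeq\id$ we have $r(h)\exoeq x$, hence $r(h)=x$ via $\exotoid$. On the other hand, contractibility of each $Y(a)$ gives $c(a)=h(a)$ for every $a:A$, so (ii) applied to $c$ and $h$ delivers $r(c)=r(h)$. Composing, $r(c)=x$ for all $x$, so $FM$ is contractible; that is, $A$ is cofibrant.

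The diagram chases here are routine, and the choice of $Y'$ is the standard one. The step I expect to be the main obstacle — and which is precisely the sort of thing the paper insists must be handled explicitly in the Agda formalization — is the identification $\phi(r'(u))\exoeq r(f)$ in the first implication. There one must keep scrupulous track of which equations are exo-equalities $\exoeq$ (produced by the exo-isomorphism $s'\ecirc r'\exoeq\id$ and by $\eta$-expansion) and which are paths $=$ in the fibrant type $FM$ (produced by $\ap_\phi$), and insert $\exotoid$ at exactly the point where an exo-equality between two terms of a genuine type is turned into a path; silently conflating the two, as an informal argument might, is exactly what a proof assistant will not allow.
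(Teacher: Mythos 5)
Your proposal is correct and takes essentially the same route as the paper's proof: the same singleton family $Y'(a)$ (up to the orientation of the paths), the same contractibility argument giving $r'(u)=r'(v)$ in the fibrant match $FM'$, the same transfer map $r\circ\pi\circ s'$ applied via $\ap$, with $\exotoid$ inserted at exactly the points the paper uses it, and an identical argument for (ii) $\Rightarrow$ (i). No gaps.
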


\begin{proof}
(i $\Rightarrow$ ii) Let $f,g : \prod^e_{a:A} Y(a)$ be such that $t_a : f(a)=g(a)$ for each $a:A$. Consider another type family $Y' : A \rightarrow \U$ defined as \[Y'(a):=\sum_{b:B(a)} b=f(a).\] Then both $f':=\lambda a.(f(a),\refl)$ and $g':=\lambda a. (g(a),t_a^{-1})$ are terms in $\prod^e_{a:A}Y'(a)$. By our assumptions, there is a $FM':\U$ such that
\begin{center}
\begin{tikzcd}
\prod^e_{a:A} Y'(a) \arrow[rr, "r'", bend left] & \cong & FM' \arrow[ll, "s'", bend left]
\end{tikzcd}.
\end{center}
Since the type of paths at a point is contractible (Lemma 3.11.8 in \cite{hott}), we have each $Y'(a)$ is contractible. By the assumption (i), we get $FM'$ is contractible, and hence 
\begin{equation}\label{eqprop3}
r'(f)=r'(g).
\end{equation}
Using this, we have the following chain of identities:
\[ r(f)=r(\pi_1(s'(r'(f'))))=r(\pi_1(s'(r'(g'))))=r(g). \] 
The first (and the third, by symmetry) identity is obtained as follows: Because $r'$ and $s'$ are exo-inverses of each other, we have \[(a:A)\rightarrow \pi_1(f'(a))\exoeq \pi_1 ((s'(r'(f'))) (a)).\] Thus, by $\funext^e$, we get $\pi_1(f')\exoeq \pi_1 ((s'(r'(f'))))$. Then we apply\footnote{We mean the usual $\ap$ operation in HoTT by saying ``applying a function to an identity (or exo-equality)". } $r$ to the equality, and make it an identity via $\exotoid$ because the terms are in $FM'$ which is a type. 

The second identity is obtained by the applying to the function \[\lambda x. r (\lambda a. \pi_1((s' x) (a))) : FM' \rightarrow FM\] to the identity \ref{eqprop3}, so we are done. 

Note that even if $r$ has exo-type domain and $s'$ has exo-type codomain, we can compose these in a way that the resulting map is from a type to a type. Thus, we can apply it to an identity.

(ii $\Rightarrow$ i) Suppose $Y(a)$ is contractible for each $a:A$. We want to show that $FM$ is contractible where
\begin{center}
\begin{tikzcd}
\prod^e_{a:A} Y(a) \arrow[rr, "r", bend left] & \cong & FM \arrow[ll, "s", bend left]
\end{tikzcd}.
\end{center}

Let $b_a :Y(a)$ be the center of contraction for each $a:A$. Then this gives a function $f:=\lambda a. b_a :\prod^e_{a:A} Y(a)$. For any $x:FM$, since $f(a)=b_a=s(x)(a)$ for each $a:A$ by contractibility asumption, we get $r(f)=r(s(x))$. Also, applying $\exotoid$ to the exo-equality $r(s(x))\exoeq x$, we get $r(s(x))=x$, and transitivity of $=$ yields $r(f)=x$. Therefore, $FM$ is contactible with the center of contraction $r(f)$.
\end{proof}

Cofibrant exo-types have the following properties.
\begin{prop}[\cite{2ltt}]\label{cofibrant-prop} 
\item[i.] All fibrant exo-types are cofibrant.
\item[ii.] If $A$ and $B$ are exo-types such that $A\cong B$, and if $A$ is cofibrant, then $B$ is cofibrant. 
\item[iii.] $\emptype^e$ is cofibrant, and if $A,B:\U^e$ are cofibrant, then so are $A+^eB$ and $A\times^eB$. In particular, all exo-finite exo-types are cofibrant.
\item[iv.] If $A:\U^e$ is cofibrant and $B:A\rightarrow\U^e$  is such that each $B(a)$ is cofibrant, then $\sum^e_{a:A} B(a)$ is cofibrant.
\end{prop}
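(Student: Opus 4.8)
The plan is to verify, for each of the four items, the two clauses of Definition \ref{cofib-defn}: that $\prod^e_{a:A}Y(a)$ is fibrant for every type family $Y$, and that its fibrant match is contractible whenever every $Y(a)$ is. In each case the fibrancy clause is obtained by transporting the $\prod^e$ across exo-isomorphisms until it becomes a $\prod^e$ over a genuine type of $\U$, where Theorem \ref{lemma211}(iii) and Proposition \ref{fibrant} apply, and the contractibility clause by tracking contractibility (preserved along exo-isomorphisms via Lemma \ref{isotoeq}, since equivalences preserve contractibility) through the same chain. I would first record the auxiliary fact that every type $A:\U$ is cofibrant: for $Y:A\to\U$, Theorem \ref{lemma211}(iii) gives $\prod^e_{a:A}Y(a)\cong\prod_{a:A}Y(a)$, which is a type, and a $\prod$ of contractible types is contractible by $\funext$. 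Item (i) then follows from (ii): if $A$ is fibrant then $A\cong RA$ for some $RA:\U$, which is cofibrant, so $A$ is cofibrant by (ii). For (ii), given $A\cong B$ and $Y:B\to\U$, precomposition with the exo-iso $f:A\to B$ gives, by functoriality of $\prod^e$ under an exo-iso of the domain, $\prod^e_{b:B}Y(b)\cong\prod^e_{a:A}Y(f(a))$; cofibrancy of $A$ makes the right side fibrant, Proposition \ref{fibrant}(iv) transfers this to the left side, and if each $Y(b)$ is contractible the same transfer carries the contractible fibrant match of $\prod^e_{a:A}Y(f(a))$ to a contractible match of $\prod^e_{b:B}Y(b)$.

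For (iii): $\emptype^e\cong\emptype$ by Theorem \ref{lemma211}(v), and $\emptype$ is a type, hence cofibrant, so $\emptype^e$ is cofibrant by (ii). For $A+^eB$ with $A,B$ cofibrant and $Y:A+^eB\to\U$, the universal property of the coproduct exo-type — proved by defining the obvious back-and-forth maps and checking the round-trips with $\funext^e$ and induction on $+^e$ — gives an exo-iso $\prod^e_{x:A+^eB}Y(x)\cong\big(\prod^e_{a:A}Y(\inl^e a)\big)\times^e\big(\prod^e_{b:B}Y(\inr^e b)\big)$; both factors are fibrant by cofibrancy of $A$ and $B$, their product is fibrant by Proposition \ref{fibrant}(iii), so the left side is fibrant by Proposition \ref{fibrant}(iv), and contractibility of each $Y(x)$ makes both factors' matches contractible, hence the product, hence the match of $\prod^e_{x}Y(x)$. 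The product $A\times^eB$ is the instance $\sum^e_{a:A}B$ of item (iv) with the constant (cofibrant) family $B$ — or one proves it directly by the same currying argument used for (iv). Finally, $\NN^e_{<0}=\emptype^e$ is cofibrant, $\unit^e$ is fibrant by Theorem \ref{lemma211}(i) hence cofibrant by (i), and $\NN^e_{<n+1}=\NN^e_{<n}+^e\unit^e$, so induction on $n:\NN^e$ shows every exo-finite exo-type is cofibrant.

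For (iv), let $A$ be cofibrant, $B:A\to\U^e$ come with a term witnessing that each $B(a)$ is cofibrant, and $Y:\sum^e_{a:A}B(a)\to\U$. Currying gives $\prod^e_{p:\sum^e_{a:A}B(a)}Y(p)\cong\prod^e_{a:A}\prod^e_{b:B(a)}Y(a,b)$. Cofibrancy of $B(a)$ makes each $\prod^e_{b:B(a)}Y(a,b)$ fibrant; applying ${\pi_1}^e$ to the resulting term of $\prod^e_{a:A}\isFibrant\big(\prod^e_{b:B(a)}Y(a,b)\big)$ yields a genuine type family $Z:A\to\U$ together with a fiberwise exo-iso $\prod^e_{b:B(a)}Y(a,b)\cong Z(a)$. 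Functoriality of $\prod^e$ in its family argument then gives $\prod^e_{a:A}\prod^e_{b:B(a)}Y(a,b)\cong\prod^e_{a:A}Z(a)$, which is fibrant because $A$ is cofibrant; chaining the exo-isos and Proposition \ref{fibrant}(iv) settles the fibrancy clause. For contractibility, if every $Y(a,b)$ is contractible then every $Z(a)$ is contractible by cofibrancy of $B(a)$, so the fibrant match of $\prod^e_{a:A}Z(a)$ is contractible by cofibrancy of $A$, and this transfers along the exo-isos to the fibrant match of $\prod^e_{p}Y(p)$.

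The routine but genuinely load-bearing part — and where the ``no abuse of notation'' discipline bites — is the exo-isomorphism bookkeeping: the currying and coproduct exo-isos whose round-trips must be verified with $\funext^e$ (and $\UIP$), and the various functoriality statements for $\prod^e$ used in (ii) and (iv), especially the step in (iv) where a type family that is itself assembled as a $\prod^e$ (such as $a\mapsto\prod^e_{b:B(a)}Y(a,b)$) has to be replaced by the honest family $Z:A\to\U$ of its chosen fibrant matches before the hypothesis that $A$ is cofibrant can be invoked. One must also keep explicit that ``the fibrant match is contractible'' is unambiguous: any two matches are exo-isomorphic, hence equivalent by Lemma \ref{isotoeq}, hence simultaneously contractible (and identified up to $=$ by Proposition \ref{fibrant}(v)). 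I expect no conceptual obstacle beyond carefully threading these exo-isomorphisms through, which is exactly the gap a paper-level argument would paper over.
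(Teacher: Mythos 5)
Your proposal follows the paper's proof almost step for step: (ii) via the precomposition exo-isomorphism $\prod^e_{b:B}Y(b)\cong\prod^e_{a:A}Y(f(a))$, (iii) via the splitting of $\prod^e$ over $+^e$ into a product and induction on $n:\NN^e$ for the finite types, $\times^e$ as the constant-family case of (iv), and (iv) via currying plus extracting the honest family $Z:A\rightarrow\U$ of fibrant matches before invoking cofibrancy of $A$ — this last point, which you flag explicitly, is exactly how the paper's diagram with $\prod^e_{a:A}FM_{B(a)}$ works. Your only structural deviation is deriving (i) from (ii) together with the auxiliary fact that genuine types are cofibrant, whereas the paper proves (i) directly from Proposition \ref{fibrant}(iii); both are fine and there is no circularity in your ordering.

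There is, however, one step that fails as written: the claim that $\emptype^e\cong\emptype$ ``by Theorem \ref{lemma211}(v)''. That theorem only asserts the existence of the maps, and only its items (i)--(iii) are exo-isomorphisms; Remark \ref{A1-axiom} stresses that inverses of the maps (iv)--(vi) are additional axioms valid in some models, not derivable in the theory. Indeed, an inverse $\emptype\rightarrow\emptype^e$ would require eliminating out of the fibrant type $\emptype$ into a target that is not known to be fibrant, which the elimination rules forbid. So cofibrancy of $\emptype^e$ cannot be obtained from item (ii) plus cofibrancy of the type $\emptype$ along a nonexistent isomorphism. The repair is the paper's direct argument: for any $Y:\emptype^e\rightarrow\U$ one shows $\prod^e_{c:\emptype^e}Y(c)\cong\unit^e\cong\unit$ (the maps and round trips use $\emptype^e$-elimination and $\funext^e$), so this $\prod^e$ is fibrant with fibrant match $\unit$, which is contractible irrespective of $Y$; that settles both clauses of cofibrancy at once, and the rest of your argument then goes through unchanged.
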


\begin{proof}
i. Let $A$ be a fibrant exo-type, say \begin{tikzcd}
A \arrow[r, "r", shift left] & RA \arrow[l, "s", shift left]
\end{tikzcd}. Then for any $Y:A\rightarrow \U$, by Proposition \ref{fibrant}, we know that $\prod^e_{a:A} Y(a)$ is fibrant with the fibrant match $\prod_{c:RA} Y(s(c))$. If each $Y(a)$ is contractible, then so is $\prod_{c:RA} Y(s(c))$ because this holds for ordinary types. 

ii. Let $f:A\rightarrow B$ be an exo-isomorphism with the inverse $g:B \rightarrow A$. Let $Y:B \rightarrow \U$. Consider the following diagram.
\begin{center}
\begin{tikzcd}
\prod^e_{b:B} Y(b) \arrow[rr, "u", shift left] &  & \prod^e_{a:A} Y(f(a)) \arrow[ll, "v", shift left] \arrow[rr, "r_A", shift left] &  & FM_A \arrow[ll, "s_A", shift left]
\end{tikzcd}
\end{center}
The maps $u,v$ form isomorphisms by the $\prod$-functoriality property. The isomorphisms $r_A,s_A$ are obtained by the cofibrancy of $A$. Therefore, we get $\prod^e_{b:B} Y(b)$ is fibrant. Also, if each $Y(b)$ is contractible, then in particular $Y(f(a))$ is contractible for each $a:A$. So $FM_A$ is contractible since $A$ is cofibrant. This proves that $B$ is cofibrant. 

iii. In the case of $\emptype^e$, for any $Y:\emptype^e \rightarrow \U$, it's easy to show that $\prod^e_{c:\emptype^e}Y(c)\cong \unit$. This satisfies two conditions necessary for being cofibrant simultaneously. 

For an exo-coproduct, if $A,B:\U^e$ are cofibrant, and $Y:A+^e B \rightarrow \U$, we have the following diagram. 
\begin{center}
\begin{tikzcd}
\prod^e_{c:A+^e B} Y(c) \arrow[rrr, "u", shift left] \arrow[dd, no head] \arrow[dd, no head, shift right] &  &  & (\prod^e_{a:A} Y(\inl(a)))\times^e (\prod^e_{b:B} Y(\inr(b))) \arrow[lll, "v", shift left] \arrow[dd, "r_A\times r_B"', bend right] \\
                                                                                                            &  &  &                                                                                                                                       \\
\prod^e_{c:A+^e B} Y(c)                                                                                   &  &  & FM_A \times FM_B \arrow[uu, "s_A \times s_B"', bend right]                                                                           
\end{tikzcd}
\end{center}

Here, the maps $u,v$ have their own obvious definitions, and they are isomorphisms. (Actually, the analogous statement is true for types\footnote{See Exercise 2.9 in the HoTT Book \cite{hott}}, and the exo-type version can be proven similarly in terms of $\exoeq$.) The exo-isomorphisms $r_A$ and $r_B$ come from cofibrancy of $A$ and $B$, respectively. It is easy to see that $r_A \times r_B$ is an exo-isomorphism. Therefore, $(r_A \times r_B) \circ^e u$ is the exo-isomorphism we searched for. If each $Y(c)$ is contractible, in particular, both $Y(\inl(a))$ and $Y(\inr(b))$ are contractible. By the assumption, this means that both $FM_A$ and $FM_B$ are contractible, and so is $FM_A \times FM_B$. This proves that $A+^e B$ is cofibrant.

The case of an exo-product is a particular case of $\sum^e$-exo-types. For exo-finite exo-types, using induction on $\NN^e$, it follows from that both $\emptype^e$ and $\unit^e$ are cofibrant, and cofibrancy is preserved under exo-coproducts. 

iv. Suppose $A$ is cofibrant and each $B(a)$ is cofibrant. Let $Y: \sum^e_{a:A} B(a) \rightarrow \U$ and consider the following diagram. 

\begin{tikzcd}
\prod^e_{c:\sum^e_{a:A} B(a)} Y(c) \arrow[rr, "u", shift left=2] &  & {\prod^e_{a:A} \prod^e_{b:B(a)} Y(a,b)} \arrow[ll, "v"] \arrow[dd, "r_{B(a)}"', shift right=2] &  &                                    \\
                                                                 &  &                                                                                                &  &                                    \\
                                                                 &  & \prod^e_{a:A} FM_{B(a)} \arrow[uu, "s_{B(a)}"'] \arrow[rr, "r_A", shift left]                  &  & FM_A \arrow[ll, "s_A", shift left]
\end{tikzcd}

The maps $u,v$ form the usual isomorphism by the expansion of $\sum^e$-exo-type. (Actually, the same is true for types, and the exo-type version can also be proven similarly in terms of $\exoeq$.) The maps $r_{B(a)}$ and $s_{B(a)}$ form an isomorphism by the cofibrancy of $B(a)$ and the $\prod$-functoriality property. The last pair of maps $r_A,s_A$ form an isomorphism by the cofibrancy of $A$. Therefore, the exo-type in the top left corner is fibrant. Furthermore, if each $Y(c)$ is contractible, in particular $Y(a,b)$ is contractible, then each $FM_{B(a)}$ is contractible by the cofibrancy of $B(a)$. This proves that $FM_A$ is contractible by the cofibrancy of $A$. This proves that $\sum^e_{a:A} B(a)$ is cofibrant.
\end{proof}

\textbf{Agda Side.} The folder \texttt{Cofibration} in the library \cite{agdalib} is about the material in this section. The file \verb|Funext_for_cofibrant_types.agda| is Proposition \ref{funext-cofib}. This is one of the major contributions of Agda formalization to the theory. We need it while we try to formalize some proofs in the next section. We already know some functorial properties of dependent function types in the usual HoTT. When we have dependent exo-types $A : B \rightarrow \U$ for $A:\U^e$, the exo-type $\prod_{a:A} B(a)$ can be fibrant or not. However, we still need the functoriality rules for it, which are very useful in the applications\footnote{For example, see Figure \ref{sharp-sigma}.}. Then Proposition \ref{funext-cofib} addresses this need by giving another version of \texttt{funext}, and it works well enough.

\subsection{Sharp exo-types}

Another class of exo-types is the class of sharp ones. This was given in \cite{UPpaper} for the first time. As in the previous section, we'll give the proofs without any abuse of notation, and so their formalization is the same as in this paper. 

\begin{defn}[Def. 2.2 \cite{UPpaper}]
An exo-type $A$ is \textbf{sharp} if it is cofibrant, and and it has a “fibrant replacement”, meaning that there is a fibrant type $RA$ and a map $r : A \rightarrow RA$ such that for any family of types $Y : RA \rightarrow \U$, the precomposition map below 
is a fibrant-equivalence (recall Definition \ref{exoequiv}).
\begin{equation}\label{precomp}
(- \ecirc r):\prod_{c:RA} Y(c) \rightarrow {\prod_{a:A}}^e Y(r(a))
\end{equation}
\end{defn}

The following lemma gives another definition for sharp exo-types. First, we need an auxiliary definition.

\begin{defn}
Let $A,B: \U^e$ be two fibrant exo-types, and $f:A\rightarrow B$ be a map. Let $RA,RB:\U$ be such that $A\cong RA$ and $B\cong RB$. Take $s_A:RA \rightarrow A$ and $r_B:B \rightarrow RB$ as these isomorphisms. Then $f$ has a \textbf{fibrant-section} if \[r_B\ecirc f \ecirc s_A : RA \rightarrow RB\] has a section, namely, there is $g : RB \rightarrow RA$ such that $(r_B\ecirc f \ecirc s_A) \ecirc g = \id_{RB}$.
\end{defn}

\begin{lemma}[\cite{UPpaper}]\label{equiv-defsharp}
Let $A$ be a cofibrant exo-type, $RA$ a type, and $r:A \rightarrow RA$ a map. The following are equivalent:
\begin{enumerate}
\item The map (\ref{precomp}) is a fibrant-equivalence for any $Y:RA\rightarrow \U$, so that $A$ is sharp.
\item The map (\ref{precomp}) has a fibrant-section for any $Y:RA\rightarrow \U$.
\item The map (\ref{precomp}) is a fibrant-equivalence whenever $Y:=\lambda x. Z$ for a constant type $Z:\U$, hence $RA \rightarrow Z$ is equivalent to the fibrant match of $A \rightarrow Z$.
\end{enumerate}
\end{lemma}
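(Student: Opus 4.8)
The plan is to prove the cycle $(1)\Rightarrow(3)\Rightarrow(2)\Rightarrow(1)$. The implication $(1)\Rightarrow(3)$ is immediate: statement $(3)$ is just statement $(1)$ restricted to constant type families $Y:=\lambda x.\,Z$, and the closing clause of $(3)$ is only an unfolding of the definition of fibrant-equivalence. (Note that $(1)\Rightarrow(2)$ is equally immediate, since an equivalence has a section, but routing the cycle through $(2)$ avoids proving $(3)\Rightarrow(1)$ directly, which is harder than $(3)\Rightarrow(2)$.) So the substance lies in $(2)\Rightarrow(1)$ and $(3)\Rightarrow(2)$.

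For $(2)\Rightarrow(1)$ I would use the elementary fact that a map of types which admits a section and reflects identities (i.e.\ $f(a)=f(b)$ implies $a=b$) is an equivalence. Fix $Y:RA\to\U$ and write $\rho_Y:\prod_{c:RA}Y(c)\to FM_Y$ for the map induced on fibrant matches by $(-\ecirc r)$; hypothesis $(2)$ gives a section of $\rho_Y$. To see $\rho_Y$ reflects identities, take $g,g':\prod_{c:RA}Y(c)$ together with a path $\rho_Y(g)=\rho_Y(g')$ and, for each $a:A$, apply $\ap$ of the composite $FM_Y\to Y(r(a))$ (evaluation at $a$ precomposed with the coercion $FM_Y\cong\prod^e_{a:A}Y(r(a))$), using $\exotoid$ to dispose of the exo-equalities produced by the coercion; this yields identities $t_a:g(r(a))=g'(r(a))$ in the genuine types $Y(r(a))$. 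Now apply hypothesis $(2)$ to the family $P(c):=\big(g(c)=g'(c)\big):RA\to\U$: the term $\lambda a.\,t_a$ of $\prod^e_{a:A}P(r(a))$, transported into $FM_P$ and hit with the section of $\rho_P$, produces an element of $\prod_{c:RA}P(c)$, i.e.\ a pointwise identity between $g$ and $g'$; then $\funext$ gives $g=g'$. Since $\rho_Y$ has a section and reflects identities, it is an equivalence, i.e.\ $\rho_Y$ is a fibrant-equivalence and $A$ is sharp.

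For $(3)\Rightarrow(2)$ I would reduce the dependent case to the constant one via the total space. Fix $Y:RA\to\U$, put $T:=\sum_{c:RA}Y(c)$ with first projection $p:T\to RA$, and note that, since $T$ and $RA$ are types and $A$ is cofibrant, the exo-types $A\rightarrow^e T$ and $A\rightarrow^e RA$ are fibrant and, by $(3)$, the precomposition maps $(-\ecirc r)$ into them are fibrant-equivalences intertwining $p\circ-$ with $p\ecirc-$. Given $h:FM_Y$, transport it to $\tilde h:\prod^e_{a:A}Y(r(a))$ and set $\hat h:=\lambda a.\,(r(a),\tilde h(a)):A\rightarrow^e T$. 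Pulling $\hat h$ back through the fibrant-equivalence $(-\ecirc r):(RA\to T)\to(A\rightarrow^e T)$ produces $\bar h:RA\to T$ with $\bar h\ecirc r$ identified with $\hat h$; evaluating and applying $p$ shows that $(p\circ\bar h)\ecirc r$ and $\id_{RA}\ecirc r$ agree pointwise, so by $\funext$ for cofibrant types (Proposition \ref{funext-cofib}) together with the fact that the fibrant-equivalence $(-\ecirc r):(RA\to RA)\to(A\rightarrow^e RA)$ reflects identities on fibrant matches, we obtain $p\circ\bar h=\id_{RA}$. Hence $\bar h$ is a section of the fibration $p$ and corresponds to an element $\sigma(h):\prod_{c:RA}Y(c)$; a final check, again via $\funext$ for cofibrant types, gives $\rho_Y(\sigma(h))=h$, so $\sigma$ is the required fibrant-section.

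The main obstacle is the bookkeeping in $(3)\Rightarrow(2)$ (and, to a lesser extent, in $(2)\Rightarrow(1)$): a fibrant-equivalence is only a statement about the induced maps on fibrant matches, never an exo-isomorphism of the exo-types themselves, and the path type is available only on genuine types; so each step that "transports $\hat h$ to $\bar h$" or "compares two functions pointwise" must be routed through the fibrant matches and the coercion isomorphisms, with $\exotoid$ inserted exactly where a composite lands in an actual type. In particular, verifying $\rho_Y(\sigma(h))=h$ requires reconciling the path $p\circ\bar h=\id_{RA}$, produced abstractly from funext-for-cofibrant-types, with the pointwise data coming from $\hat h$, which is the one place where a small coherence computation seems unavoidable.
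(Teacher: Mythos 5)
Your proposal is correct and follows essentially the same route as the paper's proof: for $(2)\Rightarrow(1)$ you combine the section with identity-reflection proved via the auxiliary family $P(c):=(g(c)=g'(c))$, exactly as the paper does with $Y'$, and for $(3)\Rightarrow(2)$ you use the total space $\sum_{c:RA}Y(c)$, the constant instance $Z:=RA$ to get $p\circ\bar h=\id_{RA}$, and $\funext$ for cofibrant types, which mirrors the paper's construction of the section (the ``small coherence computation'' you flag is handled there by the transport/lifting lemma). The only difference is the immaterial choice to route the equivalences as the cycle $(1)\Rightarrow(3)\Rightarrow(2)\Rightarrow(1)$ rather than the paper's four separate implications.
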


\begin{proof}
$(1 \Rightarrow 2)$ follows from the fact that an equivalence has a section. $(1 \Rightarrow 3)$ is trivial because 3 is a particular case of 1. 

$(2 \Rightarrow 1)$ Let $Y:RA\rightarrow \U$ be a family of types over $A$ and consider the relevant diagram. 
\begin{center}
\begin{tikzcd}
\prod_{c:RA} Y(c) \arrow[rr, "(-\ecirc r)"] \arrow[dd, no head] \arrow[dd, no head, shift left] &  & \prod_{a:A} Y(r(a)) \arrow[dd, "r_A"', shift right] \\
                                                                                                 &  &                                                     \\
\prod_{c:RA} Y(c) \arrow[rr, "\alpha:=r_A \ecirc (-\ecirc r)"]                                 &  & FM \arrow[uu, "s_A"', shift right]                 
\end{tikzcd}
\end{center}
By the assumption, we know that $\alpha$ has a section, say $\beta : FM \rightarrow \prod_{c:RA} Y(c)$, so that we have $\alpha \circ \beta = id_{FM}$. It is enough to show that $\beta \circ \alpha = \id_{\prod_{RA} Y}$ We claim that for any $f,g : \prod_{c:RA} Y(c)$, we have \[\alpha (f) = \alpha (g) \rightarrow f=g.\] If this is true, for any $h:\prod_{c:RA} Y(c)$, taking $f:=\beta(\alpha (h))$ and $g:=h$ we are done by $\funext$. Now, for the proof of the claim, consider another type family over $RA$ defined as $Y'(c):=(f(c)=g(c))$, so that we have another diagram 
\begin{center}
\begin{tikzcd}
\prod_{c:RA} Y'(c) \arrow[rr, "(-\ecirc r)"] \arrow[dd, no head] \arrow[dd, no head, shift left] &  & \prod_{a:A} Y'(r(a)) \arrow[dd, "r'_A"', shift right] \\
                                                                                                 &  &                                                     \\
\prod_{c:RA} Y'(c) \arrow[rr, "\alpha':=r'_A \ecirc (-\ecirc r)"]                                 &  & FM' \arrow[uu, "s'_A"', shift right]                 
\end{tikzcd}
\end{center}
and again $\alpha'$ has a section $\beta'$. If we assume $\alpha (f) = \alpha (g)$ for $f,g : \prod_{c:RA} Y(c)$, this is \[q:r_A (f \ecirc r) = r_A (g \ecirc r)\] by the definition. Define $T:=\lambda (a:A). p_a $ where $p_a:f(r(a))=g(r(a))$ is obtained by 
\begin{eqnarray*}
f(r(a))&=&(s_A(r_A(f \ecirc r))) (a)\\
&=&(s_A(r_A(g \ecirc r))) (a) = g(r(a)).
\end{eqnarray*}
The first and the last ones follow from the fact that $r_A$ and $s_A$ are inverses of each other. The middle identity is obtained by the applying $\lambda u. (s_A(u))(a)$ to the path $q$. Finally, we are done by using $\funext$ on $\beta'(s_A'(T)):\prod^e_{c:RA} f(c)=g(c)$. 

$(3 \Rightarrow 2)$ Let $Y:RA \rightarrow \U$ be a family of types over $RA$ and define $Z:=\sum_{c:RA} Y(c)$. By our assumptions, we have the following diagrams. Our aim is to show that $\alpha$ has a section.
\begin{center}
\begin{tikzcd}
RA \rightarrow Z \arrow[rr, "(-\ecirc r)"] \arrow[dd, no head, shift right] \arrow[dd, no head] &        & A \rightarrow Z \arrow[dd, "r_Z"', shift right=2]                        &            & \prod_{c:RA}Y(c) \arrow[rr, "(-\ecirc r)"] \arrow[dd, no head, shift right] \arrow[dd, no head] &  & {\prod_{a:A}}^e Y(r(a)) \arrow[dd, "r_A"', shift right=2] \\
                                                                                                &        & \cong                                                                    & \text{and} &                                                                                                 &  & \cong                                                     \\
RA \rightarrow Z \arrow[rr, "\alpha_Z:=r_Z\ecirc (-\ecirc r)", shift left=2]                                           & \simeq & FM_Z \arrow[uu, "s_Z"', shift right=2] \arrow[ll, "\beta_Z", shift left=2] &            & \prod_{c:RA}Y(c) \arrow[rr, "\alpha:=r_A\ecirc (-\ecirc r)"]                                                           &  & FM \arrow[uu, "s_A"', shift right=2]                     
\end{tikzcd}
\end{center}
For any $f:\prod^e_{a:A} Y(r(a))$ define $f':A \rightarrow Z$ by $f'(a):=(r(a),f(a))$. For all $a:A$, we have $q_a: \pi_1(\beta_Z(r_Z (f'))\, (r (a)))= r(a)$ obtained by applying $\pi_1$ to
\begin{eqnarray*}
(\beta_Z(r_Z (f')))\, (r (a))&=& s_Z (\alpha_Z (\beta_Z (r_Z(f')))) (a)\\
&=& s_Z (r_Z(f')) (a)\\
&=& f'(a)
\end{eqnarray*}
The first comes from the fact that $u\ecirc r \exoeq (s_Z \ecirc \alpha_Z) (u)$ for all $u:RA \rightarrow Z$. The second and the third are followed by the exo-isomorphisms $r_Z,s_Z$ and the equivalence $\alpha_Z,\beta_Z$. Consider another diagram obtained by the assumption 3 for $Z:=RA$.
\begin{center}
\begin{tikzcd}
RA \rightarrow RA \arrow[dd, no head, shift right] \arrow[dd, no head] \arrow[rr, "(-\ecirc r)"] &        & A \rightarrow RA \arrow[dd, "r_{RA}"', shift right=2]                               \\
                                                                                                 &        & \cong                                                                               \\
RA \rightarrow RA \arrow[rr, "\alpha_{RA}:=r_{RA}\ecirc (-\ecirc r)", shift left=2]                                        & \simeq & FM_{RA} \arrow[uu, "s_{RA}"', shift right=2] \arrow[ll, "\beta_{RA}", shift left=2]
\end{tikzcd}
\end{center}
By $\funext$ for cofibrant exo-types, since \[\lambda a . q_a:\prod^e_{a:A} \pi_1(\beta_Z(r_Z (f'))\, (r (a)))= r(a),\] we get \[r_{RA} (\pi_1 \ecirc (\beta_Z(r_Z (f'))) \ecirc r) = r_{RA} (r).\]
Then for all $f:\prod^e_{a:A} Y(r(a))$, we have $p_f : \pi_1 \ecirc (\beta_Z(r_Z(f'))) = \id_{RA\rightarrow RA}$ obtained by
\begin{eqnarray*}
\pi_1(\beta_Z(r_Z(f')))&=& \beta_A(\alpha_A (\pi_1 \ecirc (\beta_Z(r_Z (f')))))\\
&=& \beta_A(r_{RA} (\pi_1 \ecirc (\beta_Z(r_Z (f'))) \ecirc r))\\
&=& \beta_A(r_{RA}(r))\\
&=& \beta_A(\alpha_A (\id_{RA\rightarrow RA}))\\
&=&\id_{RA\rightarrow RA}.
\end{eqnarray*}
These come from exo-isomorphisms and/or equivalences. Finally, we define the section map $\beta : FM \rightarrow \prod_{c:RA} Y(c)$ by the following\footnote{Here, $\tr$ and $\happly$ are the operations related by the identity types, see \cite{hott} for the details.}: \[\beta(x):= \lambda (c:RA).\, \tr\,\, (\happly p_{s_A(x)} \,\,c)\,\, \pi_2(\beta_Z(r_Z((s_A(x)')))(c)).\] To finish the proof, we should show $\alpha \circ \beta = \id_{FM}$. Let $x:FM$, then for all $a:A$, we have 
\begin{equation}\label{auxeq}
\beta(x)(r(a))= s_A(x)(a).
\end{equation}
This comes from applying $\pi_2$ to the identity $(r(a),\beta(x)(r(a)))=(s_A (x))' (a)$ which is obtained by
\begin{eqnarray*}
(r(a),\beta(x)(r(a)))&=& \beta_Z(r_Z(s_A(x)'))\,(r(a))\\
&=& s_Z (\alpha_Z(\beta_Z(r_Z(s_A(x)'))))\, (a)\\
&=& s_Z(r_Z(s_A(x)'))(a)\\
&=& s_A(x)'(a)
\end{eqnarray*}
Here, the first identity is followed by the lifting property\footnote{Lemma 2.3.2 in the HoTT Book \cite{hott}} of the transport map, the others are obtained by exo-isomorphisms or equivalences. Now, by funext for cofibrant types, the identity (\ref{auxeq}) proves that $r_A(\beta(x)\ecirc r) =r_A(s_A(x))$. However, the left side is equal to $(\alpha \circ \beta) (x)$ by definition, and the right side is equal to $x$ due to the exo-isomorphism. Therefore, $\beta$ is a section for $\alpha$, which proves the statement 2.
\end{proof}

As in the cofibrant exo-types, the notion of sharpness has its own preservence rules. The following proposition gives these rules.

\begin{prop}[\cite{UPpaper}] The following are true:

\begin{itemize}\label{sharp-prop}
\item[i.] All fibrant exo-types are sharp.
\item[ii.] If $A$ and $B$ are exo-types such that $A\cong B$, and if $A$ is sharp, then $B$ is sharp. 
\item[iii.] $\emptype^e$ is sharp, and if $A$ and $B$ are sharp exo-types, then so are $A +^e B$ and $A \times^e B$.
\item[iv.] If $A$ is a sharp exo-type, $B : A \rightarrow \U$ is such that each $B(a)$ is sharp, then $\sum^e_{a:A} B(a)$ is sharp.
\item[v.] Each finite exo-type $\NN^e_{<n}$ is sharp.
\item[vi.] If $\NN^e$ is cofibrant, then it is sharp. 
\end{itemize}
\end{prop}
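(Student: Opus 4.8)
The plan is to prove (i)--(vi) in order, since (v) inducts on the earlier cases and (iii), (iv) reduce to them, and to route everything through the reformulations of sharpness in Lemma~\ref{equiv-defsharp} together with the closure properties of fibrant-equivalences in Proposition~\ref{exoequiv-prop}; note that cofibrancy of every exo-type involved is already supplied by Proposition~\ref{cofibrant-prop}, so only a fibrant replacement has to be exhibited each time. For (i), if $A$ is fibrant with match $RA$ and exo-isomorphism $r:A\to RA$, take $r$ itself as the replacement: for $Y:RA\to\U$ both $\prod_{c:RA}Y(c)$ and $\prod^e_{a:A}Y(r(a))$ are fibrant (the latter by cofibrancy), and $(-\ecirc r)$ is an exo-isomorphism (its inverse precomposes with the exo-inverse of $r$ and transports along the identity got from $\exotoid$ of $r(s(c))\exoeq c$), hence a fibrant-equivalence by Proposition~\ref{exoequiv-prop}(i). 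For (ii), given $\varphi:A\cong B$ with inverse $\psi$ and a replacement $(RA,r_A)$ of the sharp $A$, set $RB:=RA$ and $r_B:=r_A\ecirc\psi$; then $(-\ecirc r_B)=(-\ecirc\psi)\ecirc(-\ecirc r_A)$, the first factor an exo-isomorphism and the second a fibrant-equivalence by sharpness of $A$, so the composite is one by Proposition~\ref{exoequiv-prop}(ii).

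For (iii): $\emptype^e$ takes the null map $\emptype^e\to\emptype$, and both $\prod$-exo-types in the precomposition are exo-isomorphic to $\unit^e$, so $(-\ecirc r)$ is an exo-isomorphism. For $A+^eB$ with replacements $r_A:A\to RA$, $r_B:B\to RB$, take $RA+RB$ (a type, hence fibrant) with $r(\inl^e a):=\inl(r_A a)$ and $r(\inr^e b):=\inr(r_B b)$; the distributivity (exo-)isomorphisms already used in the proof of Proposition~\ref{cofibrant-prop}(iii) identify $(-\ecirc r)$ with $(-\ecirc r_A)\times(-\ecirc r_B)$, and a cartesian product of two fibrant-equivalences is again one (the fibrant match of an exo-product is the product of the matches by Proposition~\ref{fibrant}(iii), and products of equivalences are equivalences), so $A+^eB$ is sharp; $A\times^eB$ is the constant-family instance of (iv), or is proved identically. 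For (v), induct on $n:\NN^e$: $\NN^e_{<0}=\emptype^e$ is sharp by (iii), and $\NN^e_{<n+1}=\NN^e_{<n}+^e\unit^e$, where $\unit^e$ is fibrant (Theorem~\ref{lemma211}) hence sharp by (i), so the coproduct is sharp by (iii).

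The first substantive item is (iv). With $A$ sharp via $(RA,r_A)$ and each $B(a)$ sharp via $(RB(a),r_{B(a)})$, the first step is to descend the assignment $a\mapsto RB(a)$ to an honest type family $\overline{RB}:RA\to\U$ with $\overline{RB}(r_A(a))=RB(a)$ for all $a$: apply Lemma~\ref{equiv-defsharp}(3) for the constant family $\U$, so that $(-\ecirc r_A)$ induces an equivalence $RA\to\U \simeq (\text{fibrant match of }A\to\U)$, feed it the element $a\mapsto RB(a)$ of the fibrant exo-type $A\to\U$, and pass back through $\exotoid$. Then set $RS:=\sum_{c:RA}\overline{RB}(c)$ and $r:\sum^e_{a:A}B(a)\to RS$ by $r(a,b):=(r_A(a),\tr(\dots)(r_{B(a)}(b)))$, transporting $r_{B(a)}(b):RB(a)$ along $RB(a)=\overline{RB}(r_A(a))$. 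To verify the precomposition clause for $Y:RS\to\U$, distribute $\prod$ over $\sum$ on both sides so that $(-\ecirc r)$ splits into an outer $(-\ecirc r_A)$ for the type family $c\mapsto\prod_{d:\overline{RB}(c)}Y(c,d)$ (a fibrant-equivalence by sharpness of $A$) and fiberwise $(-\ecirc r_{B(a)})$'s (fibrant-equivalences by sharpness of $B(a)$), modulo the transports, which are exo-isomorphisms; then assemble with Proposition~\ref{exoequiv-prop}(ii)--(iii). Descending the family while tracking these transports is the main obstacle of this item.

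The deepest item is (vi), where ``$\NN^e$ cofibrant'' carries the whole argument. Take $R\NN^e:=\NN$ with $r:=f:\NN^e\to\NN$ from Theorem~\ref{lemma211}(vi), so $f(\zero^e)\equiv\zero$ and $f(\suc^e m)\equiv\suc(f(m))$. For $Z:\NN\to\U$ one must show $(-\ecirc f):\prod_{n:\NN}Z(n)\to\prod^e_{m:\NN^e}Z(f(m))$ is a fibrant-equivalence; both sides are fibrant (the right by cofibrancy of $\NN^e$), with fibrant match $FM_Z$ of the right. The structural input is the peeling exo-isomorphism $\prod^e_{m:\NN^e}W(m)\cong W(\zero^e)\times^e\prod^e_{m:\NN^e}W(\suc^e m)$, valid for any $W:\NN^e\to\U$ by the elimination rule of $\NN^e$ plus $\funext^e$; applied to $W:=Z\circ f$ and passed to fibrant matches (using $Z(f(\suc^e m))\equiv Z(\suc(f(m)))$ and that the fibrant match of an exo-product is the product of the matches) it yields an equivalence $FM_Z\simeq Z(\zero)\times FM_{Z^{(1)}}$ with $Z^{(1)}(k):=Z(\suc k)$. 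Since $FM_{(-)}$ now lands in genuine types, ordinary $\NN$-recursion (with the family a varying parameter) produces a candidate inverse $h:FM_Z\to\prod_{n:\NN}Z(n)$, with $h(x)(\zero)$ the first projection of the peeling equivalence and $h(x)(\suc n)$ obtained by recursing for $Z^{(1)}$ on its second projection; one then checks that $h$ is a two-sided inverse of the reflection of $(-\ecirc f)$ by induction on $n$, using compatibility of the peeling isomorphism with precomposition by $f$. The crux — and the reason cofibrancy is exactly the right hypothesis — is that only after reflecting the $\NN^e$-indexed exo-products into honest types can one run the $\NN$-recursion that builds the inverse, since plain 2LTT forbids eliminating $\NN$ directly into the non-fibrant exo-type $\prod^e_{m:\NN^e}Z(f(m))$.
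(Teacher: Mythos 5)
Your architecture coincides with the paper's throughout: the same replacements in (i)--(iii) and (v), the same universe-descent trick in (iv) (pushing $a\mapsto RB(a)$ through the equivalence between $RA\rightarrow\U$ and the fibrant match of $A\rightarrow\U$ supplied by sharpness of $A$ at the constant family $\U$, then decomposing $(-\ecirc r)$ into an outer $(-\ecirc r_A)$ and fiberwise pieces), and in (vi) the same candidate inverse: your ``peeling'' of $FM_Z$ into $Z(\zero)\times FM_{Z^{(1)}}$, iterated by $\NN$-recursion with the family as a varying parameter, is exactly the paper's auxiliary term $S(n,Y,x)$. The genuine gap is in your verification of (vi). The identity that sharpness actually requires (via Lemma \ref{equiv-defsharp}(2)) is the section identity $\alpha_Z(h_Z(x))=x$ for all $x:FM_Z$, and this cannot be proved ``by induction on $n$'': the statement contains no natural-number variable, and unwinding the compatibility of the peeling equivalence with $(-\ecirc f)$ only reproduces the identical unproved claim for $Z^{(1)}$ and the second projection of the peeled $x$, a regress with no base case. (Proving the other composite $h\circ\alpha=\id$ pointwise by $\NN$-induction is fine, but a retraction alone does not make $\alpha$ an equivalence.) What closes this step in the paper is different in kind: one proves the pointwise identity $\beta_Y(x)(r(m))=v_Y(x)(m)$ by induction on $m:\NN^e$ --- legitimate because the target is a fibrant identity type --- and then lifts it to an identity in $FM_Y$ using function extensionality for cofibrant types, Proposition \ref{funext-cofib}. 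That lifting is exactly where the contractibility clause of cofibrancy (not merely fibrancy of the exo-products) enters; your proposal never invokes Proposition \ref{funext-cofib}, so as written the section identity is not established.

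A smaller instance of the same omission sits in your assembly step for (iv): turning a family of fiberwise fibrant-equivalences indexed by the exo-type $A$ into a fibrant-equivalence between the fibrant matches of the corresponding $\prod^e_{a:A}$'s is not covered by Proposition \ref{exoequiv-prop}(ii)--(iii); it is the functoriality rule for fibrant matches of exo-$\Pi$-types (the pairs $(\alpha_2,\beta_2)$ and $(\alpha_3,\beta_3)$ in Figure \ref{sharp-sigma}), whose justification again runs through Proposition \ref{funext-cofib}. Your constructions there (the replacement $\sum_{c:RA}\overline{RB}(c)$, the map $r$, the splitting of $(-\ecirc r)$) are the paper's and are correct; it is only these identifications inside fibrant matches, in (iv) and above all in (vi), that need the missing tool.
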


\begin{proof}
For (i) if $A$ is a fibrant exo-type, and $RA:\U$ is such that $A\cong RA$, then we can take $RA$ as the fibrant replacement. By Proposition \ref{cofibrant-prop}(i) $A$ is cofibrant, and the map (\ref{precomp}) is trivally a fibrant-equivalence. 

By Proposition \ref{cofibrant-prop}(iii), $\emptype^e$ is cofibrant, and we can take $\emptype$ as fibrant replacement.  Also, the $\prod$-type and $\prod^e$-exo-type in the map (\ref{precomp}) are contractible and exo-contractible, respectively. Thus, it is trivially a fibrant-equivalence. 

The statement (ii) can be shown as in Proposition \ref{cofibrant-prop}(ii). Since $\unit^e$ is fibrant, and hence sharp, the sharpness of finite types follows from that sharpness is preserved under an exo-coproduct. The case of an exo-product is a particular case of $\sum^e$-exo-types. Thus, it remains to show the exo-coproduct case, (iv) and (vi).

For an exo-coproduct, let $A$ and $B$ be two sharp exo-types. By Proposition \ref{cofibrant-prop}(iii), $A+^eB$ is cofibrant. Let $RA,RB:\U$ be the fibrant replacements of $A,B$, respectively. Let $r_A:A\rightarrow RA$ and $r_B:B\rightarrow RB$ be the relevant maps. We claim that $RA + RB$ is a fibrant replacement of $A+^eB$ with the map $r:A+^eB \rightarrow RA + RB$ defined by
\[ r (x) :\equiv
  \begin{cases}
    \inl r_A (a) & \text{ if } x=\inl^e a, \\
    \inr r_B (b) & \text{ if } x=\inr^e b.
  \end{cases} 
  \]
For $Y : RA + RB \rightarrow \U$ consider the commutative diagram in Figure \ref{sharp-exocoprod}. In the diagram, the equivalences $\simeq_1$ and $\simeq_2$, and the isomorphism $\cong_1$ follow from the universal property of (exo)coproducts \cite{hott}. The three pairs of maps $(u_A,v_A)$, $(u_B,v_B)$, and $(u,v)$ are obtained by cofibrancy of $A$, $B$, and $A+^eB$, and hence these are all exo-isomorphisms. Since $A$ and $B$ are sharp, the map $\alpha$ and $\beta$ are equivalences. It is then easy to see that $\alpha \times \beta$ is also an equivalence. The compostion of three arrows on the left is the precomposition map $(-\ecirc r)$. Observe that the composition of three arrows on the right is an equivalence. Indeed, the first and second are already equivalences. Since we have the isomorphism $\cong_2$ between types $FA\times FB$ and $FM$, this is also an equivalence by Proposition \ref{exoequiv-prop}(i). Therefore, the right composition of arrows is an equivalence, so $(-\ecirc r)$ is a fibrant-equivalence.

\begin{figure}
\begin{tikzcd}
\prod_{c:RA+RB} Y(c) \arrow[rr, no head, shift right] \arrow[rr, no head] \arrow[dd, "\simeq_1"']                                                            &  & \prod_{c:RA+RB} Y(c) \arrow[dd, "\simeq_2"]                                            \\
                                                                                                                                                             &  &                                                                                        \\
\prod_{x:RA} Y(\inl x) \times \prod_{y:RB} Y(\inr y) \arrow[rr, no head] \arrow[rr, no head, shift right] \arrow[dd, "(-\ecirc r_A)\times^e (-\ecirc r_B)"'] &  & \prod_{x:RA} Y(\inl x) \times \prod_{y:RB} Y(\inr y) \arrow[dd, "\alpha \times \beta"] \\
                                                                                                                                                             &  &                                                                                        \\
\prod^e_{a:A}Y(\inl r_A(a)) \times^e \prod^e_{b:B} Y(\inr r_B(b)) \arrow[rr, "u_A \times^e u_B"', shift right=2] \arrow[dd, "\cong_1"']                      &  & FA \times FB \arrow[ll, "v_A \times^e v_B"', shift right] \arrow[dd, "\cong_2"]        \\
                                                                                                                                                             &  &                                                                                        \\
\prod^e_{d:A+^e B}Y(r(d)) \arrow[rr, "u"', shift right=2]                                                                                                    &  & FM \arrow[ll, "v"', shift right]                                                      
\end{tikzcd}
\caption{The diagram about sharpness of exo-coproduct.\label{sharp-exocoprod}}
\end{figure}

For a dependent sum, let $A$ be a sharp exo-type, $B:A\rightarrow \U$ be such that each $B(a)$ is sharp. By Proposition \ref{cofibrant-prop}(iv), we have $\sum^e_{a:A}B(a)$ is cofibrant. It remains to find a fibrant replacement. 

Let $r_A:A\rightarrow RA$ be the fibrant replacement of $A$, and $r_a : B(a) \rightarrow RB(a)$ be the fibrant replacement of each $B(a)$ for $a:A$. We have $RB : A \rightarrow \U$. Consider the diagram obtained by the sharpness of $A$. 
\begin{center}
\begin{tikzcd}
RA \rightarrow \U \arrow[rr, "(-\ecirc r_A)"] \arrow[dd, no head, shift right] \arrow[dd, no head] &        & A \rightarrow \U \arrow[dd, "u"', shift right=2]                    \\
                                                                                                    &        & \cong                                                                \\
RA\rightarrow \U \arrow[rr, "\alpha", shift left=2]                                                & \simeq & FM \arrow[uu, "v"', shift right=2] \arrow[ll, "\beta", shift left=2]
\end{tikzcd}
\end{center}
Define $\widetilde{RB}:=\beta (u (RB)):RA \rightarrow \U$, and we will make $\sum_{c:RA} \widetilde{RB}(a)$ the fibrant replacement of $\sum^e_{a:A} B(a)$. Define $r : \sum^e_{a:A} B(a) \rightarrow \sum_{c:RA} \widetilde{RB}(a)$ by \[r(a,b):=(\,r_A(a)\,,\,e_a(r_a(b))\,) \] where $e:\prod^e_{a:A}RB(a)\simeq \widetilde{RB}(r_A (a))$. The family of equivalences $e$ is obtained by the identity, for all $a:A$ 
\begin{eqnarray*}
\widetilde{RB}(r_A(a))=v (u (\widetilde{RB} \ecirc r_A)) (a)&=&v(\alpha(\widetilde{RB}))(a)\\
&=&v(\alpha(\beta(u(RB))))(a)=v(u(RB))(a)=RB(a).
\end{eqnarray*}
It remains to show that $(-\ecirc r)$ is a fibrant-equivalence. Consider the commutative diagram in Figure \ref{sharp-sigma} for the dependent type $Y: \sum_{c:RA} \widetilde{RB}(a) \rightarrow \U$.

\begin{figure}
\centering
\begin{tikzcd}
\prod_{z:\sum_{c:RA} \widetilde{RB}(c)} Y(z) \arrow[dd, "\phi_0"'] \arrow[rr, no head, shift left] \arrow[rr, no head]      &       & \prod_{z:\sum_{c:RA} \widetilde{RB}(c)} Y(z) \arrow[dd, "\phi_0"]                                                 \\
                                                                                                                            &       &                                                                                                                   \\
{\prod_{c:RA} \prod_{y:\widetilde{RB}(c)} Y(c,y)} \arrow[dd, "\phi_1"'] \arrow[rr, no head, shift left] \arrow[rr, no head] &       & {\prod_{c:RA} \prod_{y:\widetilde{RB}(c)} Y(c,y)} \arrow[dd, "\alpha_1"', shift right=2]                          \\
                                                                                                                            &       & \simeq                                                                                                            \\
{\prod^e_{a:A} \prod_{y:\widetilde{RB}(r_A(a))} Y(r_A(a),y)} \arrow[dd, "\phi_2"'] \arrow[rr, "u_1", shift left=2]          & \cong & FM_1 \arrow[ll, "v_1", shift left=2] \arrow[uu, "\beta_1"', shift right=2] \arrow[dd, "\alpha_2"', shift right=2] \\
                                                                                                                            &       & \simeq                                                                                                            \\
{\prod^e_{a:A}\prod_{x:RB(a)} Y(r_A(a),e_a(x))} \arrow[dd, "\phi_3"'] \arrow[rr, "u_2", shift left=2]                       & \cong & FM_2 \arrow[ll, "v_2", shift left=2] \arrow[uu, "\beta_2"', shift right=2] \arrow[dd, "\alpha_3"', shift right=2] \\
                                                                                                                            &       & \simeq                                                                                                            \\
{\prod^e_{a:A} \prod^e_{b:B(a)} Y(r_A(a),e_a(r_a(b)))} \arrow[dd, "\phi_4"'] \arrow[rr, "u_3", shift left=2]                & \cong & FM_3 \arrow[ll, "v_4", shift left=2] \arrow[uu, "\beta_3"', shift right=2] \arrow[dd, "\alpha_4"', shift right=2] \\
                                                                                                                            &       & \simeq                                                                                                            \\
\prod^e_{s:\sum^e_{a:A} B(a)} Y(r(s)) \arrow[rr, "u_4", shift left=2]                                                       & \cong & FM_4 \arrow[ll, "v_4", shift left=2] \arrow[uu, "\beta_4"', shift right=2]                                       
\end{tikzcd}
\caption{The diagram about sharpness of $\sum$-exo-type.\label{sharp-sigma}}
\end{figure}

In Figure \ref{sharp-sigma}, the first and the second rows contain two types, and there is a canonical equivalence $\phi_0$ between them obtained by a universal property. Similarly, the map $\phi_4$ is a canonical isomorphism. Also, the maps $\phi_1$, $\phi_2$, and $\phi_3$ have their own obvious definitions.  The types from $FM_1$ to $FM_4$ are obtained by the cofibrancy of $A$, $B(a)$ and $\sum^e_{a:A} B(a)$. Therefore, the pairs of maps from $(u_1,v_1)$ to $(u_4,v_4)$ are isomorphisms. 

The pair $(\alpha_1,\beta_1)$ is an equivalence since $A$ is sharp. The pair $(\alpha_2,\beta_2)$ is an equivalence by a functoriality rule\footnote{If $A\simeq B$ and $P:A \rightarrow \U$, $Q:B \rightarrow \U$ such that $P\simeq Q$, then $\prod_A P \simeq \prod_B Q$. We have similar rule for exo-types with $\cong$.} since $RB(a)\simeq \widetilde{RB}(r_A (a))$ for any $a:A$. The pair $(\alpha_3,\beta_3)$ is an equivalence by a similar functoriality rule and the fact that $B(a)$ is sharp for each $a:A$. The pair $(\alpha_4,\beta_4)$ is an isomorphism since the other sides of the last square are all isomorphisms, and hence it is an equivalence. 

Now, while the composition of the maps on the left is the precomposition $(-\ecirc r)$, the composition of the maps on the right is an equivalence. Thus, it finishes the proof that $\sum^e_{a:A} B(a)$ is sharp.

As for the statement (vi), suppose $\NN^e$ is cofibrant. We take $\NN$ as a fibrant replacement, and the transfer map $r:\NN^e \rightarrow \NN$ defined by $r(\zero^e):=\zero$ and $r(\suc^e n)=\suc r(n)$. For any $Y:\NN \rightarrow \U$,  consider the diagram:
\begin{center}
\begin{tikzcd}
\prod_{n:\NN} Y(n) \arrow[rr, "(-\ecirc r)"] \arrow[dd, no head, shift right] \arrow[dd, no head] &  & \prod^e_{m:\NN^e} Y(r(m)) \arrow[dd, "u_Y"', shift right=2] \\
                                                                                                  &  &                                                           \\
\prod_{n:\NN} Y(n) \arrow[rr, "\alpha_Y"]                                                           &  & FM_Y \arrow[uu, "v_Y"', shift right=2]                       
\end{tikzcd}
\end{center}

Using Lemma \ref{equiv-defsharp}, we will show that $(-\ecirc r)$ has a fibrant-section, namely, $\alpha_Y=u_Y \ecirc (-\ecirc r)$ has a section. First, we define an auxiliary type \[S : \prod_{n:\NN} \,\, \prod_{Y:\NN \rightarrow \U} \,\, \prod_{x:FM_Y} Y(n)\] by $S(\zero,Y,x):= v_Y (x) (\zero^e)$ and $S(\suc(n),Y,x):=S(n,Y',x')$ where
\begin{eqnarray*}
Y'(n)&:=&Y(\suc(n)),\\
x'&:=& u_{Y'}(\lambda a.\,v_Y(x)\,(\suc^e (a))\,).
\end{eqnarray*}

We then define the section map $\beta_Y : FM_Y \rightarrow \prod_{n:\NN} Y(n)$ as \[\beta_Y(x)(n):=S(n,Y,x).\] To finish the proof, it remains to show that $\alpha_Y\circ \beta_Y = \id$. By $\funext$, it suffices to show that for any $x:FM_Y$, we have $\alpha_Y(\beta_Y(x))=x$. Using $\funext$ for cofibrant types, it is enough to show that for any $m:\NN^e$, we have 
\begin{equation}\label{aux-eq-nat}
(\beta_Y(x))\,(r(m)) = v_Y(x)\,(m).
\end{equation}
Indeed, if we have this equality, then by $\funext$ for cofibrant types (used at the second equality), we get
\[\alpha_Y(\beta_Y(x))=u_Y(\beta_Y(x)\ecirc r) = u_Y(v_Y(x))=x.\] We will prove the identity (\ref{aux-eq-nat}) by induction on $m:\NN^e$. For $m=\zero^e$, we get \[\beta_Y(x)(r(\zero^e))=\beta_Y(x)(\zero)=S(\zero,Y,x)=v_Y(x)(\zero).\] For $m=\suc^e m'$, we get by induction \[\beta_{Y'}(x')(r(m'))=v_{Y'}(x')(m').\] Using this, we obtain
\begin{eqnarray*}
\beta_Y(x)(r(m))&=&\beta_Y(x)(\suc (r (m')))\\
&=& S(\suc (r (m')), Y, x)\\
&=& S(r(m'), Y', x')\\
&=& \beta_{Y'}(x')(r(m')) \\
&=& v_{Y'}(x')(m')\\
&=& v_{Y'}(u_{Y'}(\lambda a.\,v_Y(x)\,(\suc^e (a))))\,(m')\\
&=& (\lambda a.\,v_Y(x)\,(\suc^e (a)))\,(m')=v_Y(x)\,(m) \qedhere
\end{eqnarray*}
\end{proof}

\textbf{Agda Side.} The folder \texttt{Sharpness} in the library \cite{agdalib} contains all the definitions and proofs in this section.

\section{Lifting cofibrancy from exo-nat to other types}\label{novel-sec}

In this section, we give a new result about other inductive types. Using cofibrant exo-nat, we will show that some other inductive types preserve cofibrancy.

\subsection{List exo-types}

\begin{defn}
For an exo-type $A:\U^e$ we define the exo-type $\List^e (A) : \U^e$ of \textbf{finite exo-lists} of terms of $A$, which has constructors 
\begin{itemize}
\item[•] $\nil^e : \List^e (A)$
\item[•] $\cons^e : A \rightarrow \List^e (A) \rightarrow \List^e (A)$
\end{itemize}

Similarly, if $A : \U$ is a type, the type $\List(A)$ of \textbf{finite lists} of $A$ has constructors $\nil$ and $\cons\,$.
\end{defn}

As in Theorem \ref{lemma211}, we have an obvious map $f : \List^e (A) \rightarrow \List (A)$ for a type $A$ defined as $f(\nil^e):=\nil$ and $f(a\cons^e l):=a \cons f(l)$. We will give some conditions for cofibrancy and sharpness of $\List^e (A)$. Indeed, if we assume $\NN^e$ and $A$ are cofibrant, then we can show $\List^e (A)$ is cofibrant. The proof is obtained by an isomorphism between a cofibrant exo-type and $\List^e(A)$. If $A$ is also sharp, the same isomorphism gives that $\List^e (A)$ is sharp. Moreover, we can show sharpness of $\List^e(A)$ in a way analogous to Proposition \ref{sharp-prop}(vi). 

\begin{lemma}\label{equivdef-list}
Let $A : \U^e$ be an exo-type. Then we have \[\left( {\sum_{n:\NN^e}}^e A^n \right) \cong \List^e(A)\] where $A^{\zero^e}:=\unit^e$ and $A^{(\suc^e n)} := A \times^e A^n$.
\end{lemma}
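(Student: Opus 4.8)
The plan is to construct the exo-isomorphism explicitly and verify the round-trip exo-equalities by induction, using $\funext^e$ and $\UIP$ wherever needed. First I would define the forward map $F : \left({\sum_{n:\NN^e}}^e A^n\right) \to \List^e(A)$ by exo-recursion on the first component: set $F(\pair^e(\zero^e, \star^e)) :\equiv \nil^e$ and $F(\pair^e(\suc^e n, \pair^e(a,v))) :\equiv a \cons^e F(\pair^e(n,v))$. This is well-defined because $A^{\zero^e} \equiv \unit^e$ and $A^{\suc^e n} \equiv A \times^e A^n$ definitionally, so pattern-matching on $\NN^e$ determines the shape of the second component. Conversely, define $G : \List^e(A) \to \left({\sum_{n:\NN^e}}^e A^n\right)$ by exo-recursion on the list: $G(\nil^e) :\equiv \pair^e(\zero^e, \star^e)$ and $G(a \cons^e l) :\equiv \pair^e(\suc^e (\pi_1^e(G(l))), \pair^e(a, \pi_2^e(G(l))))$.

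Next I would check $F \ecirc G \exoeq \id_{\List^e(A)}$ by exo-induction on lists. The base case is definitional: $F(G(\nil^e)) \equiv F(\pair^e(\zero^e,\star^e)) \equiv \nil^e$. For the cons case, $F(G(a \cons^e l)) \equiv F(\pair^e(\suc^e (\pi_1^e(G(l))), \pair^e(a,\pi_2^e(G(l))))) \equiv a \cons^e F(\pair^e(\pi_1^e(G(l)), \pi_2^e(G(l))))$, and since any term of a $\sum^e$-exo-type is exo-equal to the pair of its projections (an instance of the $\eta$-rule, or provable by exo-induction on pairs), this is $a \cons^e F(G(l))$, which by the inductive hypothesis is exo-equal to $a \cons^e l$; apply $\ap$ of $\cons^e$ and transitivity. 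The other direction $G \ecirc F \exoeq \id$ requires exo-induction on the $\NN^e$-component: at $\zero^e$, the second component lives in $\unit^e$, so by $\eta$ for $\unit^e$ it equals $\star^e$ and $G(F(\pair^e(\zero^e,\star^e))) \equiv \pair^e(\zero^e,\star^e)$ definitionally; at $\suc^e n$, write the second component as $\pair^e(a,v)$, compute $G(F(\pair^e(\suc^e n, \pair^e(a,v)))) \equiv \pair^e(\suc^e(\pi_1^e(G(F(\pair^e(n,v))))), \pair^e(a, \pi_2^e(G(F(\pair^e(n,v))))))$, and feed in the inductive hypothesis $G(F(\pair^e(n,v))) \exoeq \pair^e(n,v)$ together with $\ap$ of $\suc^e$ and pairing. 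Finally, package $(F, G, p, q)$ as a term of $\left({\sum_{n:\NN^e}}^e A^n\right) \cong \List^e(A)$.

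The main obstacle I anticipate is bookkeeping around the dependent pair in the domain: because $A^n$ is defined by exo-recursion on $n$, reasoning about a general term $w : {\sum_{n:\NN^e}}^e A^n$ forces us to first split $w$ as $\pair^e(n, v)$ and then induct on $n$ so that $A^n$ reduces to something we can pattern-match on — a single induction on the list alone does not suffice for the $G \ecirc F$ direction. Handling the transport/coherence when rewriting along the inductive hypothesis inside the nested $\pair^e$ is where $\UIP$ and $\funext^e$ earn their keep, since exo-equalities between dependent pairs must be transported carefully, but because all exo-types are exo-sets, no higher coherence issues arise and the argument stays routine once the inductions are set up correctly.
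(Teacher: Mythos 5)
Your proposal is correct and matches the paper's proof essentially verbatim: the maps $F$ and $G$ are exactly the $\phi$ and $\theta$ constructed there, and the paper likewise verifies the two composites by induction on the constructors (leaving the $\eta$-for-pairs and unit bookkeeping you spell out implicit). No substantive difference in approach.
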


\begin{proof}
Define $\phi : \left(\sum^e_{n:\NN^e} A^n\right) \rightarrow \List^e(A)$ as follows:
\begin{eqnarray*}
\phi(\zero^e, \star^e)&:=& \nil^e\\
\phi(\suc^e(n), (a,p))&:=& a \cons^e \phi(n,p).
\end{eqnarray*}
Define $\theta : \List^e(A) \rightarrow \left(\sum^e_{n:\NN^e} A^n\right) $ as follows:
\begin{eqnarray*}
\theta(\nil^e)&:=& (\zero^e, \star^e)\\
\theta(a \cons^e l)&:=&(\,\suc^e(\pi^e_1\,\theta(l))\,,(a,\pi^e_2\,\theta(l))\,).
\end{eqnarray*}
Then it is easy to show by the induction on the constructors that \[\phi \ecirc \theta \exoeq \id_{\List^e(A)}\quad\text{and}\quad \theta \ecirc \phi \exoeq \id_{\sum^e_{n:\NN^e} A^n}.\qedhere\]
\end{proof}

\begin{prop}\label{cofib-list}
If $A:\U^e$ is cofibrant and $\NN^e$ is cofibrant, then so is $\List^e(A)$.
\end{prop}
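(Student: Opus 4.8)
The plan is to leverage the isomorphism established in Lemma \ref{equivdef-list} together with the closure properties of cofibrancy already proved in Proposition \ref{cofibrant-prop}. By that lemma we have $\List^e(A) \cong \sum^e_{n:\NN^e} A^n$, and by Proposition \ref{cofibrant-prop}(ii) cofibrancy is preserved under exo-isomorphism, so it suffices to show that $\sum^e_{n:\NN^e} A^n$ is cofibrant.

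By Proposition \ref{cofibrant-prop}(iv), the exo-type $\sum^e_{n:\NN^e} A^n$ is cofibrant provided that the index exo-type $\NN^e$ is cofibrant (which is our hypothesis) and that each fiber $A^n$ is cofibrant. So the remaining work is to prove that $A^n$ is cofibrant for every $n : \NN^e$, which I would do by induction on $n : \NN^e$ (this is legitimate since cofibrancy is a property, and more to the point we are building an inhabitant of the relevant exo-type family, so $\NN^e$-recursion into an exo-type is allowed). For the base case, $A^{\zero^e} := \unit^e$ is fibrant, hence cofibrant by Proposition \ref{cofibrant-prop}(i). For the successor case, $A^{(\suc^e n)} := A \times^e A^n$; since $A$ is cofibrant by hypothesis and $A^n$ is cofibrant by the induction hypothesis, Proposition \ref{cofibrant-prop}(iii) (closure of cofibrancy under $\times^e$) gives that $A \times^e A^n$ is cofibrant. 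This completes the induction.

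Stringing these together: each $A^n$ is cofibrant, $\NN^e$ is cofibrant, so $\sum^e_{n:\NN^e} A^n$ is cofibrant by Proposition \ref{cofibrant-prop}(iv); then $\List^e(A)$ is cofibrant by Lemma \ref{equivdef-list} and Proposition \ref{cofibrant-prop}(ii).

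I do not anticipate a serious obstacle here — the proof is essentially an assembly of previously established facts. The only point requiring a little care is the induction establishing cofibrancy of $A^n$: one should note that this is a recursion over $\NN^e$ landing in an exo-type (a statement about cofibrancy, which lives at the exo-level), so the recursion is permitted even though $\NN^e$ is not assumed fibrant in general. If one wanted to be fully explicit in the Agda-style no-abuse-of-notation spirit of the paper, one would phrase the inductive statement as constructing a term of $\prod^e_{n:\NN^e} \mathsf{isCofibrant}(A^n)$ by $\NN^e$-recursion, feeding the base and successor clauses above.
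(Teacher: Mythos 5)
Your proof is correct and follows essentially the same route as the paper: reduce via Lemma \ref{equivdef-list} and Proposition \ref{cofibrant-prop}(ii) to the cofibrancy of $\sum^e_{n:\NN^e} A^n$, establish cofibrancy of each $A^n$ by $\NN^e$-recursion using Proposition \ref{cofibrant-prop}(iii), and conclude with Proposition \ref{cofibrant-prop}(iv). The paper's version is simply terser, leaving the induction on $n$ implicit, whereas you spell it out (and correctly note that recursion into an exo-type is permitted).
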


\begin{proof}
By Lemma \ref{equivdef-list} and \ref{cofibrant-prop}(ii), it is enough to show that ${\sum^e_{n:\NN^e} A^n}$ is cofibrant. By the assumption and \ref{cofibrant-prop}(iii), we have that $A^n$ is a cofibrant exo-type for all $n:\NN^e$. Since we also assume $\NN^e$ is cofibrant, we are done by \ref{cofibrant-prop}(iv).
\end{proof}

By similar reasoning, we obtain that $\List^e(A)$ is sharp if $A$ is sharp and $\NN^e$ is cofibrant. However, we can also show the sharpness of $\List^e(A)$ like in Proposition \ref{sharp-prop}(vi).

\begin{prop}\label{sharp-list-2}
Let $A:\U^e$ be a sharp exo-type. Suppose also that $\NN^e$ is cofibrant. Then $\List^e(A)$ is sharp. 
\end{prop}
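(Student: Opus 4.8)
The plan is to follow the structure of the proof of Proposition \ref{sharp-prop}(vi) almost verbatim, replacing the recursion on $\NN^e$ by a recursion on lists, with one genuinely new ingredient: the head of a $\cons^e$ lives in $A$ and has to be transported along the fibrant replacement of $A$, which is where the \emph{sharpness} (not merely cofibrancy) of $A$ is used. \textbf{Setup.} First note $\List^e(A)$ is cofibrant by Proposition \ref{cofib-list}, since a sharp exo-type is cofibrant and $\NN^e$ is assumed cofibrant. Let $RA:\U$ with $r_A:A\rightarrow RA$ be the fibrant replacement witnessing sharpness of $A$. I would take the type $\List(RA):\U$ as the fibrant replacement of $\List^e(A)$, with transfer map $r:\List^e(A)\rightarrow\List(RA)$ defined by $r(\nil^e):=\nil$ and $r(a\cons^e l):=r_A(a)\cons r(l)$. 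By Lemma \ref{equiv-defsharp} it then suffices to prove, for every family $Y:\List(RA)\rightarrow\U$, that the precomposition map $(-\ecirc r):\prod_{l:\List(RA)}Y(l)\rightarrow\prod^e_{k:\List^e(A)}Y(r(k))$ has a fibrant-section; writing $u_Y,v_Y$ for the fibrant-match isomorphisms of $\prod^e_{k:\List^e(A)}Y(r(k))$ with a type $FM_Y$, and $\alpha_Y:=u_Y\ecirc(-\ecirc r)$, this amounts to a section of $\alpha_Y:\prod_{l:\List(RA)}Y(l)\rightarrow FM_Y$.

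\textbf{The auxiliary function.} Mirroring Proposition \ref{sharp-prop}(vi), I would define by recursion on the list argument
\[
S:\prod_{l:\List(RA)}\ \prod_{Y:\List(RA)\rightarrow\U}\ \prod_{x:FM_Y}Y(l),
\]
with $S(\nil,Y,x):=v_Y(x)(\nil^e)$, and in the $\cons$ case $S(a\cons l',Y,x):=S(l',Y'_a,x'_a)$, where $Y'_a(l'):=Y(a\cons l')$ and $x'_a:FM_{Y'_a}$ is a point transported from $x$; finally $\beta_Y(x)(l):=S(l,Y,x)$. The only nontrivial point is producing $x'_a$ for $a:RA$. From $x$ we obtain $v_Y(x):\prod^e_{k:\List^e(A)}Y(r(k))$, hence the partial application $\lambda a_0.\lambda k'.\,v_Y(x)(a_0\cons^e k')$, which for $a_0:A$ is a term of the fibrant exo-type $\prod^e_{k':\List^e(A)}Y(r_A(a_0)\cons r(k'))$ (fibrant by cofibrancy of $\List^e(A)$). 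Let $H:RA\rightarrow\U$ send $c$ to the fibrant match of $\prod^e_{k':\List^e(A)}Y(c\cons r(k'))$. Since $A$ is sharp, $(-\ecirc r_A):\prod_{c:RA}H(c)\rightarrow\prod^e_{a_0:A}H(r_A(a_0))$ is a fibrant-equivalence; applying its inverse to (the fibrant-match image of) the partial application above yields $\widetilde\psi:\prod_{c:RA}H(c)$ with the property that $\widetilde\psi(r_A(a_0))$ corresponds, up to identity, to $\lambda k'.\,v_Y(x)(a_0\cons^e k')$. Set $x'_a:=u_{Y'_a}\big(\iota_a(\widetilde\psi(a))\big)$, where $\iota_a$ is the fibrant-match isomorphism $H(a)\cong\prod^e_{k':\List^e(A)}Y(a\cons r(k'))$.

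\textbf{Verification of $\alpha_Y\circ\beta_Y=\id$.} By $\funext$ it is enough to fix $x:FM_Y$ and show $\alpha_Y(\beta_Y(x))=x$. Rewriting $x=u_Y(v_Y(x))$ and $\alpha_Y(\beta_Y(x))=u_Y(\beta_Y(x)\ecirc r)$, and using \textbf{funext for cofibrant types} (Proposition \ref{funext-cofib}) applied to the cofibrant exo-type $\List^e(A)$, it suffices to prove $\beta_Y(x)(r(k))=v_Y(x)(k)$ for all $k:\List^e(A)$, which I would establish by induction on $k$. The base case $k=\nil^e$ is $\beta_Y(x)(\nil)=v_Y(x)(\nil^e)$, true by definition of $S$. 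For $k=a_0\cons^e k'$, unfolding the definitions gives $\beta_Y(x)(r(k))=\beta_{Y'_{r_A(a_0)}}(x'_{r_A(a_0)})(r(k'))$; the induction hypothesis (for $k'$, with family $Y'_{r_A(a_0)}$ and point $x'_{r_A(a_0)}$) rewrites this as $v_{Y'_{r_A(a_0)}}(x'_{r_A(a_0)})(k')$; and then $v_{Y'}\circ u_{Y'}\exoeq\id$ (followed by $\exotoid$) together with the defining property of $\widetilde\psi$ identifies this with $v_Y(x)(a_0\cons^e k')$, as required.

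\textbf{Main obstacle.} The delicate part, not present in Proposition \ref{sharp-prop}(vi), is precisely the handling of the $\cons^e$ head: the ``shift'' feeding the recursion is literally defined only on $A$, so it must be extended along $r_A$ via sharpness of $A$, and the identities introduced by this transport (and by the $v\circ u\exoeq\id$ round-trips) must be tracked coherently through the $\cons$ step of the induction; one must also check that the recursion defining $S$ is genuinely structural in $l$ while $Y$ and the transported point $x'_a$ vary. Everything else parallels the cited proof and is routine. (One could also derive the statement more cheaply by combining Lemma \ref{equivdef-list} with Proposition \ref{sharp-prop}(ii),(iii),(iv) and the sharpness of $\NN^e$ from Proposition \ref{sharp-prop}(vi); the interest of the argument above is that it is direct.)
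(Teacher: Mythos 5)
Your proposal is correct and follows essentially the same route as the paper: cofibrancy via Proposition \ref{cofib-list}, the fibrant replacement $\List(RA)$ with the same transfer map $r$, reduction to a fibrant-section via Lemma \ref{equiv-defsharp}, the auxiliary recursor $S$ on lists with the $\cons$ head handled by inverting the precomposition along $r_A$ (sharpness of $A$), and the inductive verification of $\beta_Y(x)(r(k))=v_Y(x)(k)$ using $\funext$ for cofibrant types. The only deviation is a bookkeeping choice: you apply sharpness of $A$ once to the packaged family $H$ of fibrant matches of $\prod^e_{k'}Y(c\cons r(k'))$, whereas the paper applies it separately for each fixed tail $s$ to the family $c\mapsto Y(c\cons r(s))$ and evaluates the inverse at the head; both are equivalent, and your closing remark about the cheaper route via Lemma \ref{equivdef-list} matches the paper's own comment preceding the proposition.
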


\begin{proof}
Since $A$ is sharp, it is cofibrant. By Proposition \ref{cofib-list}, we have $\List^e(A)$ is cofibrant, so it remains to find the fibrant replacement of it. 

Since $A$ is sharp, we have a type $RA:\U$ and a map $r_A:A\rightarrow RA$ such that the map \ref{precomp} is a fibrant-equivalence for any $Y:RA \rightarrow \U$. We claim that $\List (RA)$ is a fibrant replacement of $\List^e(A)$. Define $r:\List^e(A)\rightarrow \List (RA)$ as 
\begin{eqnarray*}
r(\nil^e)&:=&\nil\\
r(a \cons^e l)&:=&r_A(a) \cons r(l)
\end{eqnarray*}

Consider the following commutative diagram for any $Y: \List(RA) \rightarrow \U$:
\begin{center}
\begin{tikzcd}
\prod_{t:\List(RA)} Y(t) \arrow[dd, no head, shift right] \arrow[dd, no head] \arrow[rr, "(-\ecirc r)"] &  & \prod^e_{s:\List^e(A)} Y(r(s)) \arrow[dd, "u_Y"', shift right=2] \\
                                                                                                        &  & \cong                                                          \\
\prod_{t:\List(RA)} Y(t) \arrow[rr, "\alpha_Y:=u_Y \ecirc (-\ecirc r)"]                                     &  & FM_Y \arrow[uu, "v_Y"', shift right=2]                            
\end{tikzcd}
\end{center}

The type $FM_Y$ and the isomorphism $u_Y$ are obtained by the cofibrancy of $\List^e(A)$. We want to show that $\alpha_Y$ is an equivalence. First, we define an auxiliary type 
\[S : \prod_{t:\List(RA)} \left(\prod_{Y:\List(RA)\rightarrow \U}\left( \prod_{x:FM_Y} Y(t)\right)\right) \] by $S(\nil,Y,x):=v_Y(x)(\nil^e)$ and $S(c \cons l,Y,x):=S(l,Y',x')$ where
\begin{eqnarray*}
Y'(l)&:=&Y(c \cons l),\\
x'&:=& u_{Y'}(T)
\end{eqnarray*}
for a $T:\prod^e_{s:\List^e(A)}Y'(r(s))$ defined as follows: For $s:\List^e(A)$, consider the following diagram:
\begin{center}
\begin{tikzcd}
{\prod_{c:RA} Y(c \cons r(s))} \arrow[rr, "(-\ecirc r_A)"] \arrow[dd, no head, shift right] \arrow[dd, no head] &        & {\prod^e_{a:A} Y(r_A(a) \cons r(s))} \arrow[dd, "u"', shift right=2] \\
                                                                                                                       &        & \cong                                                                  \\
{\prod_{c:RA} Y(c \cons r(s))} \arrow[rr, "\alpha", shift left=2]                                               & \simeq & FM \arrow[ll, "\beta", shift left=2] \arrow[uu, "v"', shift right=2]  
\end{tikzcd}
\end{center}
The equivalence $\alpha$ is obtained by the sharpness of $A$. Now we define 
\[T(s):=\beta(u(\lambda\, a. v_Y(x)(a \cons^e s)))\, (c)\; : Y(c \cons r(s))=Y'(r(s)).\]
We also claim that for any $s:\List^e(A)$, $Y:\List(RA) \rightarrow \U$, and $x:FM_Y$ we have 
\begin{equation}\label{aux-eq-list}
S(r(s),Y,x)=v_Y(x) (s).
\end{equation}
It follows by induction on $\List^e(A)$. If $s=\nil^e$, the $\refl$ term satisfies the identity \ref{aux-eq-list}. If $s=b \cons^e s'$ for $s':\List^e(A)$, then we have the following chain of identities: 
\begin{eqnarray*}
S(r(b \cons^e s'),Y,x)&=&S(r_A(b) \cons r(s'),Y,x)\\
&=&S(r(s'),Y',x')\\
&=&v_{Y'}(u_{Y'} (T))(s')\\
&=&T(s')\\
&=&\beta(u(\lambda\, a. v_Y(x)(a \cons^e s')))\, (r_A(b))\\
&=&(\lambda\, a. v_Y(x)(a \cons^e s')) \,(b)\\
&=&v_Y(x)(b \cons^e s')
\end{eqnarray*}

These are obtained by, respectively, the definition of $r$, the definition of $S$, the induction hypothesis, the fact that $v_Y'$ is the inverse of $u_Y'$, the definition of $T$, the fact that $\beta$ is the inverse of $\alpha=u\ecirc(-\ecirc r)$, and the definition of the given function. Note that when we have exo-equalities of terms in types, we can use $\exotoid$ to make them identities.

Now, define $\beta_Y : FM_Y \rightarrow \prod_{t:\List(RA)} Y(s)$ as $\beta_Y(x)(t):=S(t,Y,x)$. Then we obtain \[\alpha_Y(\beta_Y(x))=u_Y(\beta_Y(x) \ecirc r)=u_Y(v_Y(x))=x. \] These are obtained by, respectively, the definition of $\alpha_Y$, the fact that $\beta_Y(x)\ecirc r= v_Y(x)$ since we can use $\funext^e$ for cofibrant exo-types and Equation \ref{aux-eq-list}, and the fact that $v_Y$ is the inverse of $u_Y$. 

This proves that $\alpha_Y$ has a section for any $Y:\List(RA) \rightarrow \U$. By Lemma \ref{equiv-defsharp}, we conclude that $\List^e(A)$ is sharp.
\end{proof}

\textbf{Agda Side.} The file \verb|Cofibrancy_of_List| in our Agda library \cite{agdalib} is the formalization of Lemma \ref{equivdef-list} and Proposition \ref{cofib-list}. The file \verb|On_Sharpness_of_List| is the formalization of Proposition \ref{sharp-list-2}.

\subsection{Exo-type of binary trees}

\begin{defn}
For exo-type $N,L:\U^e$ we define the exo-type $\BinTree^e (N,L) : \U^e$ of \textbf{binary exo-trees} with node values of exo-type $N$ and leaf values of exo-type $L$, which has constructors 
\begin{itemize}
\item[•] $\leaf^e : L \rightarrow \BinTree^e (N,L)$
\item[•] $\node^e : \BinTree^e (N,L) \rightarrow N \rightarrow \BinTree^e (N,L) \rightarrow \BinTree^e (N,L)$
\end{itemize}

Similarly, if $N,L : \U$ is a type, the type $\BinTree (N,L)$ of \textbf{binary trees} with node values of type $N$, leaf values of type $L$, and constructors $\leaf$ and $\node$.
\end{defn}

We also have a definition for unlabeled binary (exo)trees. 

\begin{defn}
The exo-type $\UnLBinTree^e : \U^e$ of \textbf{unlabeled binary exo-trees} is constructed by 
\begin{itemize}
\item[•] $\uleaf^e : \UnLBinTree^e $
\item[•] $\unode^e : \UnLBinTree^e \rightarrow \UnLBinTree^e \rightarrow \UnLBinTree^e $
\end{itemize}
Similarly, the type $\UnLBinTree : \U$ of \textbf{unlabeled binary trees} is constructed by $\uleaf$ and $\unode$.
\end{defn}

It is easy to see that if we take $N=L=\unit^e$, then $\BinTree^e (N,L)$ is isomorphic to $\UnLBinTree^e$. However, we have a more general relation between them. For any $N,L:\U^e$, we can show
\begin{equation}\label{tree-equiv-def}
\BinTree^e (N,L) \cong {\sum}^e_{t : \UnLBinTree^e} \left(N^{\text{\# of nodes of }t} \times^e L^{\text{\# of leaves of }t}\right).
\end{equation}

Thanks to this isomorphism, we can determine the cofibrancy or the sharpness of $\BinTree^e (N,L)$ using the cofibrancy or the sharpness of $\UnLBinTree^e$. Indeed, we will show that if $\NN^e$ is cofibrant, then $\UnLBinTree^e$ is not only cofibrant but also sharp. Since any finite product of cofibrant (sharp) exo-types is cofibrant (sharp), we can use isomorphism \ref{tree-equiv-def} to show $\BinTree^e (N,L)$ is cofibrant (sharp) under some conditions. Thus, the main goal is to get cofibrant (sharp) $\UnLBinTree^e$.

We will construct another type that is easily shown to be cofibrant (sharp) to achieve this goal. Let $\Parens : \U^e$ be the exo-type of parentheses constructed by $\popen : \Parens$ and $\pclose : \Parens$. In other words, it is an exo-type with two terms. Define an exo-type family \[\isbalanced : \List^e (\Parens) \rightarrow \NN^e \rightarrow \U^e \] where $\isbalanced(l,n):=\unit^e$ if the list of parentheses $l$ needs $n$ many opening parentheses to be a balanced parenthesization, and $\isbalanced(l,n):=\emptype^e$ otherwise. For example, we have
\begin{eqnarray*}
&&\isbalanced(\popen \cons^e \pclose \cons^e \nil^e,\zero^e)=\unit^e\\
&&\isbalanced(\popen \cons^e \pclose \cons^e \pclose \cons^e \nil^e,\suc^e(\zero^e))=\unit^e\\
&&\isbalanced(\popen \cons^e \pclose \cons^e \popen \cons^e \nil^e,\suc^e(\zero^e))=\emptype^e.
\end{eqnarray*}

In other words, the first says that ``()" is a balanced parenthesization, the second says that ``())" needs one opening parenthesis, and the third says that ``()(" is not balanced if we add one more opening parenthesis. 

Since $\emptype^e$ and $\unit^e$ are cofibrant (also sharp), we get $\isbalanced(l,n)$ is cofibrant (sharp) for any $l:\List^e(\Parens)$ and $n:\NN^e$. Finally, for any $n:\NN^e$ define 
\[\Balanced(n):={\sum}^e_{l:\List^e(\Parens)} \isbalanced(l,n).\]

\begin{lemma}\label{cofib/sharp-balanced}
If $\NN^e$ is cofibrant, then for any $n:\NN^e$, the exo-type $\Balanced(n)$ is both cofibrant and sharp.
\end{lemma}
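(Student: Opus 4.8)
The plan is to unfold $\Balanced(n)=\sum^e_{l:\List^e(\Parens)}\isbalanced(l,n)$ as a dependent exo-sum and then invoke the two preservation results already at hand: Proposition \ref{cofibrant-prop}(iv), which makes a $\sum^e$ cofibrant once its base and all its fibres are cofibrant, and Proposition \ref{sharp-prop}(iv), which does the same for sharpness. So the whole proof decomposes into two checks --- that the base exo-type $\List^e(\Parens)$ is cofibrant and sharp, and that each fibre $\isbalanced(l,n)$ is cofibrant and sharp --- and the passage from the fixed-$n$ statement to ``for any $n$'' is automatic, since the argument is uniform in $n$.

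For the base, first I would observe that $\Parens$ is a two-element exo-type, hence $\Parens\cong\unit^e+^e\unit^e$ (send $\popen\mapsto\inl^e\star^e$ and $\pclose\mapsto\inr^e\star^e$, with the evident inverse given by case analysis). Since $\unit^e$ is fibrant (Proposition \ref{fibrant}(ii)), it is cofibrant (Proposition \ref{cofibrant-prop}(i)) and sharp (Proposition \ref{sharp-prop}(i)); therefore $\unit^e+^e\unit^e$ is cofibrant by Proposition \ref{cofibrant-prop}(iii) and sharp by Proposition \ref{sharp-prop}(iii), and transporting along the exo-isomorphism (Propositions \ref{cofibrant-prop}(ii) and \ref{sharp-prop}(ii)) gives that $\Parens$ itself is cofibrant and sharp. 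This is exactly where the hypothesis ``$\NN^e$ is cofibrant'' is consumed: with $\Parens$ cofibrant and $\NN^e$ cofibrant, Proposition \ref{cofib-list} gives $\List^e(\Parens)$ cofibrant; with $\Parens$ moreover sharp and $\NN^e$ cofibrant, Proposition \ref{sharp-list-2} gives $\List^e(\Parens)$ sharp.

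For the fibres, the one point that genuinely needs care is that $\isbalanced$ has so far only been described informally; I would first pin it down as an honest recursion, say on $l:\List^e(\Parens)$ and $n:\NN^e$, with clauses $\isbalanced(\nil^e,\zero^e):=\unit^e$, $\isbalanced(\nil^e,\suc^e n):=\emptype^e$, $\isbalanced(\popen\cons^e l,n):=\isbalanced(l,\suc^e n)$, $\isbalanced(\pclose\cons^e l,\zero^e):=\emptype^e$, and $\isbalanced(\pclose\cons^e l,\suc^e n):=\isbalanced(l,n)$, checking that this agrees with the sample computations given above. With such a definition every $\isbalanced(l,n)$ is definitionally either $\unit^e$ or $\emptype^e$, both of which are cofibrant (Proposition \ref{cofibrant-prop}(iii)) and sharp (Propositions \ref{sharp-prop}(i) and (iii)). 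Hence, fixing $n:\NN^e$, the family $l\mapsto\isbalanced(l,n):\List^e(\Parens)\to\U^e$ is pointwise cofibrant and sharp over the cofibrant and sharp base $\List^e(\Parens)$, and Propositions \ref{cofibrant-prop}(iv) and \ref{sharp-prop}(iv) close the argument.

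I do not expect any deep obstacle here: the lemma is essentially a bookkeeping corollary of the preservation machinery developed earlier together with the new Propositions \ref{cofib-list} and \ref{sharp-list-2}. The only real work is (i) choosing a recursive definition of $\isbalanced$ that returns $\unit^e$ or $\emptype^e$ on the nose, so that the fibre hypotheses of Propositions \ref{cofibrant-prop}(iv) and \ref{sharp-prop}(iv) hold literally rather than up to exo-isomorphism, and (ii) keeping track that ``$\NN^e$ cofibrant'' is used solely to obtain cofibrancy --- and thence, via the sharpness of $\Parens$, sharpness --- of the intermediate exo-type $\List^e(\Parens)$.
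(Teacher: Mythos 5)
Your proposal is correct and follows essentially the same route as the paper's proof: decompose $\Balanced(n)$ as the $\sum^e$ of the fibres $\isbalanced(l,n)$ over $\List^e(\Parens)$, get cofibrancy and sharpness of the base from the finiteness of $\Parens$ together with Propositions \ref{cofib-list} and \ref{sharp-list-2} (this is where cofibrant $\NN^e$ enters), note each fibre is $\unit^e$ or $\emptype^e$ and hence cofibrant and sharp, and close with Propositions \ref{cofibrant-prop}(iv) and \ref{sharp-prop}(iv). Your extra care in spelling out the recursion for $\isbalanced$ and the isomorphism $\Parens\cong\unit^e+^e\unit^e$ only makes explicit what the paper leaves informal.
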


\begin{proof}
By definition, $\Parens$ is a finite exo-type. Therefore, it is both cofibrant (Proposition \ref{cofibrant-prop}(iii)) and sharp (Proposition \ref{sharp-prop}(v)). Since we assume $\NN^e$ is cofibrant, Proposition \ref{cofib-list} shows that $\List^e(\Parens)$ is cofibrant, and Proposition \ref{sharp-list-2} shows that $\List^e(\Parens)$ is sharp.

Since $\isbalanced(l,n)$ is both cofibrant and sharp for any $l:\List^e(\Parens)$ and $n:\NN^e$, the exo-type $\Balanced(n)$ is cofibrant by Proposition \ref{cofibrant-prop}(iv) and sharp by Proposition \ref{sharp-prop}(iv).
\end{proof}

The exo-type that we use to show $\UnLBinTree^e$ is both cofibrant and sharp, is $\Balanced (\zero^e)$. The following result will be analogous to the combinatorial result that there is a one-to-one correspondence between full binary trees and balanced parenthesizations \cite{graphref}.

\begin{prop}\label{equiv-unbalanced}
There is an isomorphism $\UnLBinTree^e\cong \Balanced (\zero^e)$.
\end{prop}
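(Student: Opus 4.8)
The plan is to realize the isomorphism through the standard recursive encoding of binary trees as balanced parenthesizations, together with its parsing inverse.

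\emph{The encoding}. First I would define $\phi:\UnLBinTree^e\to\List^e(\Parens)$ by structural recursion,
\[\phi(\uleaf^e):=\nil^e,\qquad \phi(\unode^e(t_1,t_2)):=\popen\cons^e\big(\phi(t_1)\cdot(\pclose\cons^e\phi(t_2))\big),\]
where $\cdot$ denotes concatenation of exo-lists. Then, using the defining clauses of $\isbalanced$ — in particular $\isbalanced(\nil^e,\zero^e)\exoeq\unit^e$, $\isbalanced(\popen\cons^e l,n)\exoeq\isbalanced(l,\suc^e n)$, $\isbalanced(\pclose\cons^e l,\suc^e n)\exoeq\isbalanced(l,n)$ and $\isbalanced(\pclose\cons^e l,\zero^e)\exoeq\emptype^e$ — I would prove by induction on $t$ (and the unit and associativity laws of $\cdot$) the auxiliary identity $\isbalanced(\phi(t)\cdot l,\,n)\exoeq\isbalanced(l,n)$ for all $l:\List^e(\Parens)$ and $n:\NN^e$. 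Specializing to $l:=\nil^e$, $n:=\zero^e$ gives $\isbalanced(\phi(t),\zero^e)\exoeq\unit^e$, so $\phi$ lifts to $\Phi:\UnLBinTree^e\to\Balanced(\zero^e)$ with $\Phi(t):=(\phi(t),\star^e)$.

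\emph{The parsing inverse}. The key technical step is a \emph{decomposition lemma}: for every $n:\NN^e$ and $l:\List^e(\Parens)$, if $\isbalanced(l,\suc^e n)$ is inhabited then there are \emph{unique} $w_1,w_2:\List^e(\Parens)$ with $l\exoeq w_1\cdot(\pclose\cons^e w_2)$, $\isbalanced(w_1,n)$, $\isbalanced(w_2,\zero^e)$, and moreover $|w_1|<|l|$ and $|w_2|<|l|$ (writing $|l|$ for the length of $l$). One proves this by induction on $l$ — cutting $l$ at the first closing parenthesis that brings the nesting depth back down to $n$ — the cases $l=\popen\cons^e l'$ and $l=\pclose\cons^e l'$ feeding back into the hypothesis at heights $\suc^e n$ and $n$. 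Using its $n:=\zero^e$ instance I would then define $\Psi:\Balanced(\zero^e)\to\UnLBinTree^e$ by well-founded recursion on $|l|$: set $\Psi(\nil^e,-):=\uleaf^e$; the case $\pclose\cons^e l$ is vacuous since then the second component inhabits $\isbalanced(\pclose\cons^e l,\zero^e)\exoeq\emptype^e$; and $\Psi(\popen\cons^e l,\,p):=\unode^e\big(\Psi(w_1,-),\Psi(w_2,-)\big)$, where $(w_1,w_2)$ is the decomposition of $l$ provided by the lemma (applicable because $p$ yields a term of $\isbalanced(l,\suc^e\zero^e)$, and well-founded because $|w_1|,|w_2|<|l|<|\popen\cons^e l|$).

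\emph{Mutual inverseness}. For $\Psi\ecirc\Phi\exoeq\id$ I would induct on the tree; in the node case the decomposition of $\phi(t_1)\cdot(\pclose\cons^e\phi(t_2))$ is forced by the \emph{uniqueness} clause to be $(\phi(t_1),\phi(t_2))$, since both lists are balanced by the auxiliary identity and manifestly exhibit such a splitting, whence $\Psi(\Phi(\unode^e(t_1,t_2)))\exoeq\unode^e(\Psi(\Phi(t_1)),\Psi(\Phi(t_2)))$ — using that $\isbalanced(-,\zero^e)$ is an exo-proposition to ignore the proof components — and the inductive hypotheses close the case. For $\Phi\ecirc\Psi\exoeq\id$, observe that $\isbalanced(l,\zero^e)$ is always $\unit^e$ or $\emptype^e$, hence an exo-proposition (indeed, by $\UIP$ every exo-type is an exo-set), so the second components agree automatically and it suffices to prove $\phi(\Psi(l,p))\exoeq l$; this follows by well-founded induction on $|l|$ from the definition of $\phi$ together with $l\exoeq w_1\cdot(\pclose\cons^e w_2)$. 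Bundling $\Phi$, $\Psi$ and these two homotopies gives the desired term of $\UnLBinTree^e\cong\Balanced(\zero^e)$.

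The hard part will be the decomposition lemma: it must be stated generally enough — over all starting heights $n$, carrying both the uniqueness clause and the length bounds — so that the induction closes and so that the single lemma simultaneously licenses the well-founded definition of $\Psi$ and the identity $\Psi\ecirc\Phi\exoeq\id$. Once it is in place, each remaining step is a routine structural or well-founded induction. It is worth noting that, unlike the results built on top of it, this proposition needs no cofibrancy hypothesis on $\NN^e$; it is a plain construction inside the exo-theory.
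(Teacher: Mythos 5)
Your proposal is correct, but it takes a genuinely different route from the paper's. The paper defines only the forward map, and does so in accumulator style: $\phi:\UnLBinTree^e\rightarrow (n:\NN^e)\rightarrow\Balanced(n)\rightarrow\Balanced(n)$ threads the balancedness witness through the recursion (encoding a node as $l_2\,(\,l_1\,)$, so no concatenation or balance-invariance lemma is ever needed), and then defers the inverse and both round-trip identities entirely to the Agda formalization. You instead use the direct encoding $(\,l_1\,)\,l_2$ via concatenation, prove the invariance $\isbalanced(\phi(t)\cdot l,n)\exoeq\isbalanced(l,n)$, and supply the missing half on paper: a first-block decomposition lemma with uniqueness and length bounds, a parser $\Psi$ by well-founded recursion on length, $\Psi\ecirc\Phi\exoeq\id$ from uniqueness, and $\Phi\ecirc\Psi\exoeq\id$ from proof irrelevance of $\isbalanced(-,\zero^e)$ --- which, as you correctly note first, holds because that exo-type is literally $\unit^e$ or $\emptype^e$; your parenthetical appeal to $\UIP$/exo-sets would not give it, since exo-sets need not be exo-propositions. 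What each approach buys: the paper's accumulator makes balancedness of the image come for free and suits the formalization, at the cost of leaving the inverse implicit; yours costs some list bookkeeping (unit/associativity of $\cdot$, and treating the paper's informally specified $\isbalanced$ via its evident recursion equations) but yields a self-contained proof, with the uniqueness clause doing double duty for the definition of $\Psi$ and for $\Psi\ecirc\Phi\exoeq\id$; and you are right that no cofibrancy of $\NN^e$ enters at this point. Two small slips in your sketch of the decomposition lemma, both forced into the correct shape once you write the induction down: the cut is at the first closing parenthesis that returns the counter to zero (the match of the single implicit opening parenthesis), not ``back down to $n$'' --- for $l=\pclose\cons^e\pclose\cons^e\nil^e$ and $n=\suc^e\zero^e$ the correct split is at the second $\pclose$, not the first; and in the case $l=\pclose\cons^e l'$ you only have $\isbalanced(l',n)$, so you must case on $n$ and recurse at its predecessor (with $n=\zero^e$ giving the base split $w_1=\nil^e$), rather than ``feeding back at height $n$.''
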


\begin{proof}
We define the desired map by explain its construction. For the proof that it is indeed an isomorphism, we refer to its formalization in our library. 

Define first \[\phi : \UnLBinTree^e\rightarrow (n : \NN^e) \rightarrow \Balanced(n) \rightarrow \Balanced(n)\] as follows: 
\begin{eqnarray*}
\phi(\uleaf^e ,\,n,\,b)&:=&b\\
\phi(\unode^e(t_1,t_2),\,n,\,b)&:=&\phi(t_2,\,n,\,b')
\end{eqnarray*}
where 
\begin{eqnarray*}
b':=(\popen \cons^e (\pi_1(\phi\,(t_1,\,\suc^e(n),\,(\pclose \cons^e \pi_1(b),\pi_2(b)))))\;,\\
\pi_2(\phi\,(t_1,\,\suc^e(n),\,(\pclose \cons^e \pi_1(b),\pi_2(b))))).
\end{eqnarray*}
Using this, the main map $\Phi : \UnLBinTree^e \rightarrow \Balanced(\zero^e)$ is defined as \[\Phi(t):=\phi(t,\zero^e,(\nil^e,\star^e)).\] The construction basically maps each tree to a balanced parenthesization in the following way. A leaf is represented by the empty list of parentheses. If trees $t_1$ and $t_2$ have representations $l_1$ and $l_2$, then the tree $\unode^e(t_1,t_2)$ is represented by the list $l_2\,(\,l_1\,)$. Figure \ref{tree-conversion} provides some examples of this conversion. 

The inverse of $\Phi$ is defined precisely by reversing this process, but it needs some auxiliary definitions. One can see the formalization for the details. 
\end{proof}

\newpage

\begin{figure}[ht]
\begin{tikzcd}
        &                                                 & \bullet \arrow[ld, no head] \arrow[rd, no head] & {} \arrow[rr, dashed]                           &         & ()     \\
        & \bullet                                         &                                                 & \bullet                                         &         &        \\
        &                                                 & \bullet \arrow[ld, no head] \arrow[rd, no head] & {} \arrow[rr, dashed]                           &         & ()()   \\
        & \bullet                                         &                                                 & \bullet \arrow[ld, no head] \arrow[rd, no head] &         &        \\
        &                                                 & \bullet                                         &                                                 & \bullet &       \\
        &                                                 & \bullet \arrow[ld, no head] \arrow[rd, no head] & {} \arrow[rr, dashed]                           &         & (()()) \\
        & \bullet \arrow[ld, no head] \arrow[rd, no head] &                                                 & \bullet                                         &         &        \\
\bullet &                                                 & \bullet \arrow[ld, no head] \arrow[rd, no head] &                                                 &         &        \\
        & \bullet                                         &                                                 & \bullet                                         &         &        \\
\end{tikzcd}

\begin{tikzcd}
        &                                                 &                                                 & \bullet \arrow[lld, no head] \arrow[rrd, no head] &         & {} \arrow[rr, dashed]                           &         & ()(()()) \\
        & \bullet \arrow[ld, no head] \arrow[rd, no head] &                                                 &                                                   &         & \bullet \arrow[ld, no head] \arrow[rd, no head] &         &          \\
\bullet &                                                 & \bullet \arrow[ld, no head] \arrow[rd, no head] &                                                   & \bullet &                                                 & \bullet &          \\
        & \bullet                                         &                                                 & \bullet                                           &         &                                                 &         &         
\end{tikzcd}
\caption{Examples of the conversion between binary trees and parenthesization.}\label{tree-conversion}
\end{figure}
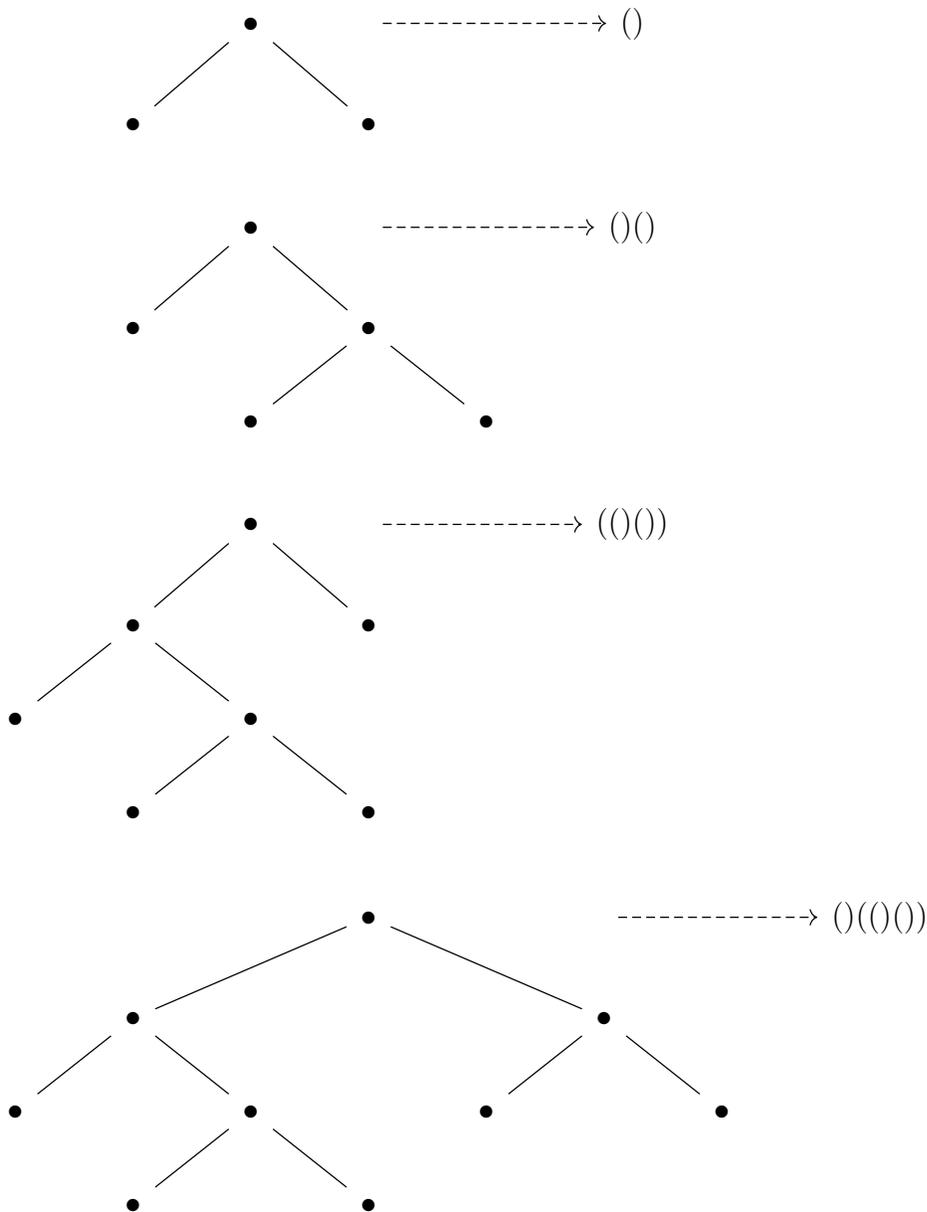 

This isomorphism provides the results we wanted.

\begin{cor}\label{unltree-cofib/sharp}
If $\NN^e$ is cofibrant, then $\UnLBinTree^e$ is both cofibrant and sharp.
\end{cor}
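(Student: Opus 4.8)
The plan is to combine the isomorphism of Proposition \ref{equiv-unbalanced} with the cofibrancy/sharpness transfer along isomorphisms, using Lemma \ref{cofib/sharp-balanced} as the source of a cofibrant and sharp exo-type. Concretely, I would proceed as follows.

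First, assume $\NN^e$ is cofibrant. Apply Lemma \ref{cofib/sharp-balanced} with $n := \zero^e$ to conclude that $\Balanced(\zero^e)$ is both cofibrant and sharp. Next, invoke Proposition \ref{equiv-unbalanced} to obtain the exo-isomorphism $\UnLBinTree^e \cong \Balanced(\zero^e)$; since $\cong$ is symmetric (an exo-isomorphism comes packaged with its inverse), this is equally an isomorphism $\Balanced(\zero^e) \cong \UnLBinTree^e$, so the direction needed by the preservation lemmas is available. Then apply Proposition \ref{cofibrant-prop}(ii) to transfer cofibrancy from $\Balanced(\zero^e)$ to $\UnLBinTree^e$, and apply Proposition \ref{sharp-prop}(ii) to transfer sharpness likewise. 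This yields that $\UnLBinTree^e$ is both cofibrant and sharp, as desired.

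There is essentially no genuine obstacle at this stage: all the substantive work has already been carried out in the construction of the balanced-parenthesization exo-type, in Lemma \ref{cofib/sharp-balanced} (which itself rests on cofibrancy of $\List^e(\Parens)$ via Propositions \ref{cofib-list} and \ref{sharp-list-2}), and in the combinatorial bijection of Proposition \ref{equiv-unbalanced}. The only point requiring a moment's care is to make sure the isomorphism is used in the orientation expected by Propositions \ref{cofibrant-prop}(ii) and \ref{sharp-prop}(ii); but since those preservation results are stated for exo-isomorphic pairs $A \cong B$ and $\cong$ is an equivalence relation on exo-types, this is immediate. Hence the corollary follows directly, and one could even remark that the same argument gives, more generally via isomorphism \eqref{tree-equiv-def} together with Propositions \ref{cofibrant-prop}(iii) and \ref{sharp-prop}(iii), conditions under which $\BinTree^e(N,L)$ is cofibrant or sharp.
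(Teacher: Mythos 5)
Your proof is correct and is essentially the paper's own argument: the paper's proof of this corollary is exactly the combination of Lemma \ref{cofib/sharp-balanced} (with $n:=\zero^e$) and the isomorphism of Proposition \ref{equiv-unbalanced}, transferred via Propositions \ref{cofibrant-prop}(ii) and \ref{sharp-prop}(ii). Your extra remark about orienting the isomorphism by symmetry is a sensible bit of care but does not change the argument.
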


\begin{proof}
It follows from Lemma \ref{cofib/sharp-balanced} and Proposition \ref{equiv-unbalanced}.
\end{proof}

\begin{cor}
If $\NN^e$ is cofibrant, and $N,L:\U^e$ are cofibrant (sharp) exo-types, then $\BinTree^e(N,L)$ is cofibrant (sharp).
\end{cor}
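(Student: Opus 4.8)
The plan is to reduce the statement to the isomorphism \ref{tree-equiv-def} together with the already-established Corollary \ref{unltree-cofib/sharp}, and then chain the preservation properties of cofibrancy and sharpness. First I would invoke \ref{tree-equiv-def}, which exhibits an exo-isomorphism
\[
\BinTree^e(N,L)\;\cong\;{\sum}^e_{t:\UnLBinTree^e}\Bigl(N^{\text{\# of nodes of }t}\times^e L^{\text{\# of leaves of }t}\Bigr),
\]
where the number of nodes and the number of leaves define functions $\UnLBinTree^e\to\NN^e$ by recursion on $\UnLBinTree^e$ (the target being an exo-type, such recursion is unproblematic). By Proposition \ref{cofibrant-prop}(ii) in the cofibrant case and Proposition \ref{sharp-prop}(ii) in the sharp case, it then suffices to prove that the $\sum^e$-exo-type on the right is cofibrant (respectively sharp).

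For this I need two ingredients. For the base exo-type: since $\NN^e$ is assumed cofibrant, Corollary \ref{unltree-cofib/sharp} gives that $\UnLBinTree^e$ is both cofibrant and sharp. For the fibers: fix $t:\UnLBinTree^e$ and note that $N^{k}\times^e L^{m}$, with $k$ the number of nodes and $m$ the number of leaves, is a finite iterated exo-product of copies of $N$, copies of $L$, and $\unit^e$. Recall $N^{\zero^e}:=\unit^e$ and $N^{\suc^e n}:=N\times^e N^n$; since $\unit^e$ is fibrant (Proposition \ref{fibrant}(ii)), it is cofibrant (Proposition \ref{cofibrant-prop}(i)) and sharp (Proposition \ref{sharp-prop}(i)). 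A straightforward induction on $n:\NN^e$, using that exo-products of two cofibrant (resp. sharp) exo-types are cofibrant by Proposition \ref{cofibrant-prop}(iii) (resp. sharp by Proposition \ref{sharp-prop}(iii)), shows that $N^n$ is cofibrant (sharp) for every $n:\NN^e$ whenever $N$ is, and likewise $L^m$; one further application of the product rule shows each fiber $N^{k}\times^e L^{m}$ is cofibrant (sharp).

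With a cofibrant (sharp) base exo-type and cofibrant (sharp) fibers, Proposition \ref{cofibrant-prop}(iv) (resp. Proposition \ref{sharp-prop}(iv)) shows the displayed $\sum^e$-exo-type is cofibrant (sharp), and transporting back along \ref{tree-equiv-def} via Proposition \ref{cofibrant-prop}(ii) (resp. Proposition \ref{sharp-prop}(ii)) completes the argument. The only step carrying any content beyond bookkeeping is the induction on $\NN^e$ showing that iterated exo-products preserve cofibrancy and sharpness, and even that is routine given the binary-product cases already in hand; the genuine combinatorial input — the isomorphism \ref{tree-equiv-def} and Corollary \ref{unltree-cofib/sharp} (itself resting on Proposition \ref{equiv-unbalanced} and Lemma \ref{cofib/sharp-balanced}) — has already been supplied, so I expect no real obstacle here.
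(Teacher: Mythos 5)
Your proof is correct and follows essentially the same route as the paper, whose proof is simply the one-line observation that the claim follows from the isomorphism \ref{tree-equiv-def} and Corollary \ref{unltree-cofib/sharp}. You merely make explicit the routine bookkeeping the paper leaves implicit (closure of cofibrancy/sharpness under finite exo-products by induction on $\NN^e$, the $\sum^e$ preservation rules, and invariance under exo-isomorphism), which is exactly the intended argument.
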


\begin{proof}
It follows from the isomorphism \ref{tree-equiv-def} and Corollary \ref{unltree-cofib/sharp}.
\end{proof}

\textbf{Agda Side.} The file \texttt{BinTree} in our Agda library \cite{agdalib} include the proof of the isomorphism \ref{tree-equiv-def}. The file \verb|Cofibrancy_of_BinTree| includes the Proposition \ref{equiv-unbalanced} and the cofibrancy results about (unlabeled) binary trees. The file \verb|On_Sharpness_of_BinTree| includes the sharpness results about (unlabeled) binary trees.

\section{Semantics of two-level type theory}\label{semchap}

In this section, we will examine the semantic aspect of the theory discussed in the previous sections. In order for the axiom we accept about natural numbers to have meaning, we will investigate its models, preferably a large number of them. To do this, we will first provide the necessary background information about the models of the two-level type theory and then introduce the additional conditions required for the fulfillment of the aforementioned axiom.

During this section, we follow the conventions below.

\textbf{Variables.} $\Gamma$, $\Delta$, $\ldots$ stand for \emph{contexts}, $\sigma$, $\theta$, $\tau$, $\ldots$ for \emph{context morphisms}, $P$, $Q$, $R$, $\ldots$ for \emph{presheaves}, $A$, $B$, $C$, $Y$, $\ldots$ for \emph{types} and \emph{type families}, and $a$, $b$, $c$, $\ldots$ for \emph{terms}.

\textbf{Substitution.} Whenever $P : \C^{\op} \rightarrow \Set$ is a presheaf, $\sigma : \Delta \rightarrow \Gamma$ a morphism, and $A : P(\Gamma)$, we write $A[\sigma]$ instead of $P(\sigma)(A)$. 

\textbf{Equality signs.} Recall 2LTT has two different equality signs: ``$\exoeq$" and ``$=$". Now, another equality comes forward, that is the equality in \emph{metatheory}. We reserve ``$=$" for the metatheory's equality, and use ``$\Id$" for the identity type, ``$\Eq$" for the exo-equality.

\subsection{Category with families}

\begin{defn}\label{CwF-defn}
A category with families (CwF) consists of the following:
\begin{itemize} 
\item A category $\C$ with a terminal object $ 1_{\C} : \C$. Its objects are called \emph{contexts}, and $1_{\C}$ is called the \emph{empty context}.
\item A presheaf $\Ty : \C^{\op} \rightarrow \Set$. If $ A : \Ty(\Gamma)$, then we say $A$ \emph{is a type over} $\Gamma$.
\item A presheaf $\Tm : (\int \Ty)^{\op} \rightarrow \Set$. If $a : \Tm(\Gamma, A)$, then we say $a$ \emph{is a term of} $A$. 
\item For any $\Gamma : \C$ and $A : \Ty(\Gamma)$, there is an object $\Gamma.A :\C$, a morphism $p_A : \Gamma.A \rightarrow \Gamma$, and a term $q_A : \Tm(\Gamma.A,A[p_A])$ with the universal property: for any object $\Delta : \C$, a morphism $\sigma : \Delta \rightarrow \Gamma$, and a term $a:\Tm(\Delta,A[\sigma])$, there is a unique morphism $\theta : \Delta \rightarrow \Gamma.A$ such that $p_A \circ \theta = \sigma$ and $q_A[\theta]=a$. This operation is called the \emph{context extension}.
\end{itemize}
\end{defn}

Note that for all contexts $\Gamma : \C$ and types $A : \Ty(\Gamma)$, there is a natural isomorphism \[\Tm(\Gamma,A)\cong \C/{\Gamma}((\Gamma,\id_{\Gamma}),(\Gamma.A,p_A)).\] Indeed, this follows from the universal property of the context extension by taking $\Delta:=\Gamma$ and $\sigma:=\id_{\Gamma}$. This observation says that the terms of $A$ over $\Gamma$ can be regarded as the sections of $p_A:\Gamma.A \rightarrow \Gamma$.

The proposition below is a useful fact for the rest of the section.

\begin{prop}
Let $\sigma :\Delta \rightarrow \Gamma$ be a context morphism and $A:\Ty(\Gamma)$. There exists a morphism $\sigma^{+}:\Delta.A[\sigma] \rightarrow \Gamma.A$ that makes the following diagram into a pullback square:
\begin{center}
\begin{tikzcd}
{\Delta.A[\sigma]} \arrow[rr, "\sigma^{+}"] \arrow[dd, "{p_{A[\sigma]}}"'] &  & \Gamma.A \arrow[dd, "p_A"] \\
                                                                           &  &                            \\
\Delta \arrow[rr, "\sigma"']                                               &  & \Gamma                    
\end{tikzcd}.
\end{center}
\end{prop}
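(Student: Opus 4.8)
The plan is to construct $\sigma^{+}$ via the universal property of the context extension $\Gamma.A$, and then verify the pullback property directly by another application of that same universal property.

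First I would build the morphism. We have $p_{A[\sigma]} : \Delta.A[\sigma] \to \Delta$ and then $\sigma \circ p_{A[\sigma]} : \Delta.A[\sigma] \to \Gamma$. To apply the universal property of $\Gamma.A$ with this as the base morphism, I need a term in $\Tm(\Delta.A[\sigma], A[\sigma \circ p_{A[\sigma]}])$. The term $q_{A[\sigma]} : \Tm(\Delta.A[\sigma], A[\sigma][p_{A[\sigma]}])$ is exactly such a term, since $A[\sigma][p_{A[\sigma]}] = A[\sigma \circ p_{A[\sigma]}]$ by functoriality of the presheaf $\Ty$ (here I use the substitution convention $A[\tau] = \Ty(\tau)(A)$, so $\Ty$ being a functor gives $A[\sigma][p_{A[\sigma]}] = \Ty(p_{A[\sigma]})(\Ty(\sigma)(A)) = \Ty(\sigma \circ p_{A[\sigma]})(A)$). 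Thus the universal property yields a unique $\sigma^{+} : \Delta.A[\sigma] \to \Gamma.A$ with $p_A \circ \sigma^{+} = \sigma \circ p_{A[\sigma]}$ and $q_A[\sigma^{+}] = q_{A[\sigma]}$. The first equation is precisely commutativity of the square.

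Next I would verify the square is a pullback. Suppose given an object $\Theta : \C$ together with morphisms $f : \Theta \to \Gamma.A$ and $g : \Theta \to \Delta$ satisfying $p_A \circ f = \sigma \circ g$; I must produce a unique $h : \Theta \to \Delta.A[\sigma]$ with $p_{A[\sigma]} \circ h = g$ and $\sigma^{+} \circ h = f$. To get $h$ I apply the universal property of $\Delta.A[\sigma]$ with base morphism $g : \Theta \to \Delta$; this requires a term in $\Tm(\Theta, A[\sigma][g]) = \Tm(\Theta, A[\sigma \circ g]) = \Tm(\Theta, A[p_A \circ f]) = \Tm(\Theta, A[p_A][f])$, and the term $q_A[f]$ has exactly this type. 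So there is a unique $h$ with $p_{A[\sigma]} \circ h = g$ and $q_{A[\sigma]}[h] = q_A[f]$. It remains to check $\sigma^{+} \circ h = f$, which I would do by showing $\sigma^{+} \circ h$ satisfies the two defining equations that characterize $f$ via the universal property of $\Gamma.A$ applied with base morphism $p_A \circ f = \sigma \circ g$: namely $p_A \circ (\sigma^{+} \circ h) = \sigma \circ g$ (from $p_A \circ \sigma^{+} = \sigma \circ p_{A[\sigma]}$ and $p_{A[\sigma]} \circ h = g$) and $q_A[\sigma^{+} \circ h] = q_A[f]$ (from $q_A[\sigma^{+}] = q_{A[\sigma]}$, hence $q_A[\sigma^{+} \circ h] = q_{A[\sigma]}[h] = q_A[f]$). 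Uniqueness of $h$: any $h'$ with $\sigma^{+} \circ h' = f$ and $p_{A[\sigma]} \circ h' = g$ also satisfies $q_{A[\sigma]}[h'] = q_A[\sigma^{+}][h'] = q_A[\sigma^{+} \circ h'] = q_A[f]$, so $h'$ meets the two conditions that uniquely determine $h$.

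The main obstacle is bookkeeping rather than conceptual: one must be careful and consistent with the functoriality rewrites $A[\tau][\rho] = A[\tau \circ \rho]$ and with how the $\Tm$ presheaf acts on morphisms in $\int \Ty$, so that all the term-equalities typecheck. Everything else is a mechanical double application of the context-extension universal property, once to construct $\sigma^{+}$ and once (at a general test cone) to establish the pullback.
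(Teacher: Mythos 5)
Your proof is correct and follows essentially the same route as the paper: construct $\sigma^{+}$ from the universal property of $\Gamma.A$ applied to $\sigma\circ p_{A[\sigma]}$ (with the term $q_{A[\sigma]}$), then verify the pullback property by building the mediating morphism from the universal property of $\Delta.A[\sigma]$ and checking the two defining equations. If anything, you are more careful than the paper, which when concluding $\sigma^{+}\circ\theta=\eta$ only compares the base projections and leaves the matching of the $q_A$-components (and the uniqueness of the mediating morphism) implicit, details that you spell out correctly.
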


\begin{proof}
The existence of a morphism $\sigma^{+}$ follows from the universal property for the extension $\Gamma.A$, using the morphism $\sigma \circ p_{A[\sigma]}:\Delta.A[\sigma] \rightarrow \Gamma$. Consider another commutative diagram of the form
\begin{center}
\begin{tikzcd}
\Theta \arrow[rrrd, "\eta", bend left] \arrow[rddd, "\tau"', bend right] &                                                                            &  &                            \\
                                                                      & {\Delta.A[\sigma]} \arrow[rr, "\sigma^{+}"] \arrow[dd, "{p_{A[\sigma]}}"'] &  & \Gamma.A \arrow[dd, "p_A"] \\
                                                                      &                                                                            &  &                            \\
                                                                      & \Delta \arrow[rr, "\sigma"']                                               &  & \Gamma                    
\end{tikzcd}.
\end{center} Universal property for the extension $\Delta.A[\sigma]$ gives a unique morphism $\theta :\Theta \rightarrow \Delta.A[\sigma]$ such that $p_{A[\sigma]} \circ \theta = \tau$. Since we have both \[p_A\circ \eta=\sigma \circ \tau\] and \[p_A\circ \sigma^{+} \circ \theta= \sigma \circ p_{A[\sigma]} \circ \theta = \sigma \circ \tau,\] by the universal property for the extension $\Gamma.A$, we have $ \sigma^{+} \circ \theta = \eta$.
\end{proof}

Rather than presenting a specific instance of a CwF, we will offer a more extensive range of CwF examples in the subsequent section.

\subsubsection*{Presheaf CwFs}\label{presheaf-CwF}

The category of presheaves is an archetypal example of a CwF \cite{hofmann}. Let $\C$ be a (small) category, and $\widehat{\C}$ be its category of presheaves. The CwF structure on $\widehat{\C}$, denoted by $(\widehat{\Ty}, \widehat{\Tm})$, is defined in the following manner:

\begin{itemize}
\item Contexts are presheaves $\C^{\op} \rightarrow \Set$.
\item The constant presheaf that takes the value ${\star}$ in the category of sets can be characterized as the terminal object $1_{\widehat{\C}}$.
\item Recall that $\widehat{\Ty}$ is a presheaf on $\widehat{\C}$. 

If $P:\widehat{\C}$, then $\widehat{\Ty}(P)$ is the underlying set of the category of presheaves $\widehat{\int P}$ over \emph{the category of elements} $\int P$. In other words, a type $A:\widehat{\Ty}(P)$ is a functor $\left(\int P \right)^{\op} \rightarrow \Set$. 

If $\phi:Q \rightarrow P$ is a morphism in $\widehat{\C}$ and $A:\widehat{\Ty}(P)$, we define the type substitution $A[\phi]:\widehat{\Ty}(Q)$ as \[A[\phi](\Gamma,x):= A (\Gamma,\phi_{\Gamma}(x))\] where $x:Q_{\Gamma}$.

\item Recall that $\widehat{\Tm}$ is a presheaf on $\int \widehat{\Ty}$. 

For $P:\widehat{\C}$ and $A:\widehat{\Ty}(P)$, we define \[\widehat{\Tm}(P,A) := \left\lbrace a : \prod_{\Gamma : \C,\, x: P_{\Gamma}} A(\Gamma,x) \mid \text{ if } \sigma : \Delta \rightarrow \Gamma, \, x : P_{\Gamma}\text{, then } a(\Gamma,x)[\sigma]=a(\Delta,x[\sigma])\right\rbrace.\] 

If $\phi:Q \rightarrow P$ is a morphism in $\widehat{\C}$, $A:\widehat{\Ty}(P)$, and $a:\widehat{\Tm}(P,A)$, we define the term substitution $a[\phi]:\widehat{\Tm}(Q,A[\phi])$ as \[a[\phi](\Gamma,x):= a (\Gamma,\phi_{\Gamma}(x))\] where $x:Q_{\Gamma}$.

\item For $P:\widehat{\C}$ and $A:\widehat{\Ty}(P)$, the context $P.A$ is again a presheaf over $\C$ defined by \[P.A(\Gamma):=\coprod_{x : P(\Gamma)} A(\Gamma, x).\] 

If $\sigma : \Delta \rightarrow \Gamma$ is a morphism in $\C$ and $(x,a):P.A (\Gamma)$, then we define \[(x,a)[\sigma]=P.A(\sigma) (x,a):=(x[\sigma],a[\sigma]).\]

The morphism $p_A:P.A \rightarrow P$ is defined by the first projection. In other words, for $\Gamma : \C$ and $(x ,a): P.A(\Gamma)$, we have $(p_A)_{\Gamma}(x,a)=x$. The term $q_A: \widehat{\Tm}(P.A,A[p_A])$ is given by the second projection. In other words, for $\Gamma:\C$ and $(x,a): P.A(\Gamma)$, we have $q_A(\Gamma, (x,a))=a$. Note that $A[p_A](\Gamma,(x,a))=A(\Gamma,p_A(x,a))=A(\Gamma,x)$.

It remains to verify the universal property for the context extension. 

Let $Q:\widehat{\C}$, $\tau:Q\rightarrow P$, and $b:\widehat{\Tm}(Q,A[\tau])$. Define $\theta :Q \rightarrow P.A$ as follows: for $\Gamma : \C$ and $x:Q_{\Gamma}$, we have \[\theta_{\Gamma}(x):= (\tau_{\Gamma}(x),b(\Gamma,x)).\] It is straightforward to verify the defining rules: \underline{$p_A\circ \theta =\tau$} because \[(p_A\circ \theta)_{\Gamma}(x)=p_A(\tau_{\Gamma}(x),b(\Gamma,x))=\tau_{\Gamma}(x),\] and \underline{$q_A[\theta]=b$} because \[q_A[\theta](\Gamma,x)=q_A(\Gamma, \theta_{\Gamma}(x))=q_A(\Gamma, (\tau_{\Gamma}(x),b(\Gamma,x)))=b(\Gamma,x).\] Since the defining properties of $p_A$ and $q_A$ determines the map $\theta$, it is uniquely determined. 
\end{itemize}

Therefore $(\widehat{\C},\widehat{\Ty},\widehat{\Tm})$ satisfies the conditions in Definition \ref{CwF-defn}.

\subsubsection*{Simplicial set CwF}

Let $\C=\bigtriangleup$ be the simplex category. The presheaf category $\widehat{\bigtriangleup}$ is called \emph{the category of simplicial sets}, denoted by $\SSet$. Like any other presheaf category, $\SSet$ has a CwF structure $(\widehat{\Ty},\widehat{\Tm})$, but it has another CwF structure. We only define a new type presheaf $\Ty : \SSet^{\op} \rightarrow \Set$ as follows: $\Ty(P)$ is a subset of $\widehat{\Ty}(P)$ such that $A : \Ty(P)$ if the display map $P.A \rightarrow P$ is a \emph{Kan fibration}. With the induced term presheaf $\Tm$ obtained by $\widehat{\Tm}$, we obtain a CwF that is $(\SSet, \Ty, \Tm)$. It is easy to prove that this structure satisfies all axioms of being a CwF \cite{paolo-thesis}. We always refer to this new CwF structure on $\SSet$ unless stated otherwise. 

$\SSet$ is not a unique example of a presheaf category having two CwF structures. There are many of them, and it will be helpful to talk about two-level structures in the later sections.

\subsection{Type formers in CwFs}

The objective of this section is to establish the meanings of particular type formers within a CwF and to examine the requirements that must be fulfilled for the CwF to possess these type formers. Although this analysis could be applied to various standard type formers, our focus will be solely on those indispensable for the subsequent sections.

\subsubsection*{Dependent function types} 

We say that a CwF $(\C, \Ty, \Tm)$ \textit{supports $\Pi$-types} \cite{hofmann} if 
\begin{itemize}
\item for any two types $A:\Ty(\Gamma)$ and $B:\Ty(\Gamma.A)$, there is a type $\Pi(A,B):\Ty(\Gamma)$,
\item for each $b:\Tm(\Gamma.A,B)$, there is a term $\lambda(b):\Tm(\Gamma,\Pi(A,B))$, and
\item for each $f:\Tm(\Gamma,\Pi(A,B))$ and $a:\Tm(\Gamma,A)$, there is a term $\app(f,a):\Tm(\Gamma,B[a])$
\end{itemize}
such that the following equations (with appropriate quantifiers) hold:
\begin{eqnarray*}
&&\app(\lambda(b),a)=b[a]\\
&&\lambda(\app(f,a),q_A)=f\\
&&\Pi(A,B)[\tau]=\Pi(A[\tau],B[\tau^+])\\
&&\lambda(b)[\tau]=\lambda(b[\tau])\\
&&\app(f,a)[\tau]=\app(f[\tau],a[\tau]).
\end{eqnarray*}

Note that using dependent function types, one can define simple function types. Indeed, if $A,B:\Ty(\Gamma)$, then \textit{the type of functions} from $A$ to $B$ over $\Gamma$, denoted by $B^A:\Ty(\Gamma)$ is defined via $\Pi(A,B[p_A])$.

We will show that a presheaf CwF supports $\Pi$-types. 

\begin{prop}
For any (small) category $\C$, the presheaf CwF $(\widehat{\C},\widehat{\Ty},\widehat{\Tm})$ supports $\Pi$-types.
\end{prop}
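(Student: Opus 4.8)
The plan is to give the standard Hofmann-style construction of $\Pi$-types on a presheaf CwF, working entirely with explicit elements, and then to verify the five required equations by direct computation. The only genuinely conceptual point is the definition of the type $\Pi(A,B) : \widehat{\Ty}(P)$: for $\Gamma : \C$ and $x : P(\Gamma)$ I would take $\Pi(A,B)(\Gamma,x)$ to be the set of families $\theta$ assigning, to every morphism $\sigma : \Delta \rightarrow \Gamma$ of $\C$ and every $a : A(\Delta, x[\sigma])$, an element $\theta(\sigma,a) : B(\Delta,(x[\sigma],a))$ subject to the naturality condition $\theta(\sigma,a)[\tau] = \theta(\sigma\circ\tau,\, a[\tau])$ for all $\tau : \Delta' \rightarrow \Delta$. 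One must quantify over all $\sigma$ here — not merely take a product of $B(\Gamma,(x,a))$ over $a : A(\Gamma,x)$ — since only then is the construction stable under substitution. The presheaf action on $\int P$ is $\theta[\sigma](\rho,a) := \theta(\sigma\circ\rho,\,a)$, and functoriality is then immediate from associativity of composition in $\C$.

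First I would check that $\Pi(A,B)$ so defined is a well-defined presheaf over $\int P$. Next, given $b : \widehat{\Tm}(P.A,B)$, I would set $\lambda(b)(\Gamma,x)(\sigma,a) := b(\Delta,(x[\sigma],a))$ and verify (i) that this family satisfies the naturality condition defining $\Pi(A,B)(\Gamma,x)$, using naturality of $b$, and (ii) that $\lambda(b)$ is itself a natural assignment in $(\Gamma,x)$, hence a term. Dually, given $f : \widehat{\Tm}(P,\Pi(A,B))$ and $a : \widehat{\Tm}(P,A)$, I would define $\app(f,a)(\Gamma,x) := f(\Gamma,x)(\id_\Gamma,\, a(\Gamma,x))$, which lies in $B(\Gamma,(x,a(\Gamma,x))) = B[a](\Gamma,x)$, where $B[a]$ is substitution of $B$ along the section $P \rightarrow P.A$, $x \mapsto (x,a(\Gamma,x))$, of $p_A$; naturality of $\app(f,a)$ follows by combining naturality of $f$, of $a$, and of the family $f(\Gamma,x)$.

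Then I would verify the five equations. The $\beta$-rule $\app(\lambda(b),a)=b[a]$ and the two substitution stability equations for $\lambda$ and $\app$ (with $\tau^+$ in place of the abbreviated $\tau$) unfold to literal equalities of elements after a one-line computation each. The equation $\Pi(A,B)[\tau]=\Pi(A[\tau],B[\tau^{+}])$ requires comparing the two sets elementwise: after rewriting $A[\tau](\Delta,y[\sigma]) = A(\Delta,\tau_\Delta(y[\sigma]))$ and $B[\tau^{+}](\Delta,(y[\sigma],a)) = B(\Delta,(\tau_\Gamma(y)[\sigma],a))$ using naturality of $\tau$ and the explicit form of $\tau^{+} : Q.A[\tau] \rightarrow P.A$ from the previous Proposition, the defining families and their naturality conditions coincide on the nose, and one checks that the substitution actions agree as well.

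The step I expect to be the most delicate is the $\eta$-rule, $\lambda(\app(f[p_A],q_A)) = f$. Here one expands $\app(f[p_A],q_A)(\Gamma,(x,a)) = f(\Gamma,x)(\id_\Gamma,a)$, feeds this term into $\lambda$ to get $\lambda(\app(f[p_A],q_A))(\Gamma,x)(\sigma,a') = f(\Delta,x[\sigma])(\id_\Delta,a')$, and must then recognize this as $f(\Gamma,x)(\sigma,a')$: this uses naturality of $f$ in the form $f(\Delta,x[\sigma]) = f(\Gamma,x)[\sigma]$ together with the identity $f(\Gamma,x)[\sigma](\id_\Delta,a') = f(\Gamma,x)(\sigma\circ\id_\Delta,a') = f(\Gamma,x)(\sigma,a')$ coming from the definition of the substitution action on $\Pi(A,B)$. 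This is precisely where the decision to quantify over arbitrary $\sigma$ in the definition pays off, and overlooking that quantification is the classic pitfall in this verification.
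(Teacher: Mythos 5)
Your construction is exactly the one the paper uses: $\Pi(A,B)(\Gamma,x)$ as natural families indexed by all $\sigma:\Delta\rightarrow\Gamma$ and $a:A(\Delta,x[\sigma])$, with the same substitution action, the same $\lambda(b)(\Gamma,x)(\sigma,a):=b(\Delta,(x[\sigma],a))$, and the same $\app(f,a)(\Gamma,x):=f(\Gamma,x)(\id_\Gamma,a(\Gamma,x))$. The only difference is that you carry out the coherence checks (notably the $\eta$-rule via naturality of $f$) which the paper dismisses as straightforward, so your proposal is correct and follows essentially the same route.
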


\begin{proof}
Let $A:\widehat{\Ty}(P)$ and $B:\widehat{\Ty}(P.A)$. First, we need to define $\Pi(A,B):\widehat{\Ty}(P)$. Recall that $\Pi(A,B)$ should be a presheaf over $\int P$. For each $\Gamma:\C$ and $x:P_{\Gamma}$, the type $\Pi(A,B)(\Gamma,x)$ consists of the elements $f$ in the categorical product
\begin{equation*}
f: \mathbf{\prod}_{\Delta:\C,\, \sigma:\Delta\rightarrow \Gamma,\, a: A (\Delta, x[\sigma])} B(\Delta, (x[\sigma],a))
\end{equation*} 
such that if $\Theta : \C$ and $\tau : \Theta \rightarrow \Delta$, then 
\begin{equation}\label{comp-cond}
f(\Delta,\sigma,a)[\tau]=f(\Theta,\sigma \circ \tau,a[\tau]). 
\end{equation}
If $\tau:(\Upsilon,y)\rightarrow(\Gamma,x)$ is a morphism in $\int P$, namely, $\tau : \Upsilon \rightarrow \Gamma$ is a morphism in $\C$ such that $x[\tau]=y$, for each $f : \Pi(A,B)(\Gamma,x)$, $\Delta : \C$, $\sigma : \Delta \rightarrow \Upsilon$, and $a : A (\Delta, y[\sigma])$, we define $f[\tau](\Delta,\sigma,a):=f(\Delta, \tau \circ \sigma, a)$. Using the compatibility condition \ref{comp-cond}, we indeed obtain a presheaf $\Pi(A,B)$ on $\int P$.

Second, for each $b:\widehat{\Tm}(P.A,B)$, we need to define a term $\lambda(b):\widehat{\Tm}(P,\Pi(A,B))$. Recall that $\lambda(b)$ should be an element in \[\mathbf{\prod}_{\Gamma : \C, x : P_{\Gamma}} \Pi(A,B)(\Gamma,x).\] Now, for $\Gamma:\C$, $x : P_{\Gamma}$, $\Delta:\C$, $\sigma:\Delta \rightarrow \Gamma$, and $a : A(\Delta, x[\sigma])$ we define \[\lambda(b)(\Gamma,x) (\Delta,\sigma,a):=b(\Delta,x[\sigma],a).\] This definition makes sense because the term $b$ is in $\mathbf{\prod}_{\Gamma : \C, z : P.A_{\Gamma}} B(\Gamma,z)$.

Third, for each $f:\widehat{\Tm}(P,\Pi(A,B))$ and $a:\widehat{\Tm}(P,A)$, we need to define a term $\app(f,a):\widehat{\Tm}(P,B[a])$. Recall that it should be in \[\mathbf{\prod}_{\Gamma : \C, x : P_{\Gamma}} B[a](\Gamma,x).\] Now, for $\Gamma:\C$, $x : P_{\Gamma}$, we define \[\app(f,a)(\Gamma,x):= f(\Gamma,x)(\Gamma,\id,a).\]
It is easy but straightforward to prove coherence rules \cite{hofmann}.
\end{proof}

\subsubsection*{Dependent pair types}

We say that a CwF $(\C, \Ty, \Tm)$ \textit{supports $\Sigma$-types} \cite{hofmann} if 
\begin{itemize}
\item for any two types $A:\Ty(\Gamma)$ and $B:\Ty(\Gamma.A)$, there is a type $\Sigma(A,B):\Ty(\Gamma)$,
\item for each $a:\Tm(\Gamma,A)$ and $b:\Tm(\Gamma,B[a])$, there is a term $\langle a,b \rangle :\Tm(\Gamma,\Sigma(A,B))$, and
\item for each $z: \Tm(\Gamma,\Sigma(A,B))$, there are terms $\pi_1(z):\Tm(\Gamma,A)$ and $\pi_2(z):\Tm(\Gamma,B[\pi_1(z)])$
\end{itemize}
such that the following equations (with appropriate quantifiers) hold:
\begin{eqnarray*}
&&\pi_1 (\langle a,b \rangle)=a \\
&&\pi_2 (\langle a,b \rangle)=b \\
&&\langle \pi_1(z),\pi_2(z) \rangle = z \\
&&\Sigma(A,B)[\tau]=\Sigma(A[\tau],B[\tau^+])\\
&&\langle a,b \rangle [\tau] = \langle a[\tau],b[\tau] \rangle \\
&&\pi_1(z)[\tau]=\pi_1(z[\tau])\\
&&\pi_2(z)[\tau]=\pi_2(z[\tau]).
\end{eqnarray*}

We will show that a presheaf CwF supports $\Sigma$-types. 

\begin{prop}
For any (small) category $\C$, the presheaf CwF $(\widehat{\C},\widehat{\Ty},\widehat{\Tm})$ supports $\Sigma$-types.
\end{prop}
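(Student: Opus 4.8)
# Proof Proposal

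The plan is to mirror the structure of the preceding proof for $\Pi$-types, constructing each piece of the $\Sigma$-structure explicitly at the level of presheaves over the category of elements and then verifying that all the required equations hold definitionally (or very nearly so). For types $A:\widehat{\Ty}(P)$ and $B:\widehat{\Ty}(P.A)$, I would first define $\Sigma(A,B):\widehat{\Ty}(P)$, which must be a presheaf on $\int P$. For $\Gamma:\C$ and $x:P_\Gamma$, set
\[
\Sigma(A,B)(\Gamma,x):=\coprod_{a:A(\Gamma,x)} B(\Gamma,(x,a)),
\]
i.e.\ pairs $(a,b)$ with $a:A(\Gamma,x)$ and $b:B(\Gamma,(x,a))$; note $(x,a):P.A(\Gamma)$ so this typechecks. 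Functoriality: for $\tau:(\Upsilon,y)\to(\Gamma,x)$ in $\int P$ (so $x[\tau]=y$), set $(a,b)[\tau]:=(a[\tau],b[\tau])$, where $b[\tau]$ is taken along the induced morphism $(\Upsilon,(y,a[\tau]))\to(\Gamma,(x,a))$ in $\int(P.A)$. One checks this respects identities and composition using functoriality of $A$ and $B$, which is routine.

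Next I would define the pairing term: for $a:\widehat{\Tm}(P,A)$ and $b:\widehat{\Tm}(P,B[a])$, the term $\langle a,b\rangle:\widehat{\Tm}(P,\Sigma(A,B))$ should lie in $\prod_{\Gamma:\C,\,x:P_\Gamma}\Sigma(A,B)(\Gamma,x)$, so set $\langle a,b\rangle(\Gamma,x):=(a(\Gamma,x),b(\Gamma,x))$; here $b(\Gamma,x):B[a](\Gamma,x)=B(\Gamma,(x,a(\Gamma,x)))$, so the second component has the right type. Naturality of $\langle a,b\rangle$ follows from naturality of $a$ and $b$. For the projections, given $z:\widehat{\Tm}(P,\Sigma(A,B))$, write $z(\Gamma,x)=(z_1(\Gamma,x),z_2(\Gamma,x))$ and define $\pi_1(z)(\Gamma,x):=z_1(\Gamma,x)$ and $\pi_2(z)(\Gamma,x):=z_2(\Gamma,x)$; the latter lands in $B(\Gamma,(x,z_1(\Gamma,x)))=B[\pi_1(z)](\Gamma,x)$ as required, and naturality is again inherited.

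Finally I would verify the seven equations. The two $\beta$-rules $\pi_1\langle a,b\rangle=a$ and $\pi_2\langle a,b\rangle=b$ and the $\eta$-rule $\langle\pi_1(z),\pi_2(z)\rangle=z$ hold componentwise by construction — they amount to the fact that projection after pairing is the identity on the coproduct $\coprod$, which is definitional. The four substitution/stability equations ($\Sigma(A,B)[\tau]=\Sigma(A[\tau],B[\tau^+])$, $\langle a,b\rangle[\tau]=\langle a[\tau],b[\tau]\rangle$, $\pi_i(z)[\tau]=\pi_i(z[\tau])$) reduce after unfolding definitions to the equalities $A[\tau](\Gamma,x)=A(\Gamma,\tau_\Gamma(x))$, $B[\tau^+](\Gamma,(x,a))=B(\Gamma,(\tau_\Gamma(x),a))$ and the componentwise definitions above; I would check $\Sigma(A,B)[\tau]=\Sigma(A[\tau],B[\tau^+])$ carefully since it requires identifying the category-of-elements reindexing along $\tau^+$ with the obvious one, but this is exactly the compatibility already used in the $\Pi$-case. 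I expect no genuine obstacle here — unlike $\Pi$, there is no compatibility condition to impose on elements, so the $\Sigma$ construction is in fact simpler; the only mild bookkeeping is tracking the morphism $\tau^+:\Delta.A[\tau]\to\Gamma.A$ correctly when reindexing the $B$-component, and I would conclude as in \cite{hofmann} that the coherence equations are straightforward.
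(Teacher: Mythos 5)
Your construction is exactly the paper's: $\Sigma(A,B)(\Gamma,x)$ is the set of pairs $(a,b)$ with $a:A(\Gamma,x)$, $b:B(\Gamma,(x,a))$, reindexed componentwise, with pairing and projections defined pointwise and the coherence equations checked by unfolding. The paper merely leaves the pairing, projections, and coherence rules as ``obvious,'' whereas you spell them out, so your proposal is correct and follows essentially the same route.
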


\begin{proof}
Let $A:\widehat{\Ty}(P)$ and $B:\widehat{\Ty}(P.A)$. First, we need to define $\Sigma(A,B):\widehat{\Ty}(P)$. Recall that $\Sigma(A,B)$ should be a presheaf over $\int P$. For each $\Gamma:\C$ and $x:P_{\Gamma}$, we define \[\Sigma(A,B)(\Gamma,x):=\lbrace (a,b) |\, a : A(\Gamma,x),\, b: B(\Gamma,(x,a)) \rbrace. \] For a morphism $\sigma : (\Delta,y)\rightarrow(\Gamma,x)$ in $\int P$ and $(a,b):\Sigma(A,B)(\Gamma,x)$, we define \[(a,b)[\sigma]:=(a[\sigma],b[\sigma]).\] One can define the operations $\langle \_,\_ \rangle$, $\pi_1$, and $\pi_2$ in an obvious way, and it is easy to prove the coherence rules.
\end{proof}

For presheaf CwFs, there is a relation between $\Pi(A,B)$ and $\Sigma(A,B)$.

\begin{prop}
In the presheaf CwF $(\widehat{\C},\widehat{\Ty},\widehat{\Tm})$, if $A:\widehat{\Ty}(P)$ and $B:\widehat{\Ty}(P.A)$, then the type $\Pi(A,B)$ is a pullback for the diagram
\begin{equation*}
\begin{tikzcd}
                    &  & {{\Sigma (A,B)}^{A}} \arrow[dd, "\phi"] \\
                    &  &                                                  \\
1 \arrow[rr, "\id"'] &  & {A}^{A}                                 
\end{tikzcd}
\end{equation*}
where $1$ is the constant presheaf over $\int P$ and $\phi$ is given by the first projection.
\end{prop}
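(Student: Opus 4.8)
The plan is to compute the pullback objectwise in $\widehat{\int P}$ and exhibit a natural isomorphism with $\Pi(A,B)$. Recall that for types $X,Y:\widehat{\Ty}(P)$ the exponential is $Y^X := \Pi(X,Y[p_X])$, so here $A^A = \Pi(A,A[p_A])$ and $\Sigma(A,B)^A = \Pi(A,\Sigma(A,B)[p_A])$, and their explicit descriptions are exactly those produced by the propositions on $\Pi$- and $\Sigma$-types in presheaf CwFs proved above. Since limits in $\widehat{\int P}$ are computed pointwise and $1$ is terminal, the value of $\Sigma(A,B)^A \times_{A^A} 1$ at an object $(\Gamma,x)$ of $\int P$ is simply the fibre of $\phi_{(\Gamma,x)}$ over the element $\id_{(\Gamma,x)} \in A^A(\Gamma,x)$.

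First I would unwind this fibre. By the description of $\Pi$, an element $g\in\Sigma(A,B)^A(\Gamma,x)$ is a family $g(\Delta,\sigma,a)$ indexed by morphisms $\sigma:\Delta\to\Gamma$ in $\C$ and elements $a\in A(\Delta,x[\sigma])$, subject to the compatibility condition (\ref{comp-cond}), with $g(\Delta,\sigma,a)\in \Sigma(A,B)[p_A](\Delta,(x[\sigma],a))$. The key bookkeeping point is that reindexing along $p_A$ simply forgets the $A$-component, i.e.\ $\Sigma(A,B)[p_A](\Delta,(x[\sigma],a)) = \Sigma(A,B)(\Delta,x[\sigma])$, so $g(\Delta,\sigma,a)$ is a pair $(a',b')$ with $a'\in A(\Delta,x[\sigma])$ and $b'\in B(\Delta,(x[\sigma],a'))$. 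The map $\phi$ is postcomposition with the first projection $\pi_1:\Sigma(A,B)\to A$, so $\phi(g)$ is the family $(\Delta,\sigma,a)\mapsto \pi_1\,g(\Delta,\sigma,a)$, whereas $\id_{(\Gamma,x)}$ is the family $(\Delta,\sigma,a)\mapsto a$. Hence $\phi(g)=\id_{(\Gamma,x)}$ holds exactly when $\pi_1\,g(\Delta,\sigma,a)=a$ for all $\Delta,\sigma,a$, that is, when $g(\Delta,\sigma,a)=(a,f(\Delta,\sigma,a))$ for a uniquely determined element $f(\Delta,\sigma,a)\in B(\Delta,(x[\sigma],a))$.

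Then I would observe that $g\mapsto f$ is a bijection between this fibre and $\Pi(A,B)(\Gamma,x)$: the compatibility condition (\ref{comp-cond}) for $g$ splits into a vacuous condition on the first component and precisely the compatibility condition defining $\Pi(A,B)(\Gamma,x)$ on the $f$-component, and conversely any such $f$ gives back a valid $g$. Naturality in $(\Gamma,x)$ is immediate from the componentwise restriction formulas for $\Pi$ and $\Sigma$. Under this isomorphism the pullback projection $\Sigma(A,B)^A\times_{A^A}1 \to \Sigma(A,B)^A$ becomes the graph map $f\mapsto\big((\Delta,\sigma,a)\mapsto(a,f(\Delta,\sigma,a))\big)$, and the projection to $1$ is forced, so $\Pi(A,B)$ equipped with these maps is the required pullback. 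I do not anticipate a genuine obstacle: the argument is essentially mechanical, and the only step demanding care is the translation of the reindexing $\Sigma(A,B)[p_A]$ and the matching of the two compatibility conditions, while keeping in mind throughout that the exponentials here are particular instances of the $\Pi$-types already described.
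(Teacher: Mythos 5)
Your proposal is correct and takes essentially the same route as the paper: the heart of both arguments is the graph correspondence $f \mapsto \big((\Delta,\sigma,a)\mapsto(a,f(\Delta,\sigma,a))\big)$ identifying $\Pi(A,B)$ with the part of $\Sigma(A,B)^A$ lying over the identity element of $A^A$. The paper phrases this as a direct check of the universal property (with mediating map $d\mapsto \pi_2\circ u_{\Gamma,x}(d)$), whereas you compute the pullback pointwise as a fibre; this is only a difference of packaging.
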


\begin{proof}
Note that all objects in the diagram are preheaves over $\int P$.

Define a natural transformation $\psi: \Pi(A,B) \rightarrow {\Sigma (A,B)}^{A}$ as follows: for each $\Gamma : \C$ and $x : P_{\Gamma}$, we have $\psi_{\Gamma,x}(f):=\tilde{f}$ where $\tilde{f}(a):=(a,f(a))$. It is easy to see that $\phi \circ \psi = \id$. Consider another commutative diagram:
\begin{equation}\label{another-diagram}
\begin{tikzcd}
D \arrow[rr, "u"] \arrow[dd] &  & {{\Sigma (A,B)}^{A}} \arrow[dd, "\phi"] \\
                             &  &                                                  \\
1 \arrow[rr, "\id"']          &  & {A}^{A}                                 
\end{tikzcd}.
\end{equation}
Define a natural transformation $\tau:D\rightarrow \Pi(A,B)$ as follows: for each $\Gamma : \C$ and $x : P_{\Gamma}$, we have $\tau_{\Gamma,x}(d)(a):=\pi_2(u_{\Gamma,x}(d)(a))$. It is easy to see that $\psi \circ \tau = u$, and $\tau$ is unique. This finishes our claim.
\end{proof}

\subsubsection*{Extensional identity type}

We say that a CwF $(\C, \Ty, \Tm)$ \textit{supports extensional identity types} \cite{hofmann} if
\begin{itemize}
\item for any type $A : \Ty(\Gamma)$, there is a type $\Eq_A : \Ty(\Gamma.A.A[p_A])$,
\item a morphism $\refl^e_A : \Gamma.A \rightarrow \Gamma.A.A[p_A].\Eq_A$ such that $p_{\Eq_A}\circ \refl^e_A $ equals the diagonal morphism $\Gamma.A \rightarrow \Gamma.A.A[p_A]$, and
\item for each $B : \Ty(\Gamma.A.A[p_A].\Eq_A)$, a function \[J^e:\Tm(\Gamma.A, B[\refl^e_A]) \rightarrow \Tm(\Gamma.A.A[p_A].\Eq_A, B)\] 
\end{itemize}
such that these data are stable under substitution with respect to context morphisms and such that
\begin{itemize}
\item if $h:\Tm(\Gamma.A, B[\refl^e_A])$, then $J^e(h)[\refl^e_A]=h$, and
\item if $h:\Tm(\Gamma.A.A[p_A].\Eq_A, B)$, then $J^e(h[\refl^e_A])=h$.
\end{itemize} 

The last equality can be thought of as an $\eta$ rule. Moreover, this rule holds if and only if $\refl^e_A$ is an isomorphism with the inverse $p_{A[p_A]} \circ p_{\Eq_A}$.

We will show that a presheaf CwF supports extensional identity types.

\begin{prop}
For any (small) category $\C$, the presheaf CwF $(\widehat{\C},\widehat{\Ty},\widehat{\Tm})$ supports extensional identity types.
\end{prop}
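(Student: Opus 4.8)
The plan is to model the extensional identity type by the ``diagonal'' subsingleton presheaf, just as one does for extensional identity types in any presheaf topos. An object of the category of elements $\int(P.A.A[p_A])$ is a triple $(\Gamma,((x,a),a'))$ with $\Gamma:\C$, $x:P_{\Gamma}$, and $a,a':A(\Gamma,x)$ (using $A[p_A](\Gamma,(x,a))=A(\Gamma,x)$, and here $A(\Gamma,x)$ is an honest set since $A$ is a functor into $\Set$). Since equality of elements of a set is a meaningful proposition in the metatheory, I would define
\[
\Eq_A\big(\Gamma,((x,a),a')\big):=\{\star\mid a=a'\},
\]
the one-point set when $a=a'$ and the empty set otherwise. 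For a morphism $\sigma:(\Delta,((y,b),b'))\to(\Gamma,((x,a),a'))$ in $\int(P.A.A[p_A])$ — equivalently $\sigma:\Delta\to\Gamma$ in $\C$ with $x[\sigma]=y$, $a[\sigma]=b$, $a'[\sigma]=b'$ — if $a=a'$ then $b=b'$, so $\star\mapsto\star$ is forced, and otherwise the source is empty; functoriality in $\sigma$ is then immediate. Thus $\Eq_A:\widehat{\Ty}(P.A.A[p_A])$ is well defined, and the very same formula is visibly stable under substitution along any $\phi:Q\to P$.

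Next I would define $\refl^e_A:P.A\to P.A.A[p_A].\Eq_A$ on components by $(x,a)\mapsto(((x,a),a),\star)$; this is natural because $\Eq_A(\Gamma,((x,a),a))$ is always inhabited by its unique element, and composing with the display map $p_{\Eq_A}$ yields $(x,a)\mapsto((x,a),a)$, which is exactly the diagonal morphism $P.A\to P.A.A[p_A]$ demanded by the definition. The crucial observation — the point at which extensionality is genuinely used — is that $\refl^e_A$ is an isomorphism, with inverse $p_{A[p_A]}\circ p_{\Eq_A}:P.A.A[p_A].\Eq_A\to P.A$: an element $(((x,a),a'),e)$ of the total context exists only when $a=a'$, and then $e$ is the unique $\star$, so the element is completely determined by $(x,a)$, and the two round-trips are identities on the nose.

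Finally, given $B:\widehat{\Ty}(P.A.A[p_A].\Eq_A)$ I would set $J^e(h):=h[(\refl^e_A)^{-1}]$; this lies in $\widehat{\Tm}(P.A.A[p_A].\Eq_A,B)$ because $B[\refl^e_A][(\refl^e_A)^{-1}]=B[\refl^e_A\circ(\refl^e_A)^{-1}]=B[\id]=B$ by the CwF substitution laws. The first computation rule is $J^e(h)[\refl^e_A]=h[(\refl^e_A)^{-1}][\refl^e_A]=h[(\refl^e_A)^{-1}\circ\refl^e_A]=h[\id]=h$, and the $\eta$ rule is $J^e(h[\refl^e_A])=h[\refl^e_A][(\refl^e_A)^{-1}]=h[\refl^e_A\circ(\refl^e_A)^{-1}]=h$; stability of $J^e$ under context morphisms reduces to the naturality already recorded for $\Eq_A$ and $\refl^e_A$ together with functoriality of the substitution operations. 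I expect the only real work to be the bookkeeping — writing out the presheaf structure on $\int(P.A.A[p_A])$ and checking that $\Eq_A$, $\refl^e_A$, and $J^e$ commute with every context morphism — while the entire mathematical content sits in the one-line verification that $\refl^e_A$ is invertible.
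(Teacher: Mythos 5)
Your proposal is correct and takes essentially the same approach as the paper: the same diagonal subsingleton presheaf $\Eq_A$ (one point if $a=b$, empty otherwise) and the same $\refl^e_A:(x,a)\mapsto(x,a,a,\star)$. Your $J^e(h):=h[(\refl^e_A)^{-1}]$ is literally the paper's componentwise definition $J^e(\alpha)(\Gamma,(x,a,b,p)):=\alpha(\Gamma,(x,a))$, just repackaged through the invertibility of $\refl^e_A$, which the paper itself records as equivalent to the $\eta$ rule.
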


\begin{proof}
Let $A:\widehat{\Ty}(P)$. Recall that $\Eq_A$ should be a presheaf over $\int \Gamma.A.A[p_A]$. For each $\Gamma:\C$, $x:P_{\Gamma}$, and $a,b: A(\Gamma,x)$, we define 
\[\Eq_A(\Gamma,x,a,b):= 
    \begin{cases}
        \{\star\} & \text{if } a = b\\
        \emptyset & \text{if } a \neq b
    \end{cases}.\] 
The morphism $\refl^e_A : P.A \rightarrow P.A.A[p_A].\Eq_A$ is defined for each $\Gamma : \C$ as $(x,a) \mapsto (x,a,a,\star)$. Since $p_{\Eq}: P.A.A[p_A].\Eq_A \rightarrow P.A.A[p_A]$ is the first projection, we indeed have $p_{\Eq}\circ \refl^e_A$ is the diagonal map.

If $B : \widehat{\Ty}(P.A.A[p_A].\Eq_A)$, the function $J^e:\widehat{\Tm}(P.A, B[\refl^e_A]) \rightarrow \widehat{\Tm}(P.A.A[p_A].\Eq_A, B)$ is defined as follows: If $\alpha: \widehat{\Tm}(P.A, B[\refl^e_A])$, this means we have
\[\alpha : \prod_{\substack{\Gamma : \C \\ (x,a): P.A_{\Gamma}}} B(\Gamma,(x,a,a,\star))\quad \text{ and } \quad J^e(\alpha) : \prod_{\substack{\Gamma : \C\\ (x,a,b,p): P.A.A[p_A].\Eq_{\Gamma}}} B(\Gamma,(x,a,b,p)).\] Thus, we define $J^e(\alpha)(\Gamma,(x,a,b,p)):=\alpha(\Gamma,(x,a))$ which makes sense because $p : \Eq(a,b)$ means $a=b$ and $p=\star$. Now, it is easy to prove the coherence rules.
\end{proof}

\subsubsection*{Intensional identity type}

We say that a CwF $(\C, \Ty, \Tm)$ \textit{supports intensional identity types} \cite{hofmann} if
\begin{itemize}
\item for any type $A : \Ty(\Gamma)$, there is a type $\Id_A : \Ty(\Gamma.A.A[p_A])$,
\item a morphism $\refl_A : \Gamma.A \rightarrow \Gamma.A.A[p_A].\Id_A$ such that $p_{\Id_A}\circ \refl_A $ equals the diagonal morphism $\Gamma.A \rightarrow \Gamma.A.A[p_A]$, and 
\item for each $B : \Ty(\Gamma.A.A[p_A].\Id_A)$, a function \[J:\Tm(\Gamma.A, B[\refl_A]) \rightarrow \Tm(\Gamma.A.A[p_A].\Id_A, B)\] 
\end{itemize}
such that these data are stable under substitution with respect to context morphisms, and such that if $h:\Tm(\Gamma.A, B[\refl_A])$, then $J(h)[\refl_A]=h$.

The last equality can be thought of as a $\beta$ rule.

Since $\Id$ is a particular case of $\Eq$, we can say that every presheaf CwF supports intensional identity types. If we also assume \emph{Univalence} for intensional identities, it is not true in general that every presheaf CwF supports such an identity. For example, the presheaf CwF on $\Set$ supports the identity type and the uniqueness of identity proof (UIP), but UIP contradicts with univalence \cite{hott}. Nonetheless, it has been established that the simplicial set CwF, denoted as $\SSet$, does provide support for univalent identity types \cite{kap-lum}. Indeed, this category stands as one of the widely recognized models not only for Martin-Löf Type Theory but also for Homotopy Type Theory.

In the case of intensional identity types, we can also establish a definition for what constitutes a ``contractible" type. This notion serves as a crucial component in the overall definition of cofibrancy.

\begin{defn}
Let $A:\Ty(\Gamma)$ be a type. We call $A$ a \emph{contractible type} if there is a term in the following type over $\Gamma$: \[\iscontr(A):=\Sigma(A,\Pi(A[p_A],\Id_A)).\]
\end{defn}

Having such a term means that there is a term $c : \Tm(\Gamma,A)$ called \emph{center of contraction}, such that for any term $a: \Tm(\Gamma.A,A[p_A])$ there is a term $p: \Tm(\Gamma,\Id_A[c^+,a^+])$. In other words, we have a section map $c:\Gamma \rightarrow \Gamma.A$ to $p_A$ such that the following diagram, where $h$ is the contracting homotopy, commutes:
\begin{center}
\begin{tikzcd}
\Gamma.A \arrow[rd, "c^+"'] \arrow[rr, "h"] &                   & {\Gamma.A.A[p_A].\Id_A} \arrow[ld, "p_{\Id_A}"] \\
                                                  & {\Gamma.A.A[p_A]} &                                                
\end{tikzcd}.
\end{center}

\subsubsection*{Natural number type}

We say that a CwF $(\C, \Ty, \Tm)$ \textit{supports a natural number type} if
\begin{itemize}
\item there is a type $\NN : \Ty(1_{\C})$ where $1_{\C}$ is the terminal object of $\C$,
\item there is a term $\zero:\Tm(1_{\C}, \NN)$ which can be thought as a context morphism $1_{\C} \rightarrow 1_{\C}.\NN$,
\item there is a morphism $\suc:\Tm(1_{\C},\NN) \rightarrow \Tm(1_{\C},\NN)$ which can be thought as a context morphism $1_{\C}.\NN \rightarrow 1_{\C}.\NN$, and
\item for each $\Gamma : \C$, the unique morphism $\sigma : \Gamma \rightarrow 1_{\C}$, and $B : \Ty(\Gamma.\NN[\sigma])$ with two context morphisms $b_0:\Gamma \rightarrow \Gamma.\NN[\sigma].B$ and $b_s:\Gamma.\NN \rightarrow \Gamma.\NN[\sigma].B \rightarrow \Gamma.\NN[\sigma].B$, there is a morphism $J^{\NN}_B : \Gamma.\NN[\sigma] \rightarrow \Gamma.\NN[\sigma].B$
\end{itemize}
such that $J^{\NN}_B \circ \zero[\sigma]=b_0$ and $J^{\NN}_B \circ \suc[\sigma] (\alpha) = b_s (\alpha, \,J^{\NN}_B(\alpha))$, and these data are stable under substitution with respect to context morphisms.

The last morphism can be thought of as a usual induction rule on $\NN$.

\begin{prop}
For any (small) category $\C$, the presheaf CwF $(\widehat{\C},\widehat{\Ty},\widehat{\Tm})$ supports a natural number type.
\end{prop}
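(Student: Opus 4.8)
The plan is to realise $\NN$ as a \emph{constant} presheaf. Since $\int 1_{\widehat{\C}}$ is isomorphic to $\C$, a type over $1_{\widehat{\C}}$ is just a presheaf on $\C$, and I take $\NN : \widehat{\Ty}(1_{\widehat{\C}})$ to be the constant presheaf with value the set of natural numbers, i.e. $\NN(\Gamma,\star):=\NN$ with every restriction map the identity. The term $\zero:\widehat{\Tm}(1_{\widehat{\C}},\NN)$ is the constant family $\zero(\Gamma,\star):=0$, which satisfies the compatibility condition defining $\widehat{\Tm}$ trivially because the restriction maps of $\NN$ are identities; viewed as a context morphism it is $\star\mapsto(\star,0)$. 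Similarly $\suc$, as a context morphism $1_{\widehat{\C}}.\NN\rightarrow 1_{\widehat{\C}}.\NN$, is $(\star,n)\mapsto(\star,n+1)$, and on terms it sends $a$ to $\Gamma\mapsto a(\Gamma,\star)+1$.

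For the eliminator, fix a presheaf $P:\widehat{\C}$ and let $\sigma:P\rightarrow 1_{\widehat{\C}}$ be the unique morphism. Then $\NN[\sigma]$ is the constant presheaf $\NN$ on $\int P$, so $(P.\NN[\sigma])(\Gamma)$ is the disjoint union $\coprod_{x:P(\Gamma)}\NN$, and an object of $\int(P.\NN[\sigma])$ is a triple $(\Gamma,x,n)$ with $x:P(\Gamma)$ and $n:\NN$. Given $B:\widehat{\Ty}(P.\NN[\sigma])$, the datum $b_0$ unfolds to a compatible family $b_0(\Gamma,x):B(\Gamma,x,0)$ and the datum $b_s$ unfolds to a compatible family sending $y:B(\Gamma,x,n)$ to $b_s(\Gamma,x,n,y):B(\Gamma,x,n+1)$. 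I then define $J^{\NN}_B(\Gamma,x,n):B(\Gamma,x,n)$ by recursion on the natural number $n$, setting $J^{\NN}_B(\Gamma,x,0):=b_0(\Gamma,x)$ and $J^{\NN}_B(\Gamma,x,n+1):=b_s(\Gamma,x,n,J^{\NN}_B(\Gamma,x,n))$. With these definitions the two computation equations $J^{\NN}_B\circ\zero[\sigma]=b_0$ and $J^{\NN}_B\circ\suc[\sigma](\alpha)=b_s(\alpha,J^{\NN}_B(\alpha))$ hold on the nose.

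The one thing that genuinely requires an argument --- and which I expect to be the main obstacle, though only a bookkeeping one --- is that $J^{\NN}_B$ is a bona fide natural transformation and that the whole package is stable under substitution. Concretely, for a morphism $\tau:(\Delta,y)\rightarrow(\Gamma,x)$ in $\int P$ (so $x[\tau]=y$) one must verify $J^{\NN}_B(\Gamma,x,n)[\tau]=J^{\NN}_B(\Delta,y,n)$ for every $n$, and similarly $J^{\NN}_B$ must commute with reindexing along any context morphism $P'\rightarrow P$. Both are shown by induction on $n$: the base case is precisely the naturality (resp.\ substitution-stability) of $b_0$, and the inductive step combines the naturality (resp.\ substitution-stability) of $b_s$ with the induction hypothesis, using that restriction in the constant presheaf $\NN[\sigma]$ leaves the number $n$ untouched. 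The remaining coherences --- compatibility of $\zero$ and $\suc$ with substitution, and stability of $B$, $b_0$, $b_s$ --- are immediate from the definitions, so this completes the verification that $(\widehat{\C},\widehat{\Ty},\widehat{\Tm})$ supports a natural number type.
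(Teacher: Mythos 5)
Your proposal is correct and follows essentially the same route as the paper: take $\NN$ to be the constant presheaf with value the external natural numbers, define $\zero$ and $\suc$ as the evident context morphisms, and define the eliminator $J^{\NN}_B$ pointwise by external recursion on $n$, with the computation rules holding definitionally. The only difference is that you spell out the naturality and substitution-stability of $J^{\NN}_B$ (by induction on $n$, using naturality of $b_0$ and $b_s$), which the paper dismisses as easy coherence checks.
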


\begin{proof}
Since $\NN$ should be a presheaf over $\int 1_{\widehat{C}}$, for each $\Gamma:\C$ and $x:(1_{\widehat{C}})_{\Gamma}$ we define $\NN(\Gamma,x):= \mathbf{N}$, the external set of natural numbers. The term $\zero : \widehat{\Tm}(1_{\widehat{C}}, \NN)$ is obtained by the morphism $1_{\widehat{C}} \rightarrow 1_{\widehat{C}}.\NN$ which we define $\zero_{\Gamma}(x):=(x,0)$ for any $\Gamma:\C$ and $x:(1_{\widehat{C}})_{\Gamma}$. The morphism $\suc:1_{\widehat{C}}.\NN \rightarrow 1_{\widehat{C}}.\NN$ is defined as $\suc_{\Gamma}(x,k):=(x, k+1)$ for any $\Gamma:\C$ and $(x,k):(1_{\widehat{C}}.\NN)_{\Gamma}$. 

For any $P :\widehat{\C}$ and $\sigma : P \rightarrow 1_{\widehat{\C}}$, if $B : \widehat{\Ty}(P.\NN[\sigma])$ with $b_0$ and $b_s$, the function $J^{\NN}_B:P.\NN[\sigma] \rightarrow P.\NN[\sigma].B$ is defined as, for each $\Gamma:\C$, and $(x,k):(P.\NN[\sigma])_{\Gamma}$ 
\[(J^{\NN}_B)_{\Gamma}(x,k):= 
    \begin{cases}
        {b_0(x)} & \text{if } k = 0\\
        {b_s\big((x,k'),\,(J^{\NN}_B)_{\Gamma}(x,k')\big)} & \text{if } k = k' + 1
    \end{cases}.\] 

Clearly, $J^{\NN}_B \circ \zero[\sigma]=b_0$ holds. For the other equality, we have
\[(J^{\NN}_B \circ \suc[\sigma])_{\Gamma}(x,k)=(J^{\NN}_B)_{\Gamma}(x,k+1)=(b_s)_{\Gamma}\big((x,k),\,(J^{\NN}_B)_{\Gamma}(x,k)\big).\]
Also, it is easy to prove the remaining coherence rules.
\end{proof}

We can talk about the unit type, the empty type, coproducts, and the others and give the conditions for a CwF to support them. In general, if we have a collection $T$ of type formers, we say a \emph{CwF supports $T$} if the CwF supports each type formers in $T$. 

\begin{example}
Let $T:=\{\prod, \sum, \unit, \emptype, \NN, \Eq\}$. Then for any (small) category $\C$, the presheaf CwF $(\widehat{\C},\widehat{\Ty},\widehat{\Tm})$ supports $T$.\footnote{We've not given the proofs for $\unit$ and $\emptype$, but these are trivial facts.}
\end{example}

\subsubsection*{CwFs as a model}

Based on the example of presheaf CwF we discussed earlier, it becomes clear that CwFs can be used to model dependent type theory. A CwF provides a structure that covers contexts, types, terms, and substitutions. Moreover, if a CwF has enough type formers, it allows us to work with dependent products, dependent sums, and other inductive types. A reader who is interested in delving into the details and exploring similar constructions related to CwFs can refer to \cite{paolo-thesis}. Our primary focus here is to establish the necessary background to discuss models where exo-nat is cofibrant.

\begin{defn}[\cite{2ltt}]
A \emph{model of Martin-Löf type theory with type formers $T$} is a CwF that supports $T$.
\end{defn}

We already know that a presheaf CwF is a model of Martin-Löf type theory with usual type formers and extensional identity types. The simplicial set CwF $\SSet$ is a model of Martin-Löf type theory with usual type formers and intensional identity types.

\subsection{Two-level CwFs}

Let us recall that the aim of this study on semantics is to gain insight into the models of 2LTT with a cofibrant exo-nat. Once we have acquired models of various type theories, a natural question arises: can we merge these models to obtain a comprehensive model of 2LTT? The answer to this question is affirmative, as it is indeed possible to combine two CwF structures in the same category in a manner that ensures their compatibility and coherence.

\begin{defn}[\cite{paolo-thesis}]\label{two-level-CwF}
A \emph{two-level CwF} is a CwF $(\C,\Ty,\Tm)$, equipped with another type presheaf $\Ty^f :\C^{\op} \rightarrow \Set$, and a natural transformation $\coer : \Ty^f \rightarrow \Ty$.
\end{defn} 

\begin{remark}
Given a two-level CwF $\C$, we define a second CwF structure on $\C$ using $\Ty^f$ as the type functor and the term functor is obtained as $\Tm^f(\Gamma,A):=\Tm(\Gamma,\coer_{\Gamma}(A))$. The context extension holds from $\Gamma.A:=\Gamma.\coer_{\Gamma}(A)$. In order to emphasize the difference, we use the superscripts $\_^e,\_^f$ and write $\Ty^e,\Tm^e$ for the original CwF structure, write $\Ty^f,\Tm^f$ for the one obtained by the coercion transformation. It is not surprising that this choice is intentional to be consistent with the first section. The original CwF will model the ``exo" level of 2LTT, while the other model is the usual ``HoTT" level of 2LTT.
\end{remark}

\begin{example}
Recall $\widehat{\Ty}$ denotes the presheaf CwF structure. The simplicial set presheaf $\SSet$ originally have already a presheaf CwF structure. Recall that 
\[\Ty(P):=\{A : \widehat{\Ty}(P) \mid p_A : P.A \rightarrow P \text{ is a Kan fibration}\}\] gives another type functor. Taking $\Ty^f=\Ty$ and $\coer : \Ty^f \rightarrow \widehat{\Ty}$ as the inclusion, we obtain $\SSet$ as a two-level CwF.
\end{example}

In a similar vein to how we can build a presheaf CwF from any arbitrary (small) category when the category itself is a CwF, we can proceed to construct a two-level CwF. This particular construction, which we refer to as the \emph{presheaf two-level CwF}, will serve as our primary focus and model of interest.

\begin{defn}\label{presheaf-twolevel}
Let $\C$ be a (small) category with CwF structure $\Ty$, $\Tm$. There is a two-level CwF structure on $\widehat{\C}$ called \emph{presheaf two-level CwF}, denoted by $(\widehat{\C},\widehat{\Ty}, \widehat{\Tm},\Ty^f,\Tm^f)$ defined as follows:
\begin{itemize}
\item $(\widehat{\C},\widehat{\Ty}, \widehat{\Tm})$ is the presheaf CwF defined in Section \ref{presheaf-CwF}, 
\item given $P$ in $\widehat{\C}$, the type functor $\Ty^f$ is given by $\Ty^f(P):=\widehat{\C}(P,\Ty)$, and
\item for $\Gamma$ in $\C$ and $B$ in $P(\Gamma)$, we define $\coer_P(A)(\Gamma,B):=\Tm(\Gamma,A_{\Gamma}(B))$.
\end{itemize}
As before, given $A$ in $\Ty^f(P)$, we define $\Tm^f(P,A):=\widehat{\Tm}(P,\coer_P(A))$. 
\end{defn}

\subsubsection*{Two-level CwFs as a model}

Similar to how a type theory can be interpreted within the framework of a category with families, two-level type theories can be interpreted using a two-level category with families. Below, we provide the precise definition for such an interpretation.

\begin{defn}\label{two-level-model}
A \emph{two-level model} of a type theory with \emph{type formers $T^f$} and \emph{exo-type formers $T^e$} is a two-level CwF on a category $\C$ such that
\begin{itemize}
\item the structure $\Ty^f,\Tm^f$ is a model of type theory with $T^f$,
\item the structure $\Ty^e,\Tm^e$ is a model of type theory with $T^e$.
\end{itemize}
\end{defn}

\begin{remark}
In the subsequent sections, when we say \emph{a two-level model with enough type formers} or \emph{a model of 2LTT}, we mean the two-level model with $T^f$ and $T^e$ where the collections are the types and exo-types we defined in Section \ref{types&exo-types}. 
\end{remark}

Our assumption concerning the coercion morphism $\Ty^f \rightarrow \Ty^e$ is only that it is a natural transformation. However, in the context of a two-level model with enough type formers, we have the semantic counterpart of Theorem \ref{lemma211}. This theorem furnishes excellent inversion rules from types to exo-types, and its proof heavily relies on the preservation of context extension and the elimination rules associated with the type formers \cite{2ltt}. 

With the foundational knowledge established thus far, we are now equipped to delve into the discussion of potential models of 2LTT that satisfy the condition of having a cofibrant exo-nat. However, before proceeding to the subsequent section, where this discussion takes place, let us first provide the semantic definition of ``cofibrancy" for an exo-type.

\begin{defn}\label{cofib-semantics}
Let $(\C, \Ty^e, \Tm^e, \Ty^f, \Tm^f)$ be a model of 2LTT with conversion $\coer:\Ty^f\rightarrow \Ty^e$. We say an exo-type $A:\Ty^e(\Gamma)$ is \emph{cofibrant} if for any $\Delta : \C$ and $\sigma : \Delta \rightarrow \Gamma$

\begin{enumerate}
\item there is a map, natural in $\Delta$,
\[\Theta^{\Ty}_{\Delta}:\Ty^f(\Delta.A[\sigma])\rightarrow \Ty^f(\Delta) \] such that for any $Y:\Ty^f(\Delta.A[\sigma])$ we have the following isomorphism natural in $\Delta$: \[\coer_{\Delta}(\Theta^{\Ty}_{\Delta}(Y))\cong {\prod}^e(A,\coer_{\Delta.A[\sigma]}(Y)), \]
\item and there is a map, natural in $\Delta$,
\[\Theta^{\Tm}_{\Delta}:\Tm^f\left(\Delta.A[\sigma],\iscontr(Y)\right) \rightarrow \Tm^f\left(\Delta,\iscontr(\Theta_{\Delta}^{\Ty}(Y))\right).\] In other words, if $Y$ is contractible, then so is $\Theta^{\Ty}_{\Gamma}(Y)$.
\end{enumerate}
\end{defn}

\begin{remark}
This does not represent a \emph{direct} translation of the internal definition; rather, it can be seen as a \emph{universe-free} adaptation of it. In Definition \ref{cofib-defn}, the quantification is over specific types within a particular universe. Externally, we can express it in terms of \emph{all types}. Consequently, the external version holds slightly more strength.
\end{remark}

The naturality conditions give the following: In Figure \ref{naturality-cofib-type}, all vertical arrows are context substitutions. When the commutative sides are appropriately composed, the top and bottom isomorphisms are equal, which can be expressed as the cube ``commuting". In Figure \ref{naturality-cofib-term}, the vertical arrows are context substitutions, and the square is commutative.

\begin{figure}
\begin{center}
\begin{tikzcd}
                                                                                                  & {\Ty^f(\Delta.A[\sigma])} \arrow[rr, "\Theta^{\Ty}_{\Delta}", near start] \arrow[dd] \arrow[ld, "{\coer_{\Delta.A[\sigma]}}"']                &                          & \Ty^f(\Delta) \arrow[dd] \arrow[ld, "\coer_{\Delta}"'] \\
{\Ty^e(\Delta.A[\sigma])} \arrow[rr, "{\prod^e(A[\sigma],\_)}", near end, crossing over] \arrow[dd]            &                                                                                                                                             & \Ty^e(\Delta)  &                                                        \\
                                                                                                  & {\Ty^f(\Upsilon.A[\sigma\circ \tau])} \arrow[rr, "\Theta^{\Ty}_{\Upsilon}", near start] \arrow[ld, "{\coer_{\Upsilon.A[\sigma\circ \tau]}}"'] &                          & \Ty^f(\Upsilon) \arrow[ld, "\coer_{\Upsilon}"']        \\
{\Ty^e(\Upsilon.A[\sigma\circ \tau])} \arrow[rr, "{\prod^e(A[\sigma\circ \tau],\_)}", near end] &                                                                                                                                             & \Ty^e(\Upsilon)  \arrow[from=uu, crossing over]     &                                                       
\end{tikzcd}
\end{center}
\caption{Naturality condition for $\Theta^{\Ty}$, where $\sigma:\Delta\rightarrow \Gamma$ and $\tau : \Upsilon \rightarrow \Delta$ in $\C$.}\label{naturality-cofib-type}
\end{figure}

\begin{figure}
\begin{center}
\begin{tikzcd}
{\Tm^f\left(\Delta.A[\sigma],\iscontr(Y)\right)} \arrow[rr, "\Theta^{\Tm}_{\Delta}"] \arrow[dd]         &  & {\Tm^f\left(\Delta,\iscontr(\Theta_{\Delta}^{\Ty}(Y))\right)} \arrow[dd] \\
                                                                                                        &  &                                                                          \\
{\Tm^f \left(\Upsilon.A[\sigma\circ\tau],\iscontr(Y[\tau])\right)} \arrow[rr, "\Theta^{\Tm}_{\Upsilon}"'] &  & {\Tm^f\left(\Upsilon,\iscontr(\Theta_{\Upsilon}^{\Ty}(Y[\tau]))\right)}           
\end{tikzcd}
\end{center}
\caption{Naturality condition for $\Theta^{\Tm}$, where $\sigma:\Delta\rightarrow \Gamma$ and $\tau : \Upsilon \rightarrow \Delta$ in $\C$.}\label{naturality-cofib-term}
\end{figure}

\section{Models with cofibrant exo-nat}\label{modelsec}

In the following definition, recall that all products exist in the category of sets.

\begin{defn}\label{exo-nat-prod}
Let $(\C, \Ty, \Tm)$ be a CwF with enough type formers. We say $\C$ has \emph{exo-nat products} if there is a map $\Omega_{\Gamma}:\prod_{\mathbf{N}}\Ty(\Gamma)\rightarrow \Ty(\Gamma)$, where $\mathbf{N}$ is the external natural numbers, such that for any $Y : \prod_{\mathbf{N}}\Ty(\Gamma)$ we have
\begin{itemize}
\item[1)] the set $\Tm(\Gamma, \Omega_{\Gamma}(Y))$ is isomorphic to the categorical product of the sets $\Tm(\Gamma, Y_a)$ for each $a:\mathbf{N}$, namely, we have
\begin{center}
\begin{tikzcd}
{\Tm^f(\Gamma, \Omega_{\Gamma}(Y))} \arrow[rr, "\phi", bend left] & \cong & { \prod_{a:\mathbf{N}} \Tm^f(\Gamma, Y_a)} \arrow[ll, "\psi", bend left]
\end{tikzcd}
\end{center}
\item[2)] if $d,c : \prod_{a:\mathbf{N}} \Tm(\Gamma,Y_a)$ are such that there is a term in the type $\Id(d_a,c_a)$ as being terms of $Y_a:\Ty(\Gamma)$, then there is a term in the type $\Id(\psi(d),\psi(c))$ as being terms of $\Omega_{\Gamma}(Y)$,
\end{itemize}
and all these are natural in $\Gamma$.
\end{defn}

In simpler terms, the first requirement stated in Definition \ref{exo-nat-prod} ensures that $\prod^e_{a:\NN^e} Y(a)$ has a fibrant match, while the second requirement ensures that the $\funext$ for cofibrant exo-types holds.

\begin{example}\label{good-modcat}
Let $\C$ be a good model category \cite{lum-shul}. Define $\Ty(\Gamma)$ as the set of fibrations over $\Gamma$ (with suitable coherence conditions \cite{local-univ}). Define for $A: \Ty(\Gamma)$ the set $\Tm(\Gamma,P)$ as the hom-set $\sfrac{\C}{\Gamma}[\Gamma, \Gamma.A ]$.

Since $\Ty(\Gamma)$ is closed under countable products, we can take $\Omega_{\Gamma}(Y):=\prod_{a:\mathbf{N}}Y_a$, and there is a clear bijection between $\sfrac{\C}{\Gamma}[\Gamma, \Omega_{\Gamma}(Y)]$ and $\prod_{a:\mathbf{N}} (\sfrac{\C}{\Gamma}[\Gamma, Y_a])$, the first requirement in Definition \ref{exo-nat-prod} holds. 

Suppose $d,c : \prod_{a:\mathbf{N}} \Tm(\Gamma,Y_a)$ are such that there is a term in the type $\Id(d_a,c_a)$ as being terms of $Y_a:\Ty(\Gamma)$. In that model, it means $d_a$ and $c_a$, as being maps $\Gamma \rightarrow Y_a$, are right homotopic. That is, there are maps $p_a:\Gamma \rightarrow {Y_a}^I$ such that the following diagram commutes:
\begin{center}
\begin{tikzcd}
                                   & {{Y_a}^I} \arrow[d]  \\
\Gamma \arrow[ru, "p_a"] \arrow[r] & {{Y_a}\times {Y_a}}
\end{tikzcd}.
\end{center}
Since $Y_a$ is fibrant (as being in the slice category) and $\Gamma$ is cofibrant (as being an object of a good model category), by a standard lemma (Corollary 1.2.6 in \cite{hovey}), we have $d_a$ and $c_a$ are also left homotopic. Namely, the following diagram commutes:
\begin{center}
\begin{tikzcd}
\Gamma + \Gamma \arrow[r] \arrow[d] & {Y_a} \\
\Gamma' \arrow[ru]                  &       
\end{tikzcd}
\end{center}
where $\Gamma'$ is a cylinder object for $\Gamma$ fixed for all $a:\mathbf{N}$. This induces a left homotopy between $d$ and $c$, namely, we have:
\begin{center}
\begin{tikzcd}
\Gamma + \Gamma \arrow[r] \arrow[d] & {\prod_{a:\mathbf{N}} Y_a} \\
\Gamma' \arrow[ru]                  &       
\end{tikzcd}.
\end{center}
Now by the same lemma, we have $d$ and $c$ are right homotopic, that is, we have $p:\Gamma \rightarrow {(\prod_{a:\mathbf{N}} Y_a)^I} $ such that the following diagram commutes:
\begin{center}
\begin{tikzcd}
                                   & {(\prod_{a:\mathbf{N}} Y_a)^I} \arrow[d]  \\
\Gamma \arrow[ru, "p"] \arrow[r] & {\prod_{a:\mathbf{N}} Y_a \times \prod_{a:\mathbf{N}} Y_a}
\end{tikzcd}.
\end{center}
This means that there is a term in the type $\Id(d,c)$ as being terms of $\prod_{a:\mathbf{N}}Y_a$. So the second requirement in Definition \ref{exo-nat-prod} holds. We omit the details, but the naturality conditions follow from the coherence conditions on $\Ty$. \qed
\end{example}

In Theorem \ref{cofib-model}, we provide a class of two-level CwFs that satisfy the axiom that $\NN^e$ is a cofibrant exo-type. This is the main result of this section.

\begin{thm}\label{cofib-model}
If $(\C, \Ty, \Tm)$ is CwF (with sufficient type formers) has exo-nat products, the corresponding two-level CwF obtained by Definition \ref{presheaf-twolevel} satisfies the axiom that $\NN^e$ is a cofibrant exo-type.
\end{thm}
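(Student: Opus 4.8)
The plan is to unfold the presheaf two-level CwF of Definition~\ref{presheaf-twolevel} and reduce cofibrancy of $\NN^e$ (Definition~\ref{cofib-semantics}) to the two requirements of Definition~\ref{exo-nat-prod}. Since $\NN^e$ is a type over the terminal presheaf $1_{\widehat{\C}}$, the only context morphism to consider is the unique $\sigma:\Delta\to 1_{\widehat{\C}}$, and from the construction of the natural number type in a presheaf CwF one has $\NN^e[\sigma](\Gamma,x)=\mathbf{N}$ with all restriction maps the identity. Hence $(\Delta.\NN^e[\sigma])(\Gamma)=\coprod_{x\in\Delta(\Gamma)}\mathbf{N}\cong\Delta(\Gamma)\times\mathbf{N}$, i.e. $\Delta.\NN^e[\sigma]$ is the $\mathbf{N}$-fold coproduct $\coprod_{\mathbf{N}}\Delta$ in $\widehat{\C}$, so $\int(\Delta.\NN^e[\sigma])\cong\coprod_{\mathbf{N}}\int\Delta$. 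First I would record the bijections this forces, natural in $\Delta$: $\Ty^f(\Delta.\NN^e[\sigma])\cong\prod_{\mathbf{N}}\Ty^f(\Delta)$, and, writing $Y\leftrightarrow(Y_n)_{n}$ for the corresponding family, $\Tm^f(\Delta.\NN^e[\sigma],Y)\cong\prod_{\mathbf{N}}\Tm^f(\Delta,Y_n)$ and $\Tm^f(\Delta.\NN^e[\sigma],\iscontr(Y))\cong\prod_{\mathbf{N}}\Tm^f(\Delta,\iscontr(Y_n))$; in particular a type over $\Delta.\NN^e[\sigma]$ is a family of $\Ty^f(\Delta)$-types, and it is contractible precisely when each member is.

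Next I would define $\Theta^{\Ty}_{\Delta}$. For $Y\leftrightarrow(Y_n)_n$, set $\Theta^{\Ty}_{\Delta}(Y)_{\Gamma}(x):=\Omega_{\Gamma}\big(((Y_n)_{\Gamma}(x))_{n\in\mathbf{N}}\big)$; naturality of $\Omega$ in $\Gamma$ makes this an element of $\Ty^f(\Delta)=\widehat{\C}(\Delta,\Ty)$, and since $((Y[\tau^{+}])_n)_{\Gamma}(y)=(Y_n)_{\Gamma}(\tau_{\Gamma}(y))$ this respects substitution along any $\tau:\Upsilon\to\Delta$, which is naturality of $\Theta^{\Ty}$ in $\Delta$. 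For the isomorphism required in condition~(1) of Definition~\ref{cofib-semantics} I would compute both presheaves pointwise over $(\Gamma,x)\in\int\Delta$. On the left, $\coer_{\Delta}(\Theta^{\Ty}_{\Delta}(Y))(\Gamma,x)=\Tm(\Gamma,\Omega_{\Gamma}(((Y_n)_{\Gamma}(x))_n))$, which by requirement~(1) of Definition~\ref{exo-nat-prod} is isomorphic via $\phi$ to $\prod_{n}\Tm(\Gamma,(Y_n)_{\Gamma}(x))$. On the right, ${\prod}^e(\NN^e[\sigma],\coer_{\Delta.\NN^e[\sigma]}(Y))(\Gamma,x)$ — the $\Pi$-type of the exo-level presheaf CwF over the constant-on-$\mathbf{N}$ presheaf $\NN^e[\sigma]$ — consists of compatible families $f(\Theta,\tau,n)$; because the $\mathbf{N}$-component carries trivial restrictions, such an $f$ is uniquely determined by $(f(\Gamma,\id_{\Gamma},n))_{n}$, so this set is naturally isomorphic to $\prod_n\coer_{\Delta.\NN^e[\sigma]}(Y)(\Gamma,(x,n))=\prod_n\Tm(\Gamma,(Y_n)_{\Gamma}(x))$. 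Composing, naturally in $(\Gamma,x)$, gives $\coer_{\Delta}(\Theta^{\Ty}_{\Delta}(Y))\cong{\prod}^e(\NN^e[\sigma],\coer_{\Delta.\NN^e[\sigma]}(Y))$; naturality of $\phi$ and of $\Omega$ in $\Gamma$ makes this compatible with substitution in $\Delta$, i.e. the cube of Figure~\ref{naturality-cofib-type} commutes.

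Then I would construct $\Theta^{\Tm}_{\Delta}$. From a term of $\iscontr(Y)$ over $\Delta.\NN^e[\sigma]$ the bijections above extract, for each $n$, a center $d_n:\Tm^f(\Delta,Y_n)$ together with a contracting homotopy $h_n$ for $Y_n$ over $\Delta$. As center of $\Theta^{\Ty}_{\Delta}(Y)$ I would take, pointwise, $\psi\big((d_n(\Gamma,x))_n\big)$, which is a legitimate term by naturality of $\psi$ and of the $d_n$. For the contracting homotopy I would pass to the context $\Delta.\Theta^{\Ty}_{\Delta}(Y)$: applying $\phi$ to its generic term $q$ yields a family $(t_n)_n$ with $t_n:\Tm^f(\Delta.\Theta^{\Ty}_{\Delta}(Y),Y_n[p])$, while the transported center corresponds under $\phi$ to $(d_n[p])_n$; pulling each $h_n$ back along $p$ gives, for every $n$, a path witnessing $\Id(d_n[p],t_n)$, and requirement~(2) of Definition~\ref{exo-nat-prod} (the $\funext$ clause) upgrades this family of paths to a path witnessing $\Id(\psi((d_n[p])_n),\psi((t_n)_n))$, i.e. a path from the transported center $\psi((d_n[p])_n)=(\text{center})[p]$ to $q=\psi(\phi(q))$. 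Together these exhibit a term of $\iscontr(\Theta^{\Ty}_{\Delta}(Y))$ over $\Delta$, defining $\Theta^{\Tm}_{\Delta}$, and naturality in $\Delta$ (the square of Figure~\ref{naturality-cofib-term}) again reduces to naturality of $\phi$, $\psi$ and stability of the eliminators and of $\Omega$ under substitution.

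The genuinely new input is packaged entirely in the two requirements of Definition~\ref{exo-nat-prod}; the rest is transport through the presheaf two-level CwF. So I expect the main obstacle to be bookkeeping rather than mathematics: verifying carefully that $\Pi$ over the constant-on-$\mathbf{N}$ presheaf collapses to an $\mathbf{N}$-indexed product \emph{naturally} over $\int\Delta$, and then checking that every isomorphism produced — the one of condition~(1), the chosen center, and the contracting homotopy — is natural in $\Delta$, so that the cube of Figure~\ref{naturality-cofib-type} and the square of Figure~\ref{naturality-cofib-term} commute. The naturality verifications, which chase the uniqueness part of the context-extension universal property through $\Omega$, $\phi$, and $\psi$, should be the most delicate (though routine) step.
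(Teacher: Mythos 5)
Your proposal is correct and follows essentially the same route as the paper: define $\Theta^{\Ty}$ pointwise via $\Omega_{\Gamma}$, identify $\prod^e$ over the constant-$\mathbf{N}$ presheaf with an $\mathbf{N}$-indexed product of term sets through the Yoneda-style observation that a compatible family is determined by its identity components, and discharge contractibility by requirement (2) of Definition \ref{exo-nat-prod}. If anything, your treatment of the contracting homotopy (working with the generic term over $\Delta.\Theta^{\Ty}_{\Delta}(Y)$ and invoking the funext clause there) is slightly more explicit than the paper's, which handles that step more tersely.
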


\begin{proof}
Recall that $\NN^e : \widehat{\Ty}(1_{\widehat{\C}})$. For any $P :\widehat{\C}$, the context morphism $\sigma : P \rightarrow 1_{\widehat{\C}}$ is unique, so we omit substitutions over such morphisms and write $P.\NN^e$ instead of $P.\NN^e[\sigma]$.

First, we define the map $\Theta^{\Ty}_{P}:\Ty^f(P.\NN^e)\rightarrow \Ty^f(P)$. For any $Y : \Ty^f(P.\NN^e) = \widehat{\C}(P.\NN^e, \Ty)$, we need $\Theta^{\Ty}_{P}(Y): \Ty^f(P)=\widehat{\C}(P, \Ty)$. We denote $\Theta^{\Ty}_{P}(Y)$ by $\tilde{Y}$ for easier reading. Now, for $\Gamma : \C$ and $x : P_{\Gamma}$, we define \[\tilde{Y}_{\Gamma}(x):= \Omega_{\Gamma}\left( (Y_{\Gamma}(x,n))_n\right)\] where the map $\Omega_{\Gamma} : \prod_{\mathbf{N}} \Ty(\Gamma) \rightarrow \Ty(\Gamma)$ is obtained by the assumption of having exo-nat products, Definition \ref{exo-nat-prod}. Since we have $Y_{\Gamma}(x,n) : \Ty(\Gamma)$, the definition makes sense. We want to show \[ \coer_P(\tilde{Y}) \cong {\prod}^e(\NN^e, \coer_{P.\NN^e}(Y)).\] Both are elements in $\widehat{\Ty}(P)$, namely, presheaves over $\int P$. Thus, it is enough to define a natural transformation $G:  \coer_P(\tilde{Y}) \rightarrow \prod^e(\NN^e, \coer_{P.\NN^e}(Y))$ such that for any $\Gamma : \C$ and $x : P_{\Gamma}$, the map \[G_{\Gamma,x} : \coer_P(\tilde{Y})(\Gamma,x) \rightarrow {\prod}^e(\NN^e, \coer_{P.\NN^e}(Y)) (\Gamma,x)\] is an isomorphism of sets, namely, a bijection. If we elaborate on these further, we obtain the following. By definition, $\coer_P(\tilde{Y})(\Gamma,x)=\Tm(\Gamma, \tilde{Y}_{\Gamma}(x))$. Also, ${\prod}^e(\NN^e, \coer_{P.\NN^e}(Y)) (\Gamma,x)$ consists of the elements 
\[f : \prod_{\substack{\Delta : \C \\ \sigma : \Delta \rightarrow \Gamma \\ n : \NN^e (\Delta, x[\sigma])}} \coer_{P.\NN^e}(Y)(\Delta, x[\sigma], n)\Big(=\Tm(\Delta, Y_{\Delta}(x[\sigma],n))\Big)\] such that if $\Upsilon : \C$ and $\tau : \Upsilon \rightarrow \Delta$, then $f(\Delta, \sigma, n)[\tau]=f(\Upsilon, \sigma \circ \tau, n[\tau])$. By the definition of $\NN^e$, we have $\NN^e(\Delta,x[\sigma])=\mathbf{N}$, external natural number set, and having exo-nat products provides us  \[\prod_{n :\mathbf{N}} \Tm(\Delta, Y_{\Delta}(x[\sigma],n)) \cong \Tm(\Delta, \tilde{Y}(x[\sigma])).\] Thus, the range of $G_{\Gamma,x}$ can be written as \[f : \prod_{\substack{\Delta : \C \\ \sigma : \Delta \rightarrow \Gamma}} \Tm(\Delta, \tilde{Y}_{\Delta}(x[\sigma]))\] such that if $\Upsilon : \C$ and $\tau : \Upsilon \rightarrow \Delta$, then $f(\Delta, \sigma)[\tau]=f(\Upsilon, \sigma \circ \tau)$. This elaboration allows us to easily perceive that this function is a bijection because it is a standard application of the Yoneda Lemma. The naturality condition of this operation is also easily satisfied because the substitution is functorial. Thus, we have confirmed the first stage of our claim. It remains to handle the contractibility part.

Basically, we need to show that for the center of contraction $c : \Tm^f(P.\NN^e, Y)$ and the identity terms in $\Id(c,d)$ for any other terms $d : \Tm^f(P.\NN^e, Y)$, we can find (naturally) a center of contraction $\tilde{c}:\Tm^f(P,\tilde{Y})$ and an identity term $\Id(\tilde{c},\tilde{d})$ for any other terms $\tilde{d}:\Tm^f(P,\tilde{Y})$. With a similar elaboration on terms, we have $\Tm^f(P.\NN^e, Y)= \widehat{\Tm}(P.\NN^e, \coer_{P.\NN^e}(Y))$ and $\Tm^f(P,\tilde{Y})=\widehat{\Tm}(P, \coer_P(\tilde{Y}))$. Therefore, we know \[c : \prod_{\substack{\Gamma : \C \\ x : P_{\Gamma} \\ n : \NN^e (\Gamma, x)}} \coer_{P.\NN^e}(Y)(\Gamma, x, n)\Big(=\Tm(\Gamma, Y_{\Gamma}(x,n))\Big)\] is a center of contraction, for any such term $d$, we have a term in $\Id(c,d)$, and we need \[\tilde{c} : \prod_{\substack{\Gamma : \C \\ x : P_{\Gamma}}} \coer_{P}(\tilde{Y})(\Gamma, x)\left(=\Tm(\Gamma, \tilde{Y}_{\Gamma}(x))\right)\] as a center of contraction, and related contracting terms. However, this is exactly the second criterion in Definition \ref{exo-nat-prod}, and we have already assumed it. 

Therefore, $\NN^e$ is a cofibrant exo-type in the presheaf two-level CwF.
\end{proof}

\begin{remark}
Example \ref{good-modcat} also enables us to construct a two-level CwF out of the class in the theorem that satisfies the axiom. Indeed, we can take $\Ty^e(\Gamma)$ as the set of all morphisms over $\Gamma$, and $\Tm^e$ as the same as $\Tm$, and obtain a two-level CwF $(\C, \Ty^e, \Tm^e, \Ty, \Tm)$ with the conversion $\coer:\Ty\rightarrow \Ty^e$ as being inclusion. It is then enough to take the map $\Theta_{\Gamma}(Y)$ in Definition \ref{cofib-semantics} as equal to $\prod_{a:\mathbf{N}}Y_a$. 
\end{remark}

\section{Future directions}

As previously mentioned, this formalisation project aims to move a study about 2LTT \cite{UPpaper} to Agda. In addition to definitions and results here, we also formalised \emph{exo-categories} and \emph{diagram signatures} in that study. More will be added in the future. 

We also plan to generalize the results about cofibrancy and sharpness. Natural numbers, lists, and binary-trees are all inductive types. The general class of such inductive types is called \emph{W-types}. Similar to the cofibrant exo-nat axiom, we have been studying on possible conditions (or axioms) related to W-types to obtain criteria for cofibrant and sharp W-types. Currently, a study on W-types in 2LTT has not been conducted yet; thus, we plan to work on this open problem. 

We have been studying to improve the Agda library. The experimental feature of Agda we used reveals also some bugs; hence, we plan to solve these issues to obtain precise consistency. Furthermore, it is not unreasonable to think that this study will offer new ideas about the concepts specific to 2LTT.

\end{document}